\documentclass[11pt]{article}%
\usepackage{amsmath}
\usepackage{amsfonts}
\usepackage{amssymb}
\usepackage{graphicx}%
\setcounter{MaxMatrixCols}{30}
\newtheorem{theorem}{Theorem}

\newtheorem{corollary}[theorem]{Corollary}

\newtheorem{definition}[theorem]{Definition}
\newtheorem{example}[theorem]{Example}

\newtheorem{lemma}[theorem]{Lemma}

\newtheorem{proposition}[theorem]{Proposition}
\newtheorem{remark}[theorem]{Remark}

\newenvironment{proof}[1][Proof]{\noindent\textbf{#1.} }{\ \rule{0.5em}{0.5em}}
\begin{document}

\title{Vector-valued Jack Polynomials and Wavefunctions on the Torus}
\author{Charles F. Dunkl\\Department of Mathematics, University of Virginia,\\Charlottesville, VA 22904-4137, U.S.}
\date{7 February 2017}
\maketitle

\begin{abstract}
The Hamiltonian of the quantum Calogero-Sutherland model of $N$ identical
particles on the circle with $1/r^{2}$ interactions has eigenfunctions
consisting of Jack polynomials times the base state. By use of the generalized
Jack polynomials taking values in modules of the symmetric group and the
matrix solution of a system of linear differential equations one constructs
novel eigenfunctions of the Hamiltonian. Like the usual wavefunctions each
eigenfunction determines a symmetric probability density on the $N$-torus. The
construction applies to any irreducible representation of the symmetric group.
The methods depend on the theory of generalized Jack polynomials due to
Griffeth, and the Yang-Baxter graph approach of Luque and the author.

\end{abstract}

\section{Introduction}

The quantum Calogero-Sutherland model for $N$ identical particles with
$1/r^{2}$ interactions on the unit circle has the Hamiltonian%
\begin{align*}
\mathcal{H} &  =\mathcal{-}\sum_{i=1}^{N}\left(  \frac{\partial}%
{\partial\theta_{i}}\right)  ^{2}+\frac{1}{2}\sum_{1\leq i<j\leq N}%
\frac{\kappa\left(  \kappa-1\right)  }{\sin^{2}\left(  \frac{1}{2}\left(
\theta_{i}-\theta_{j}\right)  \right)  }\\
&  =\sum_{i=1}^{N}\left(  x_{i}\frac{\partial}{\partial x_{i}}\right)
^{2}-2\kappa\sum_{1\leq i<j\leq N}\frac{x_{i}x_{j}\left(  \kappa-1\right)
}{\left(  x_{i}-x_{j}\right)  ^{2}},
\end{align*}
where $x_{j}=e^{\mathrm{i}\theta_{j}}$ and $-\pi<\theta_{j}\leq\pi$ for $1\leq
j\leq N$. The time-independent Schr\"{o}dinger equation $\mathcal{H}\psi
=E\psi$ has solutions expressible as the product of the base-state%
\[
\psi_{0}\left(  x\right)  =\prod\limits_{1\leq i<j\leq N}\left(
-\frac{\left(  x_{i}-x_{j}\right)  ^{2}}{x_{i}x_{j}}\right)  ^{\kappa/2}%
\]
with a Jack polynomial (Lapointe and Vinet \cite{LV1996}, Awata \cite{A1997}).
The base-state is a solution of the first-order linear differential system%
\[
\frac{\partial}{\partial x_{i}}\psi_{0}\left(  x\right)  =\kappa\psi
_{0}\left(  x\right)  \left\{  \sum_{j\neq i}\frac{1}{x_{i}-x_{j}}-\frac
{N-1}{2x_{i}}\right\}  ,1\leq i\leq N;
\]
and $\mathcal{H}\psi_{0}=\frac{1}{12}\kappa^{2}N\left(  N^{2}-1\right)
\psi_{0}$. The theory of Jack polynomials has been generalized to polynomials
taking values in modules of the symmetric group (Griffeth \cite{G2010}). In
this paper the Hamiltonian $\mathcal{H}$ will be interpreted in that context.
The base state $\psi_{0}$ is replaced by a matrix function satisfying an
analogous differential system and the generalized wavefunctions are
vector-valued. Nevertheless for an interval of parameter values depending on
the module the wavefunctions do give rise to symmetric probability density
functions on the torus. The interval is symmetric about $\kappa=0$ hence this
is qualitatively different from the usual scalar case where $\kappa$ is
unbounded above.

Section \ref{JPops} is a brief overview of representation theory for the
symmetric groups, and the commutative set of operators on polynomials of which
the nonsymmetric Jack polynomials are simultaneous eigenfunctions. Section
\ref{mtxbas} concerns the first-order linear differential system defining the
basic matrix function needed to map the polynomials to eigenfunctions of the
Hamiltonian modified with twisted exchange operators. In Section \ref{Hforms}
there is a description of the Hermitian form related to integration of
vector-valued polynomials on the torus, and the Yang-Baxter graph technique
for constructing the nonsymmetric Jack polynomials. Section \ref{symmvp}
presents the adaptation of the method of Baker and Forrester \cite{B1999} to
form symmetric Jack polynomials from the nonsymmetric polynomials; the
analysis involves tableaux with certain properties. Also this section contains
the formulae for the squared norms of the Jack polynomials. Then Section
\ref{symmwav} uses the vector-valued Jack polynomials and the matrix function
from Section \ref{mtxbas} to construct vector-valued eigenfunctions of the
Hamiltonian $\mathcal{H}$ and the associated probability density. Also the
Jack polynomial of minimal degree is described, and finally there is a brief
description of the matrix function in the case of the two-dimensional
representation of $\mathcal{S}_{4}$.

\section{\label{JPops}The generalized Jack polynomials and associated
operators}

The \textit{symmetric group} $\mathcal{S}_{N}$, the set of permutations of
$\left\{  1,2,\ldots,N\right\}  $, acts on $\mathbb{C}^{N}$ by permutation of
coordinates. For $\alpha\in\mathbb{Z}^{N}$ the norm is $\left\vert
\alpha\right\vert :=\sum_{i=1}^{N}\left\vert \alpha_{i}\right\vert $ and the
monomial is $x^{\alpha}:=\prod_{i=1}^{N}x_{i}^{\alpha_{i}}$. Denote
$\mathbb{N}_{0}:=\left\{  0,1,2,\ldots\right\}  $. The space of polynomials
$\mathcal{P}:=\mathrm{span}_{\mathbb{C}}\left\{  x^{\alpha}:\alpha
\in\mathbb{N}_{0}^{N}\right\}  $. Elements of $\mathrm{span}_{\mathbb{C}%
}\left\{  x^{\alpha}:\alpha\in\mathbb{Z}^{N}\right\}  $ are called
\textit{Laurent} polynomials. The action of $\mathcal{S}_{N}$ is extended to
polynomials by $wp\left(  x\right)  =p\left(  xw\right)  $ where $\left(
xw\right)  _{i}=x_{w\left(  i\right)  }$ (consider $x$ as a row vector and $w$
as a permutation matrix, $\left[  w\right]  _{ij}=\delta_{i,w\left(  j\right)
}$, then $xw=x\left[  w\right]  $). This is a representation of $\mathcal{S}%
_{N}$, that is, $w_{1}\left(  w_{2}p\right)  \left(  x\right)  =\left(
w_{2}p\right)  \left(  xw_{1}\right)  =p\left(  xw_{1}w_{2}\right)  =\left(
w_{1}w_{2}\right)  p\left(  x\right)  $ for all $w_{1},w_{2}\in\mathcal{S}%
_{N}$.

Furthermore $\mathcal{S}_{N}$ is generated by reflections in the mirrors
$\left\{  x:x_{i}=x_{j}\right\}  $ for $1\leq i<j\leq N$. These are
\textit{transpositions, }denoted by $\left(  i,j\right)  $, interchanging
$x_{i}$ and $x_{j}$. Define the $\mathcal{S}_{N}$-action on $\alpha
\in\mathbb{Z}^{N}$ so that $\left(  xw\right)  ^{\alpha}=x^{w\alpha}$%
\[
\left(  xw\right)  ^{\alpha}=\prod_{i=1}^{N}x_{w\left(  i\right)  }%
^{\alpha_{i}}=\prod_{j=1}^{N}x_{j}^{\alpha_{w^{-1}\left(  j\right)  }},
\]
that is $\left(  w\alpha\right)  _{i}=\alpha_{w^{-1}\left(  i\right)  }$
(consider $\alpha$ as a column vector, then $w\alpha=\left[  w\right]  \alpha$).

The \textit{simple reflections} $s_{i}:=\left(  i,i+1\right)  $, $1\leq i<N$,
suffice to generate $\mathcal{S}_{N}$. They are the key devices for applying
inductive methods, and satisfy the \textit{braid} relations:
\begin{align*}
s_{i}s_{j}  &  =s_{j}s_{i},\left\vert i-j\right\vert \geq2;\\
s_{i}s_{i+1}s_{i}  &  =s_{i+1}s_{i}s_{i+1}.
\end{align*}

We consider the situation where the group $\mathcal{S}_{N}$ acts on the range
as well as on the domain of the polynomials. We use vector spaces, called
$\mathcal{S}_{N}$-modules, on which $\mathcal{S}_{N}$ has an irreducible
unitary (orthogonal) representation$:\tau:\mathcal{S}_{N}\rightarrow
O_{m}\left(  \mathbb{R}\right)  $ ($\tau\left(  w\right)  ^{-1}=\tau\left(
w^{-1}\right)  =\tau\left(  w\right)  ^{T}$). See James and Kerber
\cite{J1981} for representation theory and a modern discussion of Young's methods.

Denote the set of partitions $\mathbb{N}_{0}^{N,+}=\left\{  \lambda
\in\mathbb{N}_{0}^{N}:\lambda_{1}\geq\lambda_{2}\geq\cdots\geq\lambda
_{N}\right\}  $. We identify $\tau$ with a partition of $N$ given the same
label, that is $\tau\in\mathbb{N}_{0}^{N,+}$ and $\left\vert \tau\right\vert
=N$. The length of $\tau$ is $\ell\left(  \tau\right)  =\max\left\{
i:\tau_{i}>0\right\}  $. There is a Ferrers diagram of shape $\tau$ (also
given the same label), with boxes at points $\left(  i,j\right)  $ with $1\leq
i\leq\ell\left(  \tau\right)  $ and $1\leq j\leq\tau_{i}$. A \textit{tableau}
of shape $\tau$ is a filling of the boxes with numbers, and a \textit{reverse
standard Young tableau} (RSYT) is a filling with the numbers $\left\{
1,2,\ldots,N\right\}  $ so that the entries decrease in each row and each
column. We require $\dim V_{\tau}\geq2$, thus excluding the one-dimensional
representations corresponding to one-row $\left(  N\right)  $ or one-column
$\left(  1,1,\ldots,1\right)  $ partitions (the trivial and determinant
representations, respectively). The \textit{hook-length} of the node $\left(
i,j\right)  \in\tau$ is%
\begin{align}
h\left(  i,j\right)   &  :=\tau_{i}-j+\#\left\{  k:j\leq\tau_{k},i<k\leq
\ell\left(  \tau\right)  \right\}  +1,\label{hllt}\\
h_{\tau}  &  :=h\left(  1,1\right)  =\tau_{1}+\ell\left(  \tau\right)
-1\nonumber
\end{align}
and $h_{\tau}$ is the maximum hook-length of $\tau$. Denote the set of RSYT's
of shape $\tau$ by $\mathcal{Y}\left(  \tau\right)  $ and let $V_{\tau
}=\mathrm{span}_{\mathbb{C}}\left\{  T:T\in\mathcal{Y}\left(  \tau\right)
\right\}  $ with orthogonal basis $\mathcal{Y}\left(  \tau\right)  $. The
formulae for the action of $s_{i}$ on $\mathcal{Y}\left(  \tau\right)  $ are
described in Proposition \ref{sympol} below. The \textit{hook-length formula}
is $\#\mathcal{Y}\left(  \tau\right)  =N!/\prod\limits_{\left(  i,j\right)
\in\tau}h\left(  i,j\right)  $. Set $n_{\tau}:=\dim V_{\tau}=\#\mathcal{Y}%
\left(  \tau\right)  $. For $1\leq i\leq N$ and $T\in\mathcal{Y}\left(
\tau\right)  $ the entry $i$ is at coordinates $\left(  \mathrm{rw}\left(
i,T\right)  ,\mathrm{cm}\left(  i,T\right)  \right)  $ and the
\textit{content} is $c\left(  i,T\right)  :=\mathrm{cm}\left(  i,T\right)
-\mathrm{rw}\left(  i,T\right)  $. Each $T\in\mathcal{Y}\left(  \tau\right)  $
is uniquely determined by its \textit{content vector} $\left[  c\left(
i,T\right)  \right]  _{i=1}^{N}$. Let $S_{1}\left(  \tau\right)  :=\sum
_{i=1}^{N}c\left(  i,T\right)  $ (this sum depends only on $\tau$) and
$\gamma:=S_{1}\left(  \tau\right)  /N$. The $\mathcal{S}_{N}$-invariant inner
product on $V_{\tau}$ is defined by%
\begin{equation}
\left\langle T,T^{\prime}\right\rangle _{0}:=\delta_{T,T^{\prime}}\times
\prod_{\substack{1\leq i<j\leq N,\\c\left(  i,T\right)  \leq c\left(
j,T\right)  -2}}\left(  1-\frac{1}{\left(  c\left(  i,T\right)  -c\left(
j,T\right)  \right)  ^{2}}\right)  ,~T,T^{\prime}\in\mathcal{Y}\left(
\tau\right)  . \label{Tnorm}%
\end{equation}
It is unique up to multiplication by a constant.

The \textit{Jucys-Murphy} elements $\omega_{i}:=\sum\limits_{j=i+1}^{N}\left(
i,j\right)  $ satisfy $\sum\limits_{j=i+1}^{N}\tau\left(  i,j\right)
T=c\left(  i,T\right)  T$ and thus the central element $\sum\limits_{1\leq
i<j\leq N}\left(  i,j\right)  $ (in the group algebra $\mathbb{R}%
\mathcal{S}_{N}$) satisfies $\sum\limits_{1\leq i<j\leq N}\tau\left(
i,j\right)  T=S_{1}\left(  \tau\right)  T$ for each $T\in\mathcal{Y}\left(
\tau\right)  $. We abbreviate $\tau\left(  \left(  i,j\right)  \right)  $ to
$\tau\left(  i,j\right)  .$

The generalized Jack polynomials are elements of $\mathcal{P}_{\tau
}=\mathcal{P}\otimes V_{\tau}$, the space of $V_{\tau}$-valued polynomials,
which is equipped with the $\mathcal{S}_{N}$ action:%
\begin{align*}
w\left(  x^{\alpha}\otimes T\right)   &  =\left(  xw\right)  ^{\alpha}%
\otimes\tau\left(  w\right)  T,~\alpha\in\mathbb{N}_{0}^{N},T\in
\mathcal{Y}\left(  \tau\right)  ,\\
wp\left(  x\right)   &  =\tau\left(  w\right)  p\left(  xw\right)
,~p\in\mathcal{P}_{\tau},
\end{align*}
extended by linearity. A symmetric polynomial $p$ satisfies $wp=p$, that is,
$p\left(  xw\right)  =\tau\left(  w\right)  ^{-1}p\left(  x\right)  $ for all
$w\in\mathcal{S}_{N}$. . The following describes the transformation rules for
$\tau\left(  s_{i}\right)  $ acting on $\mathcal{Y}\left(  \tau\right)  $ and
on symmetric polynomials.

\begin{proposition}
\label{sympol}Suppose $p\in\mathcal{P}_{\tau}$ is symmetric, $1\leq i<N$, and
$T\in\mathcal{Y}\left(  \tau\right)  $. Express $p\left(  x\right)
=\sum\limits_{T^{\prime}\in\mathcal{Y}\left(  \tau\right)  }\frac
{1}{\left\langle T^{\prime},T^{\prime}\right\rangle _{0}}p_{T^{\prime}}\left(
x\right)  \otimes T^{\prime}.$ If $c\left(  i,T\right)  =c\left(
i+1,T\right)  +1$ then $\tau\left(  s_{i}\right)  T=T$ and $s_{i}p_{T}=p_{T}$;
if $c\left(  i,T\right)  =c\left(  i+1,T\right)  -1$ then $\tau\left(
s_{i}\right)  T=-T$ and $s_{i}p_{T}=-p_{T}$; if $c\left(  i,T\right)
-c\left(  i+1,T\right)  \geq2$ and $T^{\left(  i\right)  }$ is $T$ with
$i,i+1$ interchanged then $\tau\left(  s_{i}\right)  T=T^{\left(  i\right)
}+bT$ and $s_{i}p_{T}=p_{T^{\left(  i\right)  }}+bp_{T}$, where $b=\frac
{1}{c\left(  i,T\right)  -c\left(  i+1,T\right)  }$.
\end{proposition}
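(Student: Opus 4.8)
The plan is to treat the two halves of the statement separately. The formulas for $\tau(s_i)$ acting on $\mathcal{Y}(\tau)$ are the content-vector (Young seminormal) description of the irreducible representation, which I would recall from \cite{J1981}; the assertions about the scalar coefficients $s_i p_T$ are the genuine consequences to be derived. First I would note the combinatorial meaning of the three content conditions: $c(i,T)=c(i+1,T)+1$ says the boxes holding $i$ and $i+1$ are horizontally adjacent (same row), $c(i,T)=c(i+1,T)-1$ says they are vertically adjacent (same column), and $c(i,T)-c(i+1,T)\geq 2$ says they occupy non-adjacent, non-conflicting boxes, so that interchanging the entries produces another RSYT $T^{(i)}$. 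The first two cases force $\tau(s_i)T=\pm T$, and in the third the seminormal rule gives $\tau(s_i)T=T^{(i)}+bT$ with $b=(c(i,T)-c(i+1,T))^{-1}$.

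The key device for the coefficient statements is that $\mathcal{Y}(\tau)$ is an orthogonal basis for $\langle\cdot,\cdot\rangle_0$, so in the expansion $p(x)=\sum_{T'}\langle T',T'\rangle_0^{-1}p_{T'}(x)\otimes T'$ each coefficient is recovered as the pairing $p_T(x)=\langle p(x),T\rangle_0$. Next I would use symmetry: $s_i p=p$ means $\tau(s_i)p(xs_i)=p(x)$, hence $p(xs_i)=\tau(s_i)^{-1}p(x)=\tau(s_i)p(x)$ since $s_i$ is an involution. Applying the scalar action $wq(x)=q(xw)$ and then this identity,
\[
(s_i p_T)(x)=p_T(xs_i)=\langle p(xs_i),T\rangle_0=\langle \tau(s_i)p(x),T\rangle_0 .
\]
Now I would invoke the $\mathcal{S}_N$-invariance of $\langle\cdot,\cdot\rangle_0$, which makes $\tau(s_i)$ self-adjoint (as $\tau(s_i)^{-1}=\tau(s_i)$), to move the operator onto the second argument:
\[
(s_i p_T)(x)=\langle p(x),\tau(s_i)T\rangle_0 .
\]

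Substituting the three evaluations of $\tau(s_i)T$ and using linearity of the pairing in its second slot then settles each case at once: when $\tau(s_i)T=T$ one gets $s_i p_T=p_T$; when $\tau(s_i)T=-T$ one gets $s_i p_T=-p_T$; and when $\tau(s_i)T=T^{(i)}+bT$ one gets $s_i p_T=\langle p,T^{(i)}\rangle_0+b\langle p,T\rangle_0=p_{T^{(i)}}+bp_T$, exactly as claimed.

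I expect the only real obstacle to be the self-adjointness of $\tau(s_i)$, equivalently the $\mathcal{S}_N$-invariance of the form (\ref{Tnorm}); granting the invariance already asserted in the text this step is immediate, but a self-contained check would require confirming, for each two-dimensional block $\{T,T^{(i)}\}$, the relation $\langle T^{(i)},T^{(i)}\rangle_0=(1-b^2)\langle T,T\rangle_0$ with $b=(c(i,T)-c(i+1,T))^{-1}$. That identity holds because interchanging the entries $i,i+1$ merely toggles the single factor attached to the pair $(i,i+1)$ in the product (\ref{Tnorm}) and permutes the remaining factors involving $i$ or $i+1$ among themselves, leaving their product unchanged. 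Everything else is routine bookkeeping with the scalar action and the extraction $p_T=\langle p,T\rangle_0$.
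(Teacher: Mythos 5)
Your argument is correct, and it reaches the conclusion by a somewhat different route than the paper, though it rests on the same two inputs: the seminormal action of $\tau\left(  s_{i}\right)  $ on the block $\left\{  T,T^{\left(  i\right)  }\right\}  $ and the norm relation $\left\langle T^{\left(  i\right)  },T^{\left(  i\right)  }\right\rangle _{0}=\left(  1-b^{2}\right)  \left\langle T,T\right\rangle _{0}$. The paper applies $s_{i}$ to the two relevant terms of the tensor expansion, substitutes $\tau\left(  s_{i}\right)  T=T^{\left(  i\right)  }+bT$ and $\tau\left(  s_{i}\right)  T^{\left(  i\right)  }=\left(  1-b^{2}\right)  T-bT^{\left(  i\right)  }$ together with the norm ratio, and matches coefficients of $T$ and $T^{\left(  i\right)  }$; as a byproduct it also records $s_{i}p_{T^{\left(  i\right)  }}=\left(  1-b^{2}\right)  p_{T}-bp_{T^{\left(  i\right)  }}$. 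You instead extract $p_{T}=\left\langle p,T\right\rangle _{0}$ by orthogonality, use symmetry to write $p\left(  xs_{i}\right)  =\tau\left(  s_{i}\right)  p\left(  x\right)  $, and move $\tau\left(  s_{i}\right)  $ across the form by self-adjointness, so that all three cases follow uniformly from the single identity $\left(  s_{i}p_{T}\right)  \left(  x\right)  =\left\langle p\left(  x\right)  ,\tau\left(  s_{i}\right)  T\right\rangle _{0}$; this is cleaner and scales to any $w$. The self-adjointness you flag as the one real obstacle is, on the two-dimensional block, exactly equivalent to the norm relation above, and your verification of that relation from (\ref{Tnorm}) --- the pair $\left(  i,i+1\right)  $ toggles the factor $1-b^{2}$ while the remaining factors involving $i$ or $i+1$ are permuted among themselves --- is sound. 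One minor convention point: in extracting $p_{T}=\left\langle p,T\right\rangle _{0}$ you are implicitly using the bilinear (or linear-slot) extension of the form to polynomial coefficients; since $\tau\left(  s_{i}\right)  $ is real symmetric in the orthogonal basis this is harmless, but it is worth a word.
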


\begin{proof}
The transformation properties of $p_{T}$ follow from $s_{i}\left(
p_{T}\left(  x\right)  \otimes T\right)  =p\left(  xs_{i}\right)  \otimes
\tau\left(  s_{i}\right)  T$. The first case is when $\mathrm{rw}\left(
i,T\right)  =\mathrm{rw}\left(  i+1,T\right)  $ and $\tau\left(  s_{i}\right)
T=T$; the second case is when $\mathrm{cm}\left(  i,T\right)  =\mathrm{cm}%
\left(  i+1,T\right)  $ and $\tau\left(  s_{i}\right)  T=-T$. In the case
$c\left(  i,T\right)  -c\left(  i+1,T\right)  \geq2$ the relations
$\tau\left(  s_{i}\right)  T^{\left(  i\right)  }=\left(  1-b^{2}\right)
T-bT^{\left(  i\right)  }$ and $\left\langle T^{\left(  i\right)  },T^{\left(
i\right)  }\right\rangle _{0}=\left(  1-b^{2}\right)  \left\langle
T,T\right\rangle _{0}$ hold. By hypothesis%
\begin{align*}
&  \frac{1}{\left\langle T,T\right\rangle _{0}}p_{T}\left(  x\right)  \otimes
T+\frac{1}{\left\langle T^{\left(  i\right)  },T^{\left(  i\right)
}\right\rangle _{0}}p_{T^{\left(  i\right)  }}\left(  x\right)  \otimes
T^{\left(  i\right)  }\\
&  =\frac{1}{\left\langle T,T\right\rangle _{0}}p_{T}\left(  xs_{i}\right)
\otimes\tau\left(  s_{i}\right)  T+\frac{1}{\left\langle T^{\left(  i\right)
},T^{\left(  i\right)  }\right\rangle _{0}}p_{T^{\left(  i\right)  }}\left(
xs_{i}\right)  \otimes\tau\left(  s_{i}\right)  T^{\left(  i\right)  }\\
&  =\frac{1}{\left\langle T,T\right\rangle _{0}}\left\{  p_{T}\left(
xs_{i}\right)  \otimes\left(  T^{\left(  i\right)  }+bT\right)  +\frac
{1}{1-b^{2}}p_{T^{\left(  i\right)  }}\left(  xs_{i}\right)  \otimes\left(
\left(  1-b^{2}\right)  T-bT^{\left(  i\right)  }\right)  \right\} \\
&  =\frac{1}{\left\langle T,T\right\rangle _{0}}\left\{  \left(  bp_{T}\left(
xs_{i}\right)  +p_{T^{\left(  i\right)  }}\left(  xs_{i}\right)  \right)
\otimes T+\left(  \left(  1-b^{2}\right)  p_{T}\left(  xs_{i}\right)
-bp_{T^{\left(  i\right)  }}\left(  xs_{i}\right)  \right)  \otimes T^{\left(
i\right)  }\right\}  .
\end{align*}
Replace $x$ by $xs_{i}$ and conclude that $s_{i}p_{T}=p_{T^{\left(  i\right)
}}+bp_{T}$ and $s_{i}p_{T^{\left(  i\right)  }}=\left(  1-b^{2}\right)
p_{T}-bp_{T^{\left(  i\right)  }}$.
\end{proof}

The polynomials $p_{T}$ can be derived from $p_{T_{0}}$ where $T_{0}$ is the
root RSYT (with $N,N-1,\ldots$ entered column by column), but determining
which polynomials can serve as $p_{T_{0}}$ is nontrivial in general.

There is a parameter $\kappa\in\mathbb{R}$ (in general, $\kappa$ could be transcendental).

\begin{definition}
The \textit{Dunkl} and \textit{Cherednik-Dunkl} operators are ($1\leq i\leq
N,p\in\mathcal{P}_{\tau}$)
\begin{align*}
\mathcal{D}_{i}p\left(  x\right)   &  :=\frac{\partial}{\partial x_{i}%
}p\left(  x\right)  +\kappa\sum_{j\neq i}\tau\left(  i,j\right)
\frac{p\left(  x\right)  -p\left(  x\left(  i,j\right)  \right)  }{x_{i}%
-x_{j}},\\
\mathcal{U}_{i}p\left(  x\right)   &  :=\mathcal{D}_{i}\left(  x_{i}p\left(
x\right)  \right)  -\kappa\sum_{j=1}^{i-1}\tau\left(  i,j\right)  p\left(
x\left(  i,j\right)  \right)  .
\end{align*}

\end{definition}

The commutation relations analogous to the scalar case hold:%

\begin{align}
\mathcal{D}_{i}\mathcal{D}_{j}  &  =\mathcal{D}_{j}\mathcal{D}_{i}%
,~\mathcal{U}_{i}\mathcal{U}_{j}=\mathcal{U}_{j}\mathcal{U}_{i},~1\leq i,j\leq
N\label{commU}\\
w\mathcal{D}_{i}  &  =\mathcal{D}_{w\left(  i\right)  }w,\forall
w\in\mathcal{S}_{N};~s_{j}\mathcal{U}_{i}=\mathcal{U}_{i}s_{j},~j\neq
i-1,i;\nonumber\\
s_{i}\mathcal{U}_{i}s_{i}  &  =\mathcal{U}_{i+1}+\kappa s_{i},~\mathcal{U}%
_{i}s_{i}=s_{i}\mathcal{U}_{i+1}+\kappa,~\mathcal{U}_{i+1}s_{i}=s_{i}%
\mathcal{U}_{i}-\kappa.\nonumber
\end{align}

The commutation properties for the $\mathcal{U}_{i}$ and $s_{j}$ are derived
as follows:

\begin{enumerate}
\item if $j>i$ then $s_{j}$ commutes with each term in $\mathcal{U}_{i}$,

\item if $j<i-1$ then $\mathcal{D}_{i}x_{i}s_{j}-\kappa\sum_{k<i}\left(
k,i\right)  s_{j}=s_{j}\mathcal{D}_{i}x_{i}-\kappa\sum_{k<i}s_{j}\left(
k,i\right)  $ because $\left(  k,i\right)  s_{j}=s_{j}\left(  k,i\right)  $
unless $k=j$ or $j+1$ and then $\left\{  \left(  i,j\right)  +\left(
i,j+1\right)  \right\}  s_{j}=s_{j}\left\{  \left(  i,j+1\right)  +\left(
i,j\right)  \right\}  $,

\item if $j=i$ then $s_{i}\mathcal{U}_{i}s_{i}=\mathcal{D}_{i+1}x_{i+1}%
-\kappa\sum_{k<i}s_{i}\left(  k,i\right)  s_{i}=\mathcal{D}_{i+1}%
x_{i+1}-\kappa\sum_{k<i}\left(  k,i+1\right)  =\mathcal{U}_{i+1}+\kappa\left(
i,i+1\right)  $; the relations $\mathcal{U}_{i}s_{i}=s_{i}\mathcal{U}%
_{i+1}+\kappa,~\mathcal{U}_{i+1}s_{i}=s_{i}\mathcal{U}_{i}-\kappa$ follow from
right and left multiplication by $s_{i}$.
\end{enumerate}

From the commutation $\mathcal{D}_{i}x_{i}-x_{i}\mathcal{D}_{i}=1+\kappa
\sum_{j\neq i}\left(  i,j\right)  $ we obtain%
\begin{equation}
\mathcal{U}_{i}=\mathcal{D}_{i}x_{i}-\kappa\sum_{j<i}\left(  i,j\right)
=x_{i}\mathcal{D}_{i}+1+\kappa\sum_{j>i}\left(  i,j\right)  =x_{i}%
\mathcal{D}_{i}+1+\kappa\omega_{i}. \label{UDx}%
\end{equation}

\begin{proposition}
\label{Usym}If $q\left(  x_{1},x_{2},\ldots,x_{N}\right)  $ is a symmetric
polynomial then $q\left(  \mathcal{U}_{1},\mathcal{U}_{2},\ldots
,\mathcal{U}_{N}\right)  $ commutes with each $w\in\mathcal{S}_{N}$, as an
operator on $\mathcal{P}_{\tau}$.
\end{proposition}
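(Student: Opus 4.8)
The plan is to reduce the assertion to a two-string computation. Since $\mathcal{S}_N$ is generated by the simple reflections $s_1,\ldots,s_{N-1}$, it suffices to show that $q(\mathcal{U}_1,\ldots,\mathcal{U}_N)$ commutes with each $s_i$; commutation with an arbitrary $w\in\mathcal{S}_N$ then follows by writing $w$ as a product of simple reflections. Because $\mathcal{U}_1,\ldots,\mathcal{U}_N$ pairwise commute by \eqref{commU}, the assignment $t_j\mapsto\mathcal{U}_j$ extends to an algebra homomorphism from $\mathbb{C}[t_1,\ldots,t_N]$ into the operators on $\mathcal{P}_{\tau}$. In particular $q(\mathcal{U}_1,\ldots,\mathcal{U}_N)$ is unambiguous and respects products and sums, so I am free to replace $q$ by a convenient set of building blocks.

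Fix $i$ and exploit the fact that $q$ is invariant under the transposition $(i,i+1)$. Grouping the monomials of $q$ by their exponents in the variables other than $x_i,x_{i+1}$, I would write $q=\sum_{\beta}c_{\beta}(x_i,x_{i+1})\prod_{j\neq i,i+1}x_j^{\beta_j}$; linear independence of the monomials $\prod_{j\neq i,i+1}x_j^{\beta_j}$ forces each coefficient $c_{\beta}$ to be symmetric in its two arguments. Applying the homomorphism gives $q(\mathcal{U})=\sum_{\beta}c_{\beta}(\mathcal{U}_i,\mathcal{U}_{i+1})\prod_{j\neq i,i+1}\mathcal{U}_j^{\beta_j}$. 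By \eqref{commU} the reflection $s_i$ commutes with $\mathcal{U}_j$ for every $j\neq i,i+1$, hence with each product $\prod_{j\neq i,i+1}\mathcal{U}_j^{\beta_j}$, so it remains only to prove that $s_i$ commutes with $c_{\beta}(\mathcal{U}_i,\mathcal{U}_{i+1})$ for every symmetric two-variable polynomial $c_{\beta}$.

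Since the symmetric polynomials in two variables are generated as an algebra by $u+v$ and $uv$, and $g\mapsto g(\mathcal{U}_i,\mathcal{U}_{i+1})$ is again a homomorphism, the problem collapses to verifying that $s_i$ commutes with $\mathcal{U}_i+\mathcal{U}_{i+1}$ and with $\mathcal{U}_i\mathcal{U}_{i+1}$. From \eqref{commU} I would first recast the mixed relations as $s_i\mathcal{U}_i=\mathcal{U}_{i+1}s_i+\kappa$ and $s_i\mathcal{U}_{i+1}=\mathcal{U}_i s_i-\kappa$. Adding these gives $s_i(\mathcal{U}_i+\mathcal{U}_{i+1})=(\mathcal{U}_i+\mathcal{U}_{i+1})s_i$, the two $\pm\kappa$ terms cancelling. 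For the product, substituting twice yields $s_i\mathcal{U}_i\mathcal{U}_{i+1}=(\mathcal{U}_{i+1}s_i+\kappa)\mathcal{U}_{i+1}=\mathcal{U}_{i+1}(\mathcal{U}_i s_i-\kappa)+\kappa\mathcal{U}_{i+1}=\mathcal{U}_{i+1}\mathcal{U}_i s_i=\mathcal{U}_i\mathcal{U}_{i+1}s_i$, where the stray $\kappa\mathcal{U}_{i+1}$ terms cancel and the last equality uses $\mathcal{U}_i\mathcal{U}_{i+1}=\mathcal{U}_{i+1}\mathcal{U}_i$.

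The computations themselves are short; the only genuine care is organizational. The decisive step is the reduction to two strings, which is what makes the statement tractable: it relies on the commutativity of the family $\{\mathcal{U}_j\}$, so that $q(\mathcal{U})$ factors cleanly, together with the $(i,i+1)$-invariance of $q$, so that the two-variable coefficients $c_{\beta}$ are themselves symmetric. I expect the one place to be vigilant is the sign of the $\kappa$-corrections in $s_i\mathcal{U}_i$ versus $s_i\mathcal{U}_{i+1}$, since it is precisely their opposite signs that produce the cancellations for both $u+v$ and $uv$.
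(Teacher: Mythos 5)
Your proof is correct and follows essentially the same route as the paper: both arguments reduce to showing that $s_i$ commutes with $\mathcal{U}_i+\mathcal{U}_{i+1}$ and with $\mathcal{U}_i\mathcal{U}_{i+1}$, and both verify this with the same relations $s_i\mathcal{U}_i=\mathcal{U}_{i+1}s_i+\kappa$ and $s_i\mathcal{U}_{i+1}=\mathcal{U}_is_i-\kappa$ from (\ref{commU}). The only cosmetic difference is in the reduction step, where the paper passes through the elementary symmetric polynomials via the generating function $\prod_{j=1}^{N}\left(1+t\mathcal{U}_j\right)$ while you group monomials by their exponents outside $\left\{i,i+1\right\}$ and invoke the two-variable generators $u+v$ and $uv$; both hinge on the same commutativity facts.
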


\begin{proof}
It suffices to prove the commutativity for each $s_{i}$ with $1\leq i<N$ and
each elementary symmetric polynomial in $\left\{  \mathcal{U}_{i}\right\}  $,
that is, for $\prod\limits_{j=1}^{N}\left(  1+t\mathcal{U}_{j}\right)  $. By
the above formulae it suffices to show $s_{i}$ commutes with $\left(
1+t\mathcal{U}_{i}\right)  \left(  1+t\mathcal{U}_{i+1}\right)  =1+t\left(
\mathcal{U}_{i}+\mathcal{U}_{i+1}\right)  +t^{2}\mathcal{U}_{i}\mathcal{U}%
_{i+1}$. Indeed%
\begin{align*}
s_{i}\left(  \mathcal{U}_{i}+\mathcal{U}_{i+1}\right)  s_{i}  &
=\mathcal{U}_{i+1}+\kappa s_{i}+\mathcal{U}_{i}-\kappa s_{i}=\mathcal{U}%
_{i+1}+\mathcal{U}_{i},\\
s_{i}\mathcal{U}_{i}\mathcal{U}_{i+1}s_{i}  &  =\left(  \mathcal{U}_{i+1}%
s_{i}+\kappa\right)  \left(  s_{i}\mathcal{U}_{i}-\kappa\right)
=\mathcal{U}_{i+1}\mathcal{U}_{i}+\kappa\left(  s_{ii}\mathcal{U}%
_{i}-\mathcal{U}_{i+1}s_{i}-\kappa\right)  =\mathcal{U}_{i+1}\mathcal{U}_{i}.
\end{align*}

\end{proof}

The \textit{nonsymmetric} (vector-valued) \textit{Jack polynomials} (NSJP) are
defined to be simultaneous eigenfunctions of the commuting set $\left\{
\mathcal{U}_{i}:1\leq i\leq N\right\}  $. The symmetric vector-valued Jack
polynomials are simultaneous eigenfunctions of the symmetric polynomials in
$\left\{  \mathcal{U}_{i}\right\}  $. If $p$ is a NSJP then the sum
$\sum\limits_{w\in\mathcal{S}_{N}}wp$ is either a scalar multiple of a
symmetric Jack polynomial or zero. The details are presented in Section
\ref{symmvp}.

\section{\label{mtxbas}The matrix analogue of the base state}

This is a summary of the pertinent results from \cite{D2017}. Vectors and
matrices throughout are of size $n_{\tau}$ and $n_{\tau}\times n_{\tau}$ and
are expressed with respect to the orthonormal basis $\left\{  \left\langle
T,T\right\rangle _{0}^{-1/2}T:T\in\mathcal{Y}\left(  \tau\right)  \right\}  $.
With $\partial_{i}:=\frac{\partial}{\partial x_{i}}$ for $1\leq i\leq N$ the
differential system for the matrix function $L$ is%
\begin{align}
\partial_{i}L\left(  x\right)   &  =\kappa L\left(  x\right)  \left\{
\sum_{j\neq i}\frac{1}{x_{i}-x_{j}}\tau\left(  i,j\right)  -\frac{\gamma
}{x_{i}}I\right\}  ,~1\leq i\leq N,\label{Lsys}\\
\gamma &  :=\frac{S_{1}\left(  \tau\right)  }{N}=\frac{1}{2N}\sum_{i=1}%
^{\ell\left(  \tau\right)  }\tau_{i}\left(  \tau_{i}-2i+1\right)  .\nonumber
\end{align}
The effect of the term $\frac{\gamma}{x_{i}}I$ is to make $L\left(  x\right)
$ homogeneous of degree zero, that is, $\sum_{i=1}^{N}x_{i}\partial
_{i}L\left(  x\right)  =0$. The differential system is defined on
$\mathbb{C}_{reg}^{N}:=\mathbb{C}_{\times}^{N}\backslash\bigcup\limits_{1\leq
i<j\leq N}\left\{  x:x_{i}=x_{j}\right\}  $ (where $\mathbb{C}_{\times
}:=\mathbb{C}\backslash\left\{  0\right\}  $), and it is Frobenius integrable
and analytic, thus any local solution can be continued analytically to any
point in $\mathbb{C}_{reg}^{N}$. The equation is a modified version of the
Knizhnik-Zamolodchikov equation. For the trivial representation $\tau=\left(
N\right)  $ the solution $L\left(  x\right)  =\psi_{0}\left(  x\right)  $ (up
to scalar multiplication) because $\tau\left(  i,j\right)  =I$ and
$\gamma=\frac{N-1}{2}$. The notations for the torus and its surface measure in
terms of polar coordinates are%
\begin{align*}
\mathbb{T}^{N}  &  :=\left\{  x\in\mathbb{C}^{N}:\left\vert x_{j}\right\vert
=1,1\leq j\leq N\right\}  ,\\
\mathrm{d}m\left(  x\right)   &  =\left(  2\pi\right)  ^{-N}\mathrm{d}%
\theta_{1}\cdots\mathrm{d}\theta_{N},~x_{j}=\exp\left(  \mathrm{i}\theta
_{j}\right)  ,-\pi<\theta_{j}\leq\pi,1\leq j\leq N.
\end{align*}

Let $\mathbb{T}_{reg}^{N}:=\mathbb{T}^{N}\cap\mathbb{C}_{reg}^{N}$, then
$\mathbb{T}_{reg}^{N}$ has $\left(  N-1\right)  !$ connected components and
each component is homotopic to a circle; if $x$ is in some component then so
is $ux=\left(  ux_{1},\ldots,ux_{N}\right)  $ for each $u\in\mathbb{T}$.

\begin{definition}
Let $x_{0}:=\left(  1,e^{2\pi\mathrm{i}/N},e^{4\pi\mathrm{i}/N},\ldots
,e^{2\left(  N-1\right)  \pi\mathrm{i}/N}\right)  $ and denote the connected
component of $\mathbb{T}_{reg}^{N}$ containing $x_{0}$ by $\mathcal{C}_{0}$,
called the fundamental chamber.
\end{definition}

Thus $\mathcal{C}_{0}$ is the set consisting of $\left(  e^{\mathrm{i}%
\theta_{1}},\ldots,e^{\mathrm{i}\theta_{N}}\right)  $ with $\theta_{1}%
<\theta_{2}<\cdots<\theta_{N}<\theta_{1}+2\pi$. The homogeneity $L\left(
ux\right)  =L\left(  x\right)  $ for $\left\vert u\right\vert =1$ shows that
$L\left(  x\right)  $ has a well-defined analytic continuation to all of
$\mathcal{C}_{0}$ starting from $x_{0}$. Let $w_{0}:=\left(  1,2,3,\ldots
N\right)  =\left(  12\right)  \left(  23\right)  \cdots\left(  N-1,N\right)
$, an $N$-cycle, and let $\left\langle w_{0}\right\rangle $ denote the cyclic
group generated by $w_{0}$. There are two components of $\mathbb{T}_{reg}^{N}$
which are set-wise invariant under $\left\langle w_{0}\right\rangle $ namely
$\mathcal{C}_{0}$ and the reverse $\left\{  \theta_{N}<\theta_{N-1}%
<\ldots<\theta_{1}<\theta_{N}+2\pi\right\}  $. Indeed $\left\langle
w_{0}\right\rangle $ is the stabilizer of $\mathcal{C}_{0}$ as a subgroup of
$\mathcal{S}_{N}$. A list of properties of $L\left(  x\right)  $ (from
\cite{D2017}):

\begin{enumerate}
\item If $L\left(  x\right)  $ is a solution of (\ref{Lsys}) in some connected
open subset $U$ of $\mathbb{C}_{reg}^{N}$ then $L\left(  xw\right)
\tau\left(  w\right)  ^{-1}$ is a solution in $Uw^{-1}$;

\item If $L\left(  x_{0}\right)  $ is nonsingular then $L$ is nonsingular on
all of $\mathcal{C}_{0}$; this follows from%
\[
\det L\left(  x\right)  =c\prod\limits_{1\leq i<j\leq N}\left(  -\frac{\left(
x_{i}-x_{j}\right)  ^{2}}{x_{i}x_{j}}\right)  ^{\kappa\lambda/2}%
,~\lambda:=\frac{\gamma n_{\tau}}{2\left(  N-1\right)  }=\mathrm{tr}\left(
\tau\left(  1,2\right)  \right)  ,
\]

\item Suppose $L\left(  x\right)  $ is normalized by $L\left(  x_{0}\right)
=I$ then $L\left(  xw_{0}^{m}\right)  =\tau\left(  w_{0}\right)  ^{-m}L\left(
x\right)  \tau\left(  w_{0}\right)  ^{m}$ for all $x\in\mathcal{C}_{0}$ and
$m\in\mathbb{Z}$;
\end{enumerate}

For $w\in\mathcal{S}_{N}$ and for $x\in\mathbb{T}_{reg}^{N}$ define $w_{x}%
\in\mathcal{S}_{N}$ such that $xw_{x}^{-1}\in\mathcal{C}_{0}$ and
$w_{x}\left(  1\right)  =1$; then $w_{x}$ is uniquely defined and is constant
on connected components. Then define $L\left(  x\right)  $ on the other
connected components of $\mathbb{T}_{reg}^{N}$ by%
\begin{equation}
L\left(  x\right)  :=L\left(  xw_{x}^{-1}\right)  \tau\left(  w_{x}\right)  .
\label{TregL}%
\end{equation}
In order to derive a formula for the relation of $L\left(  xw\right)
\tau\left(  w\right)  ^{-1}$ to $L\left(  x\right)  $ we need a twist: for
$w\in\mathcal{S}_{N}$ and for $x\in\mathbb{T}_{reg}^{N}$ define%
\[
M\left(  w,x\right)  :=\tau\left(  w_{0}\right)  ^{1-w_{x}w\left(  1\right)
}.
\]
Henceforth the assumption $L\left(  x_{0}\right)  =I$ is relaxed to $L\left(
x_{0}\right)  $ commuting with $\tau\left(  w_{0}\right)  $ and being
nonsingular (so $L\left(  xw_{0}^{m}\right)  =\tau\left(  w_{0}\right)
^{-m}L\left(  x\right)  \tau\left(  w_{0}\right)  ^{m}$ still holds for
$x\in\mathcal{C}_{0}$). Then $M\left(  w,x\right)  $ and $L\left(  x\right)  $
have the following properties ($x\in\mathbb{T}_{reg}^{N}$):%
\begin{align}
M\left(  I,x\right)   &  =I;\nonumber\\
M\left(  w_{1}w_{2},x\right)   &  =M\left(  w_{2},xw_{1}\right)  M\left(
w_{1},x\right)  ,~w_{1},w_{2}\in\mathcal{S}_{N},\label{MMMw}\\
L\left(  xw\right)   &  =M\left(  w,x\right)  L\left(  x\right)  \tau\left(
w\right)  ,~w\in\mathcal{S}_{N}\nonumber
\end{align}

With the goal of analyzing vector functions of the form $f\left(  x\right)
=L\left(  x\right)  p\left(  x\right)  $ where $p\in\mathcal{P}_{\tau}$
consider%
\begin{align}
L\left(  x\right)  wp\left(  x\right)   &  =L\left(  x\right)  \tau\left(
w\right)  p\left(  xw\right)  =L\left(  x\right)  \tau\left(  w\right)
L\left(  xw\right)  ^{-1}f\left(  xw\right) \label{LwLi}\\
&  =L\left(  x\right)  \tau\left(  w\right)  \left\{  M\left(  w,x\right)
L\left(  x\right)  \tau\left(  w\right)  \right\}  ^{-1}f\left(  xw\right)
\nonumber\\
&  =M\left(  w,x\right)  ^{-1}f\left(  xw\right)  ,\nonumber
\end{align}
in other words $L\left(  x\right)  wL\left(  x\right)  ^{-1}f\left(  x\right)
=M\left(  w,x\right)  ^{-1}f\left(  xw\right)  $. Accordingly define a twisted
action of $\mathcal{S}_{N}$ on vector-valued functions $f\left(  x\right)  $
(defined on $\mathbb{T}_{reg}^{N}$) by%
\[
\sigma^{M}\left(  w\right)  f\left(  x\right)  =M\left(  w,x\right)
^{-1}f\left(  xw\right)  .
\]

\begin{proposition}
\label{LwLsigma}Suppose $w\in\mathcal{S}_{N}$ then $L\left(  x\right)
wL\left(  x\right)  ^{-1}=\sigma^{M}\left(  w\right)  $ and $\sigma^{M}$ is a
representation of $\mathcal{S}_{N}$.
\end{proposition}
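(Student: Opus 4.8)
The plan is to read off the first identity from the computation already displayed in (\ref{LwLi}), and then to derive the representation property as a formal consequence of the cocycle relation in (\ref{MMMw}); the genuine content about $L$---its existence, nonsingularity on $\mathbb{C}_{reg}^{N}$, and the transformation law $L\left(xw\right)=M\left(w,x\right)L\left(x\right)\tau\left(w\right)$---has already been established, so nothing analytic remains to be done.

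For the first assertion, recall that the operator $w$ acts on $V_{\tau}$-valued functions by $wf\left(x\right)=\tau\left(w\right)f\left(xw\right)$, and that $L\left(x\right)$ is nonsingular on $\mathbb{C}_{reg}^{N}$, so multiplication by $L\left(x\right)$ is an invertible operator. Writing an arbitrary function as $f=Lp$ with $p=L^{-1}f$, the chain in (\ref{LwLi}) computes $L\left(x\right)\left(wp\right)\left(x\right)=L\left(x\right)\tau\left(w\right)L\left(xw\right)^{-1}f\left(xw\right)$ and then substitutes $L\left(xw\right)=M\left(w,x\right)L\left(x\right)\tau\left(w\right)$; the factor $L\left(x\right)\tau\left(w\right)$ cancels against $\left(M\left(w,x\right)L\left(x\right)\tau\left(w\right)\right)^{-1}$, leaving $M\left(w,x\right)^{-1}f\left(xw\right)=\sigma^{M}\left(w\right)f\left(x\right)$. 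Hence $L\left(x\right)wL\left(x\right)^{-1}=\sigma^{M}\left(w\right)$ as operators.

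For the representation property I would compute the composition directly. Applying $\sigma^{M}\left(w_{2}\right)$ and then $\sigma^{M}\left(w_{1}\right)$ to a function $f$ gives
\[
\sigma^{M}\left(w_{1}\right)\sigma^{M}\left(w_{2}\right)f\left(x\right)=M\left(w_{1},x\right)^{-1}M\left(w_{2},xw_{1}\right)^{-1}f\left(xw_{1}w_{2}\right),
\]
and the cocycle identity $M\left(w_{1}w_{2},x\right)=M\left(w_{2},xw_{1}\right)M\left(w_{1},x\right)$ of (\ref{MMMw}) contracts the two twists into $M\left(w_{1}w_{2},x\right)^{-1}$, so the right-hand side equals $\sigma^{M}\left(w_{1}w_{2}\right)f\left(x\right)$. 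Since $M\left(I,x\right)=I$ gives $\sigma^{M}\left(I\right)=\mathrm{Id}$, this exhibits $\sigma^{M}$ as a homomorphism. The same conclusion follows more conceptually from the first part: $\sigma^{M}\left(w\right)$ is the conjugate of the standard action $w$ by the fixed invertible operator of multiplication by $L$, and a conjugate of a representation is again a representation. I do not expect any real obstacle; the only place demanding care is the bookkeeping in the cocycle step---keeping the shift $x\mapsto xw_{1}$ attached to the correct factor $M\left(w_{2},\cdot\right)$---so I would check the order of composition against the convention $wf\left(x\right)=\tau\left(w\right)f\left(xw\right)$ to make sure the two derivations are consistent.
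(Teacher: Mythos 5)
Your proposal is correct and follows essentially the same route as the paper: the identity $L(x)wL(x)^{-1}=\sigma^{M}(w)$ is read off from the computation in (\ref{LwLi}) together with the transformation law $L(xw)=M(w,x)L(x)\tau(w)$, and the homomorphism property is obtained from the cocycle identity $M(w_{1}w_{2},x)=M(w_{2},xw_{1})M(w_{1},x)$ exactly as in the paper's proof. Your closing observation that $\sigma^{M}$ is a conjugate of the standard action by the invertible multiplication operator $L$ is a harmless (and correct) shortcut the paper does not spell out, but it does not change the substance of the argument.
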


\begin{proof}
Using formula (\ref{MMMw}) let $g\left(  x\right)  =\sigma^{M}\left(
w_{2}\right)  f\left(  x\right)  =M\left(  w_{2},x\right)  ^{-1}f\left(
xw_{2}\right)  $, then%
\begin{align*}
\sigma^{M}\left(  w_{1}\right)  \sigma^{M}\left(  w_{2}\right)  f\left(
x\right)   &  =\sigma^{M}\left(  w_{1}\right)  g\left(  x\right)  =M\left(
w_{1},x\right)  ^{-1}g\left(  xw_{1}\right) \\
&  =M\left(  w_{1},x\right)  ^{-1}M\left(  w_{2},xw_{1}\right)  ^{-1}f\left(
xw_{1}w_{2}\right) \\
&  =M\left(  w_{1}w_{2},x\right)  ^{-1}f\left(  xw_{1}w_{2}\right)
=\sigma^{M}\left(  w_{1}w_{2}\right)  f\left(  x\right)  .
\end{align*}

\end{proof}

If $p\left(  x\right)  $ is symmetric ($\tau\left(  w\right)  p\left(
xw\right)  =p\left(  x\right)  $) then%
\begin{align*}
\sigma^{M}\left(  w\right)  L\left(  x\right)  p\left(  x\right)   &
=M\left(  w,x\right)  ^{-1}L\left(  xw\right)  p\left(  xw\right) \\
&  =M\left(  w,x\right)  ^{-1}\left\{  M\left(  w,x\right)  L\left(  x\right)
\tau\left(  w\right)  \right\}  p\left(  xw\right) \\
&  =L\left(  x\right)  \tau\left(  w\right)  p\left(  xw\right)  =L\left(
x\right)  p\left(  x\right)  .
\end{align*}

This is crucial in the sequel where the operator $L\left(  x\right)
\sum_{i=1}^{N}\left(  \mathcal{U}_{i}-1-\kappa\gamma\right)  ^{2}L\left(
x\right)  ^{-1}$ is related to the Hamiltonian $\mathcal{H}$.

\begin{proposition}
For $1\leq i\leq N$
\begin{equation}
L\left(  x\right)  \left(  \mathcal{U}_{i}-1-\kappa\gamma\right)  L\left(
x\right)  ^{-1}=x_{i}\partial_{i}-\kappa\sum_{j<i}\frac{x_{i}}{x_{i}-x_{j}%
}\sigma^{M}\left(  i,j\right)  -\kappa\sum_{j>i}\frac{x_{j}}{x_{i}-x_{j}%
}\sigma^{M}\left(  i,j\right)  . \label{LULi}%
\end{equation}

\end{proposition}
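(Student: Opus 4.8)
The plan is to substitute the explicit expression for $\mathcal{U}_i$ from (\ref{UDx}) and then conjugate by $L$ one term at a time. Writing $\mathcal{U}_i-1=x_i\mathcal{D}_i+\kappa\omega_i$ and expanding the Dunkl operator, I would first record that $(\mathcal{U}_i-1)p=x_i\partial_ip+\kappa\sum_{j\neq i}\frac{x_i}{x_i-x_j}\tau(i,j)p-\kappa\sum_{j\neq i}\frac{x_i}{x_i-x_j}(i,j)p+\kappa\sum_{j>i}(i,j)p$, where it is essential to keep two operators apart: the value-only matrix multiplication $p(x)\mapsto\tau(i,j)p(x)$ (argument unchanged) and the genuine module action $(i,j)p(x)=\tau(i,j)p(x(i,j))$. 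The first two $\tau(i,j)$-terms arise from the numerator $x_i\bigl(p(x)-p(x(i,j))\bigr)$ of $x_i\mathcal{D}_i$, and the last sum is $\kappa\omega_i$.

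Next I would conjugate the Euler-type operator $x_i\partial_i$ by $L$. Since $L$ commutes with the scalar $x_i$ and $\partial_iL^{-1}=-L^{-1}(\partial_iL)L^{-1}$, one obtains the operator identity $L\,(x_i\partial_i)\,L^{-1}=x_i\partial_i-x_i(\partial_iL)L^{-1}$. Feeding in the differential system (\ref{Lsys}) gives $x_i(\partial_iL)L^{-1}=\kappa\sum_{j\neq i}\frac{x_i}{x_i-x_j}L\tau(i,j)L^{-1}-\kappa\gamma I$, so conjugating $x_i\partial_i$ produces precisely a matrix-only sum $-\kappa\sum_{j\neq i}\frac{x_i}{x_i-x_j}L\tau(i,j)L^{-1}$ together with the scalar $+\kappa\gamma I$.

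Now the two halves fit together. Conjugating the value-only term $+\kappa\sum_{j\neq i}\frac{x_i}{x_i-x_j}\tau(i,j)p$ of $x_i\mathcal{D}_i$ gives $+\kappa\sum_{j\neq i}\frac{x_i}{x_i-x_j}L\tau(i,j)L^{-1}$, which cancels exactly the matrix-only sum produced by conjugating $x_i\partial_i$; the $+\kappa\gamma I$ is removed by the $-\kappa\gamma$ on the left side of (\ref{LULi}). The surviving terms all carry the genuine action $(i,j)$, and by Proposition \ref{LwLsigma} each becomes $L(i,j)L^{-1}=\sigma^M(i,j)$, leaving $L(\mathcal{U}_i-1-\kappa\gamma)L^{-1}=x_i\partial_i-\kappa\sum_{j\neq i}\frac{x_i}{x_i-x_j}\sigma^M(i,j)+\kappa\sum_{j>i}\sigma^M(i,j)$. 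Finally I would split $\sum_{j\neq i}=\sum_{j<i}+\sum_{j>i}$ and combine the two $j>i$ sums via $1-\frac{x_i}{x_i-x_j}=-\frac{x_j}{x_i-x_j}$, which reproduces the stated right-hand side. The main obstacle is not any single computation but the bookkeeping: one must consistently distinguish the non-permuting matrix multiplication $\tau(i,j)$ from the permutation $(i,j)$ and recognize that it is exactly the former that cancels between the conjugated derivative and the conjugated Dunkl term, while the latter survives to become the twisted operators $\sigma^M(i,j)$.
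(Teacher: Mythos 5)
Your proposal is correct and follows essentially the same route as the paper: both rest on the decomposition $\mathcal{U}_i=x_i\mathcal{D}_i+1+\kappa\omega_i$ from (\ref{UDx}), the differential system (\ref{Lsys}) to evaluate $L(\partial_iL^{-1})=-(\partial_iL)L^{-1}$, and the identity $L(x)\,(i,j)\,L(x)^{-1}=\sigma^M(i,j)$, with the same cancellation of the argument-preserving $\tau(i,j)$ terms against those produced by differentiating $L^{\pm1}$ and the same final recombination of the $j>i$ sums. The only difference is organizational — you conjugate a four-term operator decomposition piece by piece, while the paper applies $\mathcal{D}_i$ directly to $L^{-1}f$ — and your explicit separation of the value-only multiplication $\tau(i,j)$ from the module action $(i,j)$ is exactly the bookkeeping the paper's computation carries out implicitly.
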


\begin{proof}
Write the differential system as $\partial_{i}L\left(  x\right)  =\kappa
L\left(  x\right)  A_{i}\left(  x\right)  $ then
\[
0=\partial_{i}\left(  L^{-1}L\right)  =\left(  \partial_{i}L^{-1}\right)
L+L^{-1}\partial_{i}L=\left(  \partial_{i}L^{-1}\right)  L+\kappa L^{-1}%
LA_{i},
\]
thus $\partial_{i}L\left(  x\right)  ^{-1}=-\kappa A_{i}\left(  x\right)
L\left(  x\right)  ^{-1}$. Next by formula (\ref{LwLi}) $L\left(  x\right)
\left(  i,j\right)  L\left(  x\right)  ^{-1}=\sigma^{M}\left(  i,j\right)  $.
For the other term in $\mathcal{U}_{i}=x_{i}\mathcal{D}_{i}+1+\kappa\omega
_{i}$ (formula (\ref{UDx})) we obtain $L\left(  x\right)  \omega_{i}L\left(
x\right)  ^{-1}=\sum_{j>i}\sigma^{M}\left(  i,j\right)  $. Consider%
\begin{align*}
\mathcal{D}_{i}L\left(  x\right)  ^{-1}f\left(  x\right)   &  =\left(
\partial_{i}L\left(  x\right)  ^{-1}\right)  f\left(  x\right)  +L\left(
x\right)  ^{-1}\partial_{i}f\left(  x\right) \\
&  +\kappa\sum_{j\neq i}\frac{\tau\left(  i,j\right)  }{x_{i}-x_{j}}\left\{
L\left(  x\right)  ^{-1}f\left(  x\right)  -L\left(  x\left(  i,j\right)
\right)  ^{-1}f\left(  x\left(  i,j\right)  \right)  \right\} \\
&  =-\kappa\left\{  \sum_{j\neq i}\frac{\tau\left(  i,j\right)  }{x_{i}-x_{j}%
}-\frac{\gamma}{x_{i}}I\right\}  L\left(  x\right)  ^{-1}f\left(  x\right)
+L\left(  x\right)  ^{-1}\partial_{i}f\left(  x\right) \\
&  +\kappa\sum_{j\neq i}\frac{\tau\left(  i,j\right)  }{x_{i}-x_{j}}\left\{
L\left(  x\right)  ^{-1}f\left(  x\right)  -\tau\left(  i,j\right)
^{-1}L\left(  x\right)  ^{-1}M\left(  \left(  i,j\right)  ,x\right)
^{-1}f\left(  x\left(  i,j\right)  \right)  \right\} \\
&  =L\left(  x\right)  ^{-1}\left\{  \frac{\kappa\gamma}{x_{i}}f\left(
x\right)  +\partial_{i}f\left(  x\right)  -\kappa\sum_{j\neq i}\frac{1}%
{x_{i}-x_{j}}\sigma^{M}\left(  i,j\right)  f\left(  x\right)  .\right\}  .
\end{align*}
Thus
\begin{align*}
&  L\left(  x\right)  \left\{  x_{i}\mathcal{D}_{i}+1+\kappa\omega
_{i}\right\}  L\left(  x\right)  ^{-1}-1-\kappa\gamma\\
&  =x_{i}\partial_{i}-\kappa\sum_{j\neq i}\frac{x_{i}}{x_{i}-x_{j}}\sigma
^{M}\left(  i,j\right)  +\kappa\sum_{j>i}\sigma^{M}\left(  i,j\right) \\
&  =x_{i}\partial_{i}-\kappa\sum_{j<i}\frac{x_{i}}{x_{i}-x_{j}}\sigma
^{M}\left(  i,j\right)  -\kappa\sum_{j>i}\frac{x_{j}}{x_{i}-x_{j}}\sigma
^{M}\left(  i,j\right)  .
\end{align*}

\end{proof}

We will use an elementary double sum formula: suppose $g\left(  i,j\right)  $
is a function defined on all pairs $\left(  i,j\right)  $ with $1\leq i,j\leq
N$ then%
\begin{equation}
\sum_{i,j=1,i\neq j}^{N}g\left(  i,j\right)  =\sum_{1\leq i<j\leq N}\left\{
g\left(  i,j\right)  +g\left(  j,i\right)  \right\}  . \label{dblsumg1}%
\end{equation}

\begin{corollary}
$\sum_{i=1}^{N}\mathcal{U}_{i}=\sum_{i=1}^{N}x_{i}\partial_{i}+N+\kappa
S_{1}.$
\end{corollary}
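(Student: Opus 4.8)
The plan is to start from the second form of the Cherednik--Dunkl operator in (\ref{UDx}), namely $\mathcal{U}_{i}=x_{i}\mathcal{D}_{i}+1+\kappa\omega_{i}$ with $\omega_{i}=\sum_{j>i}\left(i,j\right)$, and sum over $i$. This immediately gives $\sum_{i=1}^{N}\mathcal{U}_{i}=\sum_{i=1}^{N}x_{i}\mathcal{D}_{i}+N+\kappa\sum_{i=1}^{N}\omega_{i}$, so the task reduces to evaluating the first-order sum $\sum_{i}x_{i}\mathcal{D}_{i}$ and recognizing how the Jucys--Murphy sum $\sum_{i}\omega_{i}=\sum_{1\leq i<j\leq N}\left(i,j\right)$ interacts with it. The one point to keep straight throughout is the distinction between the matrix $\tau\left(i,j\right)$, acting only on the $V_{\tau}$-factor, and the full group element $\left(i,j\right)$, whose action on $\mathcal{P}_{\tau}$ also transposes the variables $x_{i},x_{j}$; recall that these are related by $\left(i,j\right)p=\tau\left(i,j\right)p\left(x\left(i,j\right)\right)$.

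Next I would expand $x_{i}\mathcal{D}_{i}p=x_{i}\partial_{i}p+\kappa\sum_{j\neq i}\tau\left(i,j\right)\frac{x_{i}\left(p\left(x\right)-p\left(x\left(i,j\right)\right)\right)}{x_{i}-x_{j}}$ and sum over $i$, applying the symmetrization (\ref{dblsumg1}) to the resulting double sum over $i\neq j$. Because $\tau\left(i,j\right)$ and $p\left(x\left(i,j\right)\right)$ are symmetric in $i,j$, the pairing $g\left(i,j\right)+g\left(j,i\right)$ leaves the single factor $\frac{x_{i}}{x_{i}-x_{j}}+\frac{x_{j}}{x_{j}-x_{i}}=1$, collapsing the quotients and yielding $\sum_{i}x_{i}\mathcal{D}_{i}=\sum_{i}x_{i}\partial_{i}+\kappa\sum_{1\leq i<j\leq N}\tau\left(i,j\right)\left(p-p\left(x\left(i,j\right)\right)\right)$. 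I would then split this last sum: the term carrying $\tau\left(i,j\right)p$ involves only the matrix action, while the term carrying $\tau\left(i,j\right)p\left(x\left(i,j\right)\right)$ is precisely the full group action $\left(i,j\right)p$.

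Finally, the pure-matrix central sum $\sum_{1\leq i<j\leq N}\tau\left(i,j\right)$ acts on $\mathcal{P}_{\tau}=\mathcal{P}\otimes V_{\tau}$ as $I\otimes\left(\sum_{i<j}\tau\left(i,j\right)\right)$, and by the central-element property $\sum_{1\leq i<j\leq N}\tau\left(i,j\right)T=S_{1}T$ this is the scalar operator $S_{1}I$. Hence $\sum_{i}x_{i}\mathcal{D}_{i}=\sum_{i}x_{i}\partial_{i}+\kappa S_{1}-\kappa\sum_{1\leq i<j\leq N}\left(i,j\right)$, and substituting into the summed identity makes the full-action pieces $-\kappa\sum_{i<j}\left(i,j\right)$ and $+\kappa\sum_{i}\omega_{i}=+\kappa\sum_{i<j}\left(i,j\right)$ cancel, leaving $\sum_{i}\mathcal{U}_{i}=\sum_{i}x_{i}\partial_{i}+N+\kappa S_{1}$. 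The main thing to watch is exactly this bookkeeping: the variable-permuting contributions produced by the Dunkl difference quotients must be tracked separately from the pure $\tau$-action, since it is only their cancellation against the Jucys--Murphy term that removes the noncommutative part and isolates the stated scalar $\kappa S_{1}$.
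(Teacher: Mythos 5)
Your proof is correct, and it takes a genuinely different route from the paper's. The paper obtains this corollary by conjugation: it sums the already-established formula (\ref{LULi}), observes that the off-diagonal terms $-\kappa\,\frac{x_{\max(i,j)}}{x_{i}-x_{j}}\sigma^{M}(i,j)$ have an antisymmetric coefficient under $i\leftrightarrow j$ and hence cancel outright via (\ref{dblsumg1}), leaving $L(x)\sum_{i}(\mathcal{U}_{i}-1-\kappa\gamma)L(x)^{-1}=\sum_{i}x_{i}\partial_{i}$; then homogeneity ($\sum_{i}x_{i}\partial_{i}L=0$, so the Euler operator commutes with multiplication by $L$) and $N\gamma=S_{1}(\tau)$ finish it. You instead work entirely at the level of Section \ref{JPops}, expanding $x_{i}\mathcal{D}_{i}$ from the definition and using $\mathcal{U}_{i}=x_{i}\mathcal{D}_{i}+1+\kappa\omega_{i}$; in your version the paired difference quotients do not vanish but collapse to $\kappa\,\tau(i,j)\bigl(p-p(x(i,j))\bigr)$, after which the central-element identity $\sum_{i<j}\tau(i,j)=S_{1}I$ on $V_{\tau}$ supplies the scalar and the full group-action part $-\kappa\sum_{i<j}(i,j)$ cancels against $\kappa\sum_{i}\omega_{i}$. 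Your bookkeeping of the distinction between the matrix $\tau(i,j)$ and the group element $(i,j)$ is exactly right and is the crux of the argument. What each approach buys: yours is self-contained and independent of the matrix function $L$, the system (\ref{Lsys}), and formula (\ref{LULi}), so it proves the identity on $\mathcal{P}_{\tau}$ from first principles; the paper's is a two-line byproduct of machinery it needs anyway, and it simultaneously records the conjugated form $L(x)\sum_{i}(\mathcal{U}_{i}-1-\kappa\gamma)L(x)^{-1}=\sum_{i}x_{i}\partial_{i}$, which is the version actually exploited in the later spectral analysis.
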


\begin{proof}
From formula (\ref{LULi})%
\[
L\left(  x\right)  \sum_{i=1}^{N}\left(  \mathcal{U}_{i}-1-\kappa
\gamma\right)  L\left(  x\right)  ^{-1}=\sum_{i=1}^{N}x_{i}\partial_{i}%
-\kappa\sum_{i=1}^{N}\sum_{j\neq i}\frac{x_{\max\left(  i,j\right)  }}%
{x_{i}-x_{j}}\sigma^{M}\left(  i,j\right)  .
\]
The double sum is of the form (\ref{dblsumg1}) and for $i<j$ one obtains
$g\left(  i,j\right)  +g\left(  j,i\right)  =\frac{x_{i}}{x_{i}-x_{j}}%
\sigma^{M}\left(  i,j\right)  +\frac{x_{i}}{x_{j}-x_{i}}\sigma^{M}\left(
j,i\right)  =0$. From $\sum_{i=1}^{N}x_{i}\partial_{i}L\left(  x\right)  =0$
it follows that $\sum_{i=1}^{N}x_{i}\partial_{i}$ commutes with $L\left(
x\right)  $ and together with $\gamma=S_{1}\left(  \tau\right)  /N$ completes
the proof. to get the stated formula.
\end{proof}

\begin{lemma}
For $1\leq i\leq N$%
\begin{gather}
L\left(  x\right)  \left(  \mathcal{U}_{i}-1-\kappa\gamma\right)  ^{2}L\left(
x\right)  ^{-1}=\left(  x_{i}\partial_{i}\right)  ^{2}-\kappa\sum_{j\neq
i}\frac{x_{i}x_{j}}{\left(  x_{i}-x_{j}\right)  ^{2}}\left(  \kappa-\sigma
^{M}\left(  i,j\right)  \right) \label{LU2L1}\\
-\kappa\sum_{j<i}\left\{  \frac{x_{i}^{2}}{x_{i}-x_{j}}\sigma^{M}\left(
i,j\right)  \partial_{j}f+\frac{x_{j}x_{i}}{x_{i}-x_{j}}\sigma^{M}\left(
i,j\right)  \partial_{i}f\right\} \label{LU2L2}\\
-\kappa\sum_{j>i}\left\{  \frac{x_{j}x_{i}}{x_{i}-x_{j}}\sigma^{M}\left(
i,j\right)  \partial_{j}f+\frac{x_{i}^{2}}{x_{i}-x_{j}}\sigma^{M}\left(
i,j\right)  \partial_{i}f\right\} \label{LU2L3}\\
+\kappa^{2}\sum_{\#\left\{  i,j,k\right\}  =3}\frac{x_{\max\left(  i,j\right)
}}{x_{i}-x_{j}}\sigma^{M}\left(  i,j\right)  \frac{x_{\max\left(  i,k\right)
}}{x_{i}-x_{k}}\sigma^{M}\left(  i,k\right)  . \label{LU2L4}%
\end{gather}

\end{lemma}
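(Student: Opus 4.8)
The plan is to exploit that conjugation by the matrix $L(x)$ is an algebra homomorphism on differential-difference operators: from the identity (\ref{LULi}) already established, writing $\mathcal{V}_{i}:=L(x)\left(\mathcal{U}_{i}-1-\kappa\gamma\right)L(x)^{-1}$ we automatically obtain $L(x)\left(\mathcal{U}_{i}-1-\kappa\gamma\right)^{2}L(x)^{-1}=\mathcal{V}_{i}^{2}$. Thus the whole content of the lemma is the computation of the square of the first-order operator $\mathcal{V}_{i}=x_{i}\partial_{i}-\kappa\sum_{j\neq i}\frac{x_{\max(i,j)}}{x_{i}-x_{j}}\sigma^{M}(i,j)$ together with its reduction to normal form.

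The one structural input I would record first is that on $\mathbb{T}_{reg}^{N}$ the twist $M((i,j),x)$ is \emph{locally constant}, since $w_{x}$ is constant on each connected component; hence $\partial_{k}M((i,j),x)=0$ there. Combined with the defining formula $\sigma^{M}(w)f(x)=M(w,x)^{-1}f(xw)$, this makes each $\sigma^{M}(i,j)$ behave as an honest permutation operator on the variables, namely $\sigma^{M}(i,j)\,x_{k}=x_{(i,j)(k)}\,\sigma^{M}(i,j)$ and $\partial_{k}\,\sigma^{M}(i,j)=\sigma^{M}(i,j)\,\partial_{(i,j)(k)}$; in particular $\partial_{i}\sigma^{M}(i,j)=\sigma^{M}(i,j)\partial_{j}$ and $\sigma^{M}(i,j)x_{i}=x_{j}\sigma^{M}(i,j)$. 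Since $\sigma^{M}$ is a representation (Proposition \ref{LwLsigma}) we also have $\sigma^{M}(i,j)^{2}=I$ and $\sigma^{M}(i,j)\sigma^{M}(i,k)=\sigma^{M}((i,j)(i,k))$.

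Now I would write $\mathcal{V}_{i}=x_{i}\partial_{i}+R_{i}$ with $R_{i}=-\kappa\sum_{j\neq i}a_{ij}\sigma^{M}(i,j)$ and $a_{ij}=\frac{x_{\max(i,j)}}{x_{i}-x_{j}}$, and expand $\mathcal{V}_{i}^{2}=(x_{i}\partial_{i})^{2}+(x_{i}\partial_{i})R_{i}+R_{i}(x_{i}\partial_{i})+R_{i}^{2}$. The term $(x_{i}\partial_{i})^{2}$ is the leading term of (\ref{LU2L1}). For the two cross terms I apply the Leibniz rule together with the commutation rules above: in $(x_{i}\partial_{i})R_{i}$ the derivative falling on the coefficient $a_{ij}$ produces, using $x_{i}\partial_{i}a_{ij}=-\frac{x_{i}x_{j}}{(x_{i}-x_{j})^{2}}$ (the same value for $j<i$ and $j>i$), the scalar-times-$\sigma^{M}$ contribution $+\kappa\sum_{j\neq i}\frac{x_{i}x_{j}}{(x_{i}-x_{j})^{2}}\sigma^{M}(i,j)$, while the derivative passing through $\sigma^{M}(i,j)$ via $\partial_{i}\sigma^{M}(i,j)=\sigma^{M}(i,j)\partial_{j}$, and the reindexing $\sigma^{M}(i,j)x_{i}=x_{j}\sigma^{M}(i,j)$ inside $R_{i}(x_{i}\partial_{i})$, together yield the first-order operators (\ref{LU2L2}) and (\ref{LU2L3}), split according to $j<i$ and $j>i$.

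Finally $R_{i}^{2}=\kappa^{2}\sum_{j,k\neq i}a_{ij}\,\sigma^{M}(i,j)\,a_{ik}\,\sigma^{M}(i,k)$ splits into its diagonal $j=k$ and off-diagonal $j\neq k$ parts. On the diagonal, moving $\sigma^{M}(i,j)$ past $a_{ij}$ gives $a_{ij}\cdot a_{ij}(x(i,j))$, and since swapping $x_{i}\leftrightarrow x_{j}$ sends $a_{ij}$ to $-\frac{x_{\min(i,j)}}{x_{i}-x_{j}}$ this product equals $-\frac{x_{i}x_{j}}{(x_{i}-x_{j})^{2}}$, while $\sigma^{M}(i,j)^{2}=I$; hence the diagonal contributes $-\kappa^{2}\sum_{j\neq i}\frac{x_{i}x_{j}}{(x_{i}-x_{j})^{2}}$, which merges with the cross-term scalar above into the factored coefficient $-\kappa\sum_{j\neq i}\frac{x_{i}x_{j}}{(x_{i}-x_{j})^{2}}\bigl(\kappa-\sigma^{M}(i,j)\bigr)$ of (\ref{LU2L1}). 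The off-diagonal part, left in its unmoved form $a_{ij}\sigma^{M}(i,j)\,a_{ik}\sigma^{M}(i,k)$ summed over triples with $\#\{i,j,k\}=3$, is precisely (\ref{LU2L4}). The main obstacle is not any single step but the disciplined bookkeeping of signs and index swaps when transporting $\sigma^{M}(i,j)$ across the multiplication and differentiation operators, and in particular the correct isolation of the diagonal of $R_{i}^{2}$ so that it combines with the zeroth-order cross term into the $\kappa-\sigma^{M}(i,j)$ factor; the only genuinely nontrivial structural fact that must be invoked to license all of this is the local constancy of the twist $M$, which is what permits treating $\sigma^{M}(i,j)$ as a true variable-permutation operator.
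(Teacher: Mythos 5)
Your proposal is correct and follows essentially the same route as the paper: square the first-order operator from (\ref{LULi}), use the local constancy of $M(w,x)$ to commute $\sigma^{M}(i,j)$ past multiplication and differentiation, extract the scalar term $x_{i}\partial_{i}a_{ij}=-x_{i}x_{j}/(x_{i}-x_{j})^{2}$ and the diagonal $j=k$ part of $R_{i}^{2}$ so that they combine into the $\kappa-\sigma^{M}(i,j)$ factor, and leave the off-diagonal part as the triple sum (\ref{LU2L4}). No substantive differences.
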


\begin{proof}
In the square of formula (\ref{LULi}) group the terms as%
\begin{align*}
&  -\kappa x_{i}\partial_{i}\sum_{j\neq i}\frac{x_{\max\left(  i,j\right)  }%
}{x_{i}-x_{j}}\sigma^{M}\left(  i,j\right)  -\kappa\sum_{j\neq i}\frac
{x_{\max\left(  i,j\right)  }}{x_{i}-x_{j}}\sigma^{M}\left(  i,j\right)
x_{i}\partial_{i}\\
&  =\kappa\sum_{j\neq i}\frac{x_{i}x_{j}}{\left(  x_{i}-x_{j}\right)  ^{2}%
}\sigma^{M}\left(  i,j\right)  -\kappa\sum_{j\neq i}\frac{x_{\max\left(
i,j\right)  }}{x_{i}-x_{j}}\left\{  x_{i}\sigma^{M}\left(  i,j\right)
\partial_{j}+x_{j}\sigma^{M}\left(  i,j\right)  \partial_{i}\right\}  ,
\end{align*}
because%
\begin{align*}
\partial_{i}\left(  \sigma^{M}\left(  i,j\right)  f\right)  \left(  x\right)
&  =M\left(  \left(  i,j\right)  ,x\right)  ^{-1}\partial_{i}f\left(  x\left(
i,j\right)  \right)  =M\left(  \left(  i,j\right)  ,x\right)  ^{-1}\left(
\partial_{j}f\right)  \left(  x\left(  i,j\right)  \right) \\
&  =\sigma^{M}\left(  i,j\right)  \left(  \partial_{j}f\right)  \left(
x\right)  ,
\end{align*}
and $M\left(  w,x\right)  $ is locally constant in $x$. Next consider%
\begin{align*}
&  \kappa^{2}\sum_{j<i}\frac{x_{i}}{x_{i}-x_{j}}\sigma^{M}\left(  i,j\right)
\frac{x_{i}}{x_{i}-x_{j}}\sigma^{M}\left(  i,j\right)  +\kappa^{2}\sum
_{j>i}\frac{x_{j}}{x_{i}-x_{j}}\sigma^{M}\left(  i,j\right)  \frac{x_{j}%
}{x_{i}-x_{j}}\sigma^{M}\left(  i,j\right) \\
&  =-\kappa^{2}\sum_{j<i}\frac{x_{i}x_{j}}{\left(  x_{i}-x_{j}\right)  }%
\sigma^{M}\left(  i,j\right)  ^{2}-\kappa^{2}\sum_{j>i}\frac{x_{i}x_{j}%
}{\left(  x_{i}-x_{j}\right)  ^{2}}\sigma^{M}\left(  i,j\right)  ^{2}%
=-\kappa^{2}\sum_{j>i}\frac{x_{i}x_{j}}{\left(  x_{i}-x_{j}\right)  ^{2}},
\end{align*}
because $\sigma^{M}\left(  i,j\right)  \dfrac{x_{i}}{x_{i}-x_{j}}=\dfrac
{x_{j}}{x_{j}-x_{i}}\sigma^{M}\left(  i,j\right)  $ and $\sigma^{M}\left(
i,j\right)  ^{2}=I$ (by Proposition \ref{LwLsigma}).
\end{proof}

More detailed analysis of the terms in line (\ref{LU2L4}) shows that there are
four different coefficients of $\dfrac{\kappa^{2}}{\left(  x_{i}-x_{j}\right)
\left(  x_{j}-x_{k}\right)  }\sigma^{M}\left(  \left(  i,j\right)  \left(
i,k\right)  \right)  $ depending on the numerical order of $i,j,k$:

\begin{enumerate}
\item $x_{i}x_{j}$ if $j<k<i$ or $k<j<i$,

\item $x_{i}x_{k}$ if $j<i<k$,

\item $x_{j}x_{k}$ if $i<j<k$ or $i<k<j$,

\item $x_{j}^{2}$ if $k<i<j$.
\end{enumerate}

The next step is to sum over $1\leq i\leq N$. Lines (\ref{LU2L2},\ref{LU2L3})
sum to zero by using Formula (\ref{dblsumg1}). We show that all the terms in
line (\ref{LU2L4}) sum to zero. This is a sum over all cycles of order $3$.
Any $3$-cycle is of the form $\left(  a,b,c\right)  $ with $1\leq a<b<c\leq N$
or $1\leq a<c<b\leq N$. Each $3$-cycle appears three times in the sum since
\[
\left(  a,c\right)  \left(  a,b\right)  =\left(  b,a\right)  \left(
b,c\right)  =\left(  c,b\right)  \left(  c,a\right)  =\left(  a,b,c\right)  .
\]
If $a<b<c$ then by the above formulae the coefficient of $\kappa^{2}\sigma
^{M}\left(  \left(  a,b,c\right)  \right)  $ is
\[
\frac{x_{b}x_{c}}{\left(  x_{a}-x_{c}\right)  \left(  x_{c}-x_{b}\right)
}+\frac{x_{b}x_{c}}{\left(  x_{b}-x_{a}\right)  \left(  x_{a}-x_{c}\right)
}+\frac{x_{b}x_{c}}{\left(  x_{c}-x_{b}\right)  \left(  x_{b}-x_{a}\right)
}=0.
\]
If $a<c<b$ then the coefficient of $\kappa^{2}\sigma^{M}\left(  \left(
a,b,c\right)  \right)  $ is%
\[
\frac{x_{b}x_{c}}{\left(  x_{a}-x_{c}\right)  \left(  x_{c}-x_{b}\right)
}+\frac{x_{b}x_{a}}{\left(  x_{b}-x_{a}\right)  \left(  x_{a}-x_{c}\right)
}+\frac{x_{b}^{2}}{\left(  x_{c}-x_{b}\right)  \left(  x_{b}-x_{a}\right)
}=0.
\]
Each pair $\left\{  i,j\right\}  $ appears twice in the sum of the terms in
(\ref{LU2L1}). We have proven the following:

\begin{theorem}
\label{LHL}The $L\left(  x\right)  $ conjugate of $\sum_{i=1}^{N}\left(
\mathcal{U}_{i}-1-\kappa\gamma\right)  ^{2}$ is
\begin{align*}
\mathcal{H}_{M}  &  :=L\left(  x\right)  \sum_{i=1}^{N}\left(  \mathcal{U}%
_{i}-1-\kappa\gamma\right)  ^{2}L\left(  x\right)  ^{-1}\\
&  =\sum_{i=1}^{N}\left(  x_{i}\partial_{i}\right)  ^{2}-2\kappa\sum_{1\leq
i<j\leq N}\frac{x_{i}x_{j}}{\left(  x_{i}-x_{j}\right)  ^{2}}\left(
\kappa-\sigma^{M}\left(  i,j\right)  \right)  .
\end{align*}

\end{theorem}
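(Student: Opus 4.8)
The plan is to obtain $\mathcal{H}_M$ by summing over $i=1,\dots,N$ the single-index expansion of $L(x)\left(\mathcal{U}_i-1-\kappa\gamma\right)^2 L(x)^{-1}$ furnished by the preceding Lemma, namely the four blocks of terms (\ref{LU2L1})--(\ref{LU2L4}). Comparing against the target, everything except $\sum_i(x_i\partial_i)^2$ and the two-body interaction term must disappear after summation, so the argument reduces to two cancellations—one pairing lines (\ref{LU2L2}) and (\ref{LU2L3}), one internal to line (\ref{LU2L4})—followed by an index-counting step for line (\ref{LU2L1}).

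First I would dispose of the first-order (derivative-carrying) contributions in lines (\ref{LU2L2}) and (\ref{LU2L3}). Treating each summand as a function $g(i,j)$ of the ordered pair and applying the double-sum identity (\ref{dblsumg1}), I would combine the contribution indexed by $(i,j)$ with that indexed by $(j,i)$. Using $\sigma^M(i,j)=\sigma^M(j,i)$ (Proposition \ref{LwLsigma}) together with the sign change of $x_i-x_j$ under the swap, the $\partial_i$ and $\partial_j$ pieces of the two partners cancel exactly, so these lines contribute nothing after the outer summation.

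Next I would treat the $\kappa^2$ terms in line (\ref{LU2L4}), the heart of the argument. Every choice of three distinct indices yields a $3$-cycle $\sigma^M\bigl((i,j)(i,k)\bigr)$, and the same abstract cycle $(a,b,c)$ is reached in three ways via $(a,c)(a,b)=(b,a)(b,c)=(c,b)(c,a)$. Reading off the correct numerator $x_{\max(i,j)}x_{\max(i,k)}$ from the four-case table following the Lemma, I would assemble, for each fixed cyclic element $\kappa^2\sigma^M\bigl((a,b,c)\bigr)$, its three rational coefficients and verify they cancel; for $a<b<c$ this is the partial-fraction identity $\frac{x_b x_c}{(x_a-x_c)(x_c-x_b)}+\frac{x_b x_c}{(x_b-x_a)(x_a-x_c)}+\frac{x_b x_c}{(x_c-x_b)(x_b-x_a)}=0$, with the analogous identity checked separately for the ordering $a<c<b$. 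I expect this bookkeeping to be the main obstacle: correctly matching which of the four numerators applies to each ordering of $i,j,k$ and collapsing the three summands onto a single cyclic element is where a sign or index slip would silently spoil the cancellation.

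Finally, with lines (\ref{LU2L2})--(\ref{LU2L4}) eliminated, only line (\ref{LU2L1}) survives. The sum $\sum_{i=1}^N(x_i\partial_i)^2$ passes through unchanged, and in the interaction term the factor $\frac{x_i x_j}{(x_i-x_j)^2}\bigl(\kappa-\sigma^M(i,j)\bigr)$ is symmetric in $i$ and $j$, so each unordered pair $\{i,j\}$ is counted twice as $i$ ranges over $1,\dots,N$ with $j\neq i$. This produces the factor $2$ and the stated expression $-2\kappa\sum_{1\leq i<j\leq N}\frac{x_i x_j}{(x_i-x_j)^2}\bigl(\kappa-\sigma^M(i,j)\bigr)$, completing the identification of $\mathcal{H}_M$.
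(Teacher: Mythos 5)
Your proposal is correct and follows essentially the same route as the paper: sum the four blocks (\ref{LU2L1})--(\ref{LU2L4}) of the preceding Lemma over $i$, cancel lines (\ref{LU2L2}) and (\ref{LU2L3}) pairwise via (\ref{dblsumg1}), collapse line (\ref{LU2L4}) onto $3$-cycles using $(a,c)(a,b)=(b,a)(b,c)=(c,b)(c,a)$ and the vanishing partial-fraction sums for the orderings $a<b<c$ and $a<c<b$, and obtain the factor $2$ from the double-counting of unordered pairs in line (\ref{LU2L1}). The identities you single out are exactly the ones the paper verifies, so no further comparison is needed.
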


Thus $\mathcal{H}_{M}$ agrees with $\mathcal{H}$ when applied to $L\left(
x\right)  p\left(  x\right)  $ where $p$ is a symmetric ($\tau\left(
w\right)  p\left(  xw\right)  =p\left(  x\right)  )$ polynomial. By
Proposition \ref{Usym} the operators $L\left(  x\right)  \sum_{i=1}%
^{N}\mathcal{U}_{i}^{m}L\left(  x\right)  ^{-1}$ commute with $\mathcal{H}%
_{M}$ and with $\sigma^{M}\left(  w\right)  $ for $w\in\mathcal{S}_{N}$, for
$m=1,2,3,\ldots$ and $L\left(  x\right)  p\left(  x\right)  $ is an
eigenfunction of $\mathcal{H}_{M}$ for any NSJP $p\left(  x\right)  $.

\section{\label{Hforms}Hermitian forms and nonsymmetric Jack polynomials}

The results in this section come from \cite{D2017},\cite{DL2011},\cite{G2010}.
To obtain square-integrable and mutually orthogonal wavefunctions we start
with a Hermitian form $\left\langle \cdot,\cdot\right\rangle _{\mathbb{T}}$
for $\mathcal{P}_{\tau}$ with the properties ($f,g\in\mathcal{P}_{\tau};1\leq
i\leq N;c\in\mathbb{C};T,T^{\prime}\in\mathcal{Y}\left(  \tau\right)  $)%
\begin{align}
\left\langle 1\otimes T,1\otimes T^{\prime}\right\rangle _{\mathbb{T}}  &
=\left\langle T,T^{\prime}\right\rangle _{0},\label{Hform}\\
\left\langle f,g\right\rangle _{\mathbb{T}}  &  =\overline{\left\langle
g,f\right\rangle _{\mathbb{T}}},~\left\langle f,cg\right\rangle _{\mathbb{T}%
}=c\left\langle f,g\right\rangle _{\mathbb{T}},\nonumber\\
\left\langle wf,wg\right\rangle _{\mathbb{T}}  &  =\left\langle
f,g\right\rangle _{\mathbb{T}},~w\in\mathcal{S}_{N},\nonumber\\
\left\langle x_{i}\mathcal{D}_{i}f,g\right\rangle _{\mathbb{T}}  &
=\left\langle f,x_{i}\mathcal{D}_{i}g\right\rangle _{\mathbb{T}},\nonumber\\
\left\langle x_{i}f,x_{i}g\right\rangle _{\mathbb{T}}  &  =\left\langle
f,g\right\rangle _{\mathbb{T}}.\nonumber
\end{align}
The properties define the form uniquely and imply $\left\langle \mathcal{U}%
_{i}f,g\right\rangle _{\mathbb{T}}=\left\langle f,\mathcal{U}_{i}%
g\right\rangle _{\mathbb{T}}$ and thus the orthogonality of the NSJP's
(Theorem \ref{NSJPprop} below). The form is not defined for all $\kappa$ and
need not be positive-definite. The key results from \cite{D2017} (recall the
maximum hook-length $h_{\tau}$ from (\ref{hllt})) are:

\begin{theorem}
Suppose $-1/h_{\tau}<\kappa<1/h_{\tau}$ and $L_{0}\left(  x\right)  $ is the
solution of (\ref{Lsys}) satisfying $L_{0}\left(  x_{0}\right)  =I$ and
extended to $\mathbb{T}_{reg}^{N}$ by (\ref{TregL}) then there exists a unique
positive-definite matrix $B$ such that $B\tau\left(  w_{0}\right)
=\tau\left(  w_{0}\right)  B$ and%
\[
\left\langle f,g\right\rangle _{\mathbb{T}}=\int_{\mathbb{T}^{N}}f\left(
x\right)  ^{\ast}L_{0}\left(  x\right)  ^{\ast}BL_{0}\left(  x\right)
g\left(  x\right)  \mathrm{d}m\left(  x\right)  .
\]

\end{theorem}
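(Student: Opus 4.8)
The plan is to verify that the right-hand side, for a suitably chosen $B$, is a Hermitian form on $\mathcal{P}_\tau$ satisfying every property in (\ref{Hform}); since those properties pin down the form uniquely, it must then equal $\langle\cdot,\cdot\rangle_{\mathbb{T}}$. Concretely I would fix a positive-definite $B$ commuting with $\tau(w_0)$, put $W(x):=L_0(x)^{\ast}BL_0(x)$, and set $\langle f,g\rangle_B':=\int_{\mathbb{T}^N}f(x)^{\ast}W(x)g(x)\,\mathrm{d}m(x)$. The first point to settle is convergence: near a coincidence set $\{x_i=x_j\}$ the fundamental solution $L_0$ behaves like a power of $x_i-x_j$ whose exponent is an eigenvalue of the residue $\kappa\tau(i,j)$, and near a stratum where a whole block of coordinates collides the exponents are eigenvalues of $\kappa\sum\tau(i,j)$ restricted to that block. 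The hypothesis $-1/h_\tau<\kappa<1/h_\tau$ is exactly the bound keeping every such exponent large enough that $W$ is locally integrable, which is why the maximal hook-length $h_\tau$ enters.

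Two of the five properties are then immediate. The isometry $\langle x_if,x_ig\rangle_B'=\langle f,g\rangle_B'$ holds because $|x_i|=1$ on $\mathbb{T}^N$. For $\mathcal{S}_N$-invariance I would use $L_0(xw)=M(w,x)L_0(x)\tau(w)$ from (\ref{MMMw}): the substitution $y=xw$, legitimate since $\mathrm{d}m$ is invariant under permutation of coordinates, turns $\langle wf,wg\rangle_B'$ into an integral with weight $L_0(y)^{\ast}\big(M(w,yw^{-1})^{-\ast}BM(w,yw^{-1})^{-1}\big)L_0(y)$. Because $M(w,x)$ is a power of the orthogonal matrix $\tau(w_0)$ and $B$ commutes with $\tau(w_0)$, the inner factor collapses to $B$; this is precisely the step that forces $B\tau(w_0)=\tau(w_0)B$.

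The crux is the self-adjointness of $x_i\mathcal{D}_i$. Since transpositions are self-adjoint for any invariant form and $\mathcal{U}_i=x_i\mathcal{D}_i+1+\kappa\omega_i$ by (\ref{UDx}), this is equivalent to self-adjointness of $\mathcal{U}_i$, and conjugating by $L_0$ via (\ref{LULi}) it becomes the statement that $\mathcal{V}_i:=x_i\partial_i-\kappa\sum_{j\ne i}\frac{x_{\max(i,j)}}{x_i-x_j}\sigma^M(i,j)$ is self-adjoint for the constant-weight form $\int F^{\ast}BG\,\mathrm{d}m$ on functions $F=L_0f$. I would handle $x_i\partial_i=-\mathrm{i}\,\partial_{\theta_i}$ by integrating by parts in $\theta_i$: the identification at $\theta_i=\pm\pi$ contributes nothing by periodicity, but each hyperplane $\theta_i=\theta_j$ is an interior singularity of $F=L_0f$ and yields a boundary contribution. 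The heart of the argument is that these contributions cancel exactly against the non-self-adjoint part of the singular reflection terms $\sigma^M(i,j)$, using $\sigma^M(i,j)\frac{x_i}{x_i-x_j}=\frac{x_j}{x_j-x_i}\sigma^M(i,j)$, the $B$-orthogonality of $\sigma^M(i,j)$, and the torus identity $\overline{x_{\max}/(x_i-x_j)}=-x_{\min}/(x_i-x_j)$. This cancellation fails for an arbitrary $B$: it is the local matching condition at $\{x_i=x_j\}$ that, together with the commutation with $\tau(w_0)$, singles out $B$. Making it rigorous — controlling the $|x_i-x_j|^{-|\kappa|}$ singularities and showing the would-be divergent pieces combine into a convergent, vanishing total — is the main obstacle, and it is exactly where the restriction on $\kappa$ is used.

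Finally I would pin down $B$ and dispatch the remaining property. The restriction of $\langle\cdot,\cdot\rangle_B'$ to the constants $1\otimes V_\tau$ is an $\mathcal{S}_N$-invariant form on the irreducible module $V_\tau$, hence a positive multiple of $\langle\cdot,\cdot\rangle_0$ by Schur's lemma; the matching condition together with $B\tau(w_0)=\tau(w_0)B$ leaves $B$ free only up to a positive scalar, which is fixed by normalizing that multiple to $1$, giving (\ref{Hform}). Uniqueness of $B$ is then clean: if two weights $L_0^{\ast}B_1L_0$ and $L_0^{\ast}B_2L_0$ produce the same pairing, taking $f=x^{\alpha}\otimes T$ and $g=x^{\beta}\otimes T'$ shows every Fourier coefficient of $L_0^{\ast}(B_1-B_2)L_0$ vanishes, since the differences $\beta-\alpha$ exhaust $\mathbb{Z}^N$; hence $L_0^{\ast}(B_1-B_2)L_0\equiv0$ and, $L_0$ being nonsingular, $B_1=B_2$. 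Positive-definiteness of $B$ is read off at a regular point, where $W\approx L_0^{\ast}BL_0$ with $L_0$ invertible, via a concentration argument: choosing $f$ so that $L_0f$ approximates $v$ times a bump forces $v^{\ast}Bv\ge0$. Assembling these steps, $\langle\cdot,\cdot\rangle_B'$ satisfies all of (\ref{Hform}) and therefore coincides with $\langle\cdot,\cdot\rangle_{\mathbb{T}}$.
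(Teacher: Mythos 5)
The paper does not actually prove this theorem: it is imported verbatim from \cite{D2017} (``The key results from \cite{D2017} \ldots are''), so there is no in-paper argument to compare against. Judged on its own terms, your proposal has the right architecture --- the five properties in (\ref{Hform}) determine the form uniquely, so it suffices to exhibit a weight $L_{0}(x)^{\ast}BL_{0}(x)$ satisfying all of them --- and several of the pieces are sound: the $x_i$-isometry is trivial on the torus, the $\mathcal{S}_N$-invariance computation via $L_{0}(xw)=M(w,x)L_{0}(x)\tau(w)$ correctly isolates why $B$ must commute with $\tau(w_{0})$ (since $M$ is a power of the orthogonal matrix $\tau(w_{0})$, $M^{-\ast}BM^{-1}=B$), and the Fourier-coefficient argument for uniqueness of $B$ is clean.

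The genuine gap is that the existence of $B$ is never established. You correctly identify that an arbitrary positive-definite $B$ commuting with $\tau(w_{0})$ will \emph{not} make $x_{i}\mathcal{D}_{i}$ self-adjoint, and that $B$ is singled out by a local matching condition at the walls $\{x_{i}=x_{j}\}$ --- but then you stop, conceding that making this rigorous ``is the main obstacle.'' That obstacle is the entire content of the theorem. What must be shown is that the matching conditions coming from the different walls (equivalently, from the boundary terms in the integration by parts against the local expansion $L_{0}\sim(x_{1}\cdots x_{N})^{-\gamma\kappa}\rho(z^{-\kappa},z^{\kappa})\sum_{n}\alpha_{n}(x')z^{n}$ with $\sigma\alpha_{n}\sigma=(-1)^{n}\alpha_{n}$) are mutually consistent and admit a \emph{positive-definite} solution $B$ precisely when $|\kappa|<1/h_{\tau}$; this consistency is a nontrivial statement about the monodromy of the system (\ref{Lsys}) and is where the hook-length bound really enters, not (as you suggest) merely through local integrability of $W$ --- near a generic wall integrability only needs $|\kappa|<1/2$. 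Two smaller points: your concentration argument yields only $v^{\ast}Bv\geq 0$, so positive \emph{definiteness} needs an extra step (e.g.\ nondegeneracy of the form on constants, or the positivity of the $\|\zeta_{\alpha,T}\|^{2}$); and the assertion that the matching condition plus commutation with $\tau(w_{0})$ leaves $B$ free only up to a positive scalar is stated, not proved. As it stands the proposal is a plausible roadmap to the proof in \cite{D2017}, but not a proof.
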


Each $f\in\mathcal{P}_{\tau}$ has the expansion $\sum_{T\in\mathcal{Y}\left(
\tau\right)  }\left\langle T,T\right\rangle ^{-1/2}f_{T}\left(  x\right)
\otimes T$ with $f_{T}\in\mathcal{P}$ and $f\left(  x\right)  $ is considered
as a column vector $\left[  f_{T}\right]  _{T\in\mathcal{Y}\left(
\tau\right)  }$ in the integral formula. It is implicit in the theorem that
$L_{0}\left(  x\right)  ^{\ast}BL_{0}\left(  x\right)  $ is integrable, and
$B$ depends on $\kappa$. Henceforth we use $\left\Vert p\right\Vert
^{2}:=\left\langle p,p\right\rangle _{\mathbb{T}}$ (which need not be positive
for $\kappa$ outside the above interval).

There is a unique positive-definite matrix $C$ such that $C^{2}=B$; as a
consequence $C$ commutes with $\tau\left(  w_{0}\right)  $ (because there is
real polynomial $r\left(  t\right)  $ such that $r\left(  B\right)  =C$). We
apply the results of the previous section to%
\begin{equation}
L\left(  x\right)  :=CL_{0}\left(  x\right)  , \label{goodL}%
\end{equation}
and the integral formula becomes%
\[
\int_{\mathbb{T}^{N}}\left\{  L\left(  x\right)  f\left(  x\right)  \right\}
^{\ast}L\left(  x\right)  g\left(  x\right)  \mathrm{d}m\left(  x\right)
=\left\langle f,g\right\rangle _{\mathbb{T}}.
\]
Here is an outline of the structure and properties of NSJP's: The operators
$\mathcal{U}_{i}$ have a triangularity property with respect to a partial
order on $\mathbb{N}_{0}^{N}$. For $\alpha\in\mathbb{N}_{0}^{N}$ let
$\alpha^{+}$ denote the nonincreasing rearrangement of $\alpha$ so that
$\alpha^{+}$ is a partition.

\begin{definition}
The dominance order $\prec$ and the derived order $\vartriangleleft$ on
$\mathbb{N}_{0}^{N}$ are given by (i) $\alpha\prec\beta$ if and only if
$\sum_{j=1}^{i}\alpha_{j}\leq\sum_{j=1}^{i}\beta_{j},$ for$~1\leq i\leq N$
and$~\alpha\neq\beta$; (ii) $\alpha\vartriangleleft\beta$ if and only if
$\left\vert \alpha\right\vert =\left\vert \beta\right\vert $, $\alpha^{+}%
\prec\beta^{+}$ ,or $\alpha^{+}=\beta^{+}$ and $\alpha\prec\beta$.
\end{definition}

For example: $\left(  3,1,1\right)  \vartriangleleft\left(  0,2,4\right)
\vartriangleleft$ $\left(  4,0,2\right)  $; while $\left(  4,1,1\right)
,\left(  3,3,0\right)  $ are not $\vartriangleleft$-comparable.

The NSJP's are labeled by pairs $\left(  \alpha,T\right)  \in\mathbb{N}%
_{0}^{N}\times\mathcal{Y}\left(  \tau\right)  $ but the leading term involves
a twist.

\begin{definition}
For $\alpha\in\mathbb{N}_{0}^{N}$ the rank function on $\left\{
1,\ldots,N\right\}  $ is given by%
\[
r_{\alpha}\left(  i\right)  =\#\left\{  j:\alpha_{j}>\alpha_{i}\right\}
+\#\left\{  j:1\leq j\leq i,\alpha_{j}=\alpha_{i}\right\}  ,
\]
then $r_{\alpha}\in\mathcal{S}_{N}$ and $r_{\alpha}\alpha=\alpha^{+}$ the
nonincreasing rearrangement of $\alpha.$
\end{definition}

For example if $\alpha=\left(  1,2,1,4\right)  $ then $r_{\alpha}=\left[
3,2,4,1\right]  $ and $r_{\alpha}\alpha=\alpha^{+}=\left(  4,2,1,1\right)  $
(recall $w\alpha_{i}=\alpha_{w^{-1}\left(  i\right)  }$ ).

\begin{theorem}
\label{NSJPprop}For $\left(  \alpha,T\right)  \in\mathbb{N}_{0}^{N}%
\times\mathcal{Y}\left(  \tau\right)  $ and for all $\kappa$ except for a
discrete subset of $\mathbb{Q}$ there is a unique simultaneous eigenfunction
$\zeta_{\alpha,T}\in\mathcal{P}_{\tau}$ of $\left\{  \mathcal{U}_{i}\right\}
$, homogeneous of degree $\left\vert \alpha\right\vert $, such that%
\begin{align*}
\zeta_{\alpha,T}  &  =x^{\alpha}\otimes\tau\left(  r_{\alpha}^{-1}\right)
T+\sum_{\beta\vartriangleleft\alpha}x^{\beta}\otimes t_{\alpha\beta}\left(
\kappa\right)  ,t_{\alpha\beta}\left(  \kappa\right)  \in V_{\tau},\\
\mathcal{U}_{i}\zeta_{\alpha,T}  &  =\left(  \alpha_{i}+1+\kappa c\left(
r_{\alpha}\left(  i\right)  ,T\right)  \right)  \zeta_{\alpha,T},~1\leq i\leq
N.
\end{align*}

\end{theorem}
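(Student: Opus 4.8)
The plan is to exploit the explicit form $\mathcal{U}_i = x_i\mathcal{D}_i + 1 + \kappa\omega_i$ from (\ref{UDx}) to show that each $\mathcal{U}_i$ is triangular with respect to $\vartriangleleft$ on the monomial basis $\{x^\beta\otimes T\}$, and then to diagonalize the common ``diagonal block'' using the Jucys--Murphy action. First I would compute $\mathcal{U}_i(x^\alpha\otimes T)$ termwise. The operator $x_i\partial_i$ contributes the scalar $\alpha_i$ on the diagonal; applying $x_i$ to the divided difference $\frac{x^\alpha - x^{(i,j)\alpha}}{x_i - x_j}$ produces, for $\alpha_i > \alpha_j$, the leading monomial $x^\alpha$ together with ``more balanced'' monomials $x^\gamma$ whose sorted rearrangement satisfies $\gamma^+ \prec \alpha^+$, hence $\gamma\vartriangleleft\alpha$; for $\alpha_i < \alpha_j$ the unwanted leading monomial $x^{(i,j)\alpha}$ (which is $\vartriangleright\alpha$) appears with coefficient $-\kappa$ and is cancelled exactly by the matching term of $\kappa\omega_i$. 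The one routine check is that, after this cancellation, every surviving off-diagonal monomial is strictly $\vartriangleleft\alpha$.

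Collecting the coefficient of $x^\alpha$, I would identify the diagonal block operator on $V_\tau$ as $(\alpha_i + 1)I + \kappa\Omega_i^\alpha$, where $\Omega_i^\alpha := \sum_{j:\alpha_j<\alpha_i}\tau(i,j) + \sum_{j>i,\,\alpha_j=\alpha_i}\tau(i,j)$. The next step is to recognize $\Omega_i^\alpha$ as a conjugated Jucys--Murphy element: since the rank function satisfies $\{m: r_\alpha(m) > r_\alpha(i)\} = \{m:\alpha_m<\alpha_i\}\cup\{m>i:\alpha_m=\alpha_i\}$, one gets $\Omega_i^\alpha = \tau(r_\alpha)^{-1}\omega_{r_\alpha(i)}\tau(r_\alpha)$. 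Consequently $\tau(r_\alpha^{-1})T$ is an eigenvector, with $\Omega_i^\alpha\,\tau(r_\alpha^{-1})T = c(r_\alpha(i),T)\,\tau(r_\alpha^{-1})T$ by the content property of $\omega_k$; the operators $\{\Omega_i^\alpha\}_i$ commute (being simultaneous conjugates of the commuting $\omega_k$) and are diagonalized by $\{\tau(r_\alpha^{-1})T\}_{T\in\mathcal{Y}(\tau)}$. This yields exactly the claimed leading term and the spectral vector with components $\xi_{\alpha,T,i}(\kappa) := \alpha_i + 1 + \kappa\, c(r_\alpha(i),T)$.

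With triangularity and the diagonal spectrum in hand, existence and uniqueness follow from a standard triangular solve: working in the finite-dimensional space of $V_\tau$-valued homogeneous polynomials of degree $|\alpha|$ (which $\mathcal{U}_i$ preserves), order labels by $\vartriangleleft$ and solve $(\mathcal{U}_i - \xi_{\alpha,T,i}(\kappa)I)\zeta_{\alpha,T} = 0$ recursively for the coefficients $t_{\alpha\beta}$, starting from the prescribed leading term $x^\alpha\otimes\tau(r_\alpha^{-1})T$. Each step requires inverting $\mathcal{U}_i - \xi_{\alpha,T,i}(\kappa)I$ on a block labelled by some $(\beta,S)$ with $\beta\vartriangleleft\alpha$, which is possible precisely when $\xi_{\alpha,T,i}\ne\xi_{\beta,S,i}$ for some $i$.

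The main obstacle is therefore the spectral separation: I must show that distinct labels yield distinct spectral vectors for all but a discrete set of rational $\kappa$. The $i$-th coordinate difference $\xi_{\alpha,T,i} - \xi_{\beta,S,i} = (\alpha_i - \beta_i) + \kappa\,(c(r_\alpha(i),T) - c(r_\beta(i),S))$ is affine in $\kappa$ with integer coefficients. If $\xi_{\alpha,T,i}\equiv\xi_{\beta,S,i}$ for every $i$ as functions of $\kappa$, then matching constant terms gives $\alpha = \beta$, hence $r_\alpha = r_\beta$ and equal content vectors, which forces $T = S$ because a tableau is determined by its content vector. Thus for $(\alpha,T)\ne(\beta,S)$ some coordinate difference is a nonzero affine function, vanishing only at the single rational value $\kappa = (\beta_i-\alpha_i)/(c(r_\alpha(i),T)-c(r_\beta(i),S))$; the union over the finitely many labels of each fixed degree and over $1\le i\le N$ gives the discrete exceptional subset of $\mathbb{Q}$. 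Off this set the recursion is well defined and determines $\zeta_{\alpha,T}$ uniquely, while homogeneity of degree $|\alpha|$ is automatic since every $\mathcal{U}_i$ preserves degree.
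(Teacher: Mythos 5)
The paper does not prove this theorem itself---it is imported from Griffeth \cite{G2010} and Dunkl--Luque \cite{DL2011}---and your argument is precisely the standard triangularity-plus-spectral-separation proof used there: the computation of the diagonal block $\left(  \alpha_{i}+1\right)  I+\kappa\Omega_{i}^{\alpha}$, its identification as the conjugated Jucys--Murphy element $\tau\left(  r_{\alpha}\right)  ^{-1}\omega_{r_{\alpha}\left(  i\right)  }\tau\left(  r_{\alpha}\right)  $, and the separation of spectral vectors off a discrete set of rationals are all correct (discreteness holds because the content differences appearing in the denominators are bounded by $h_{\tau}-1$, so the exceptional values lie in $\frac{1}{\left(  h_{\tau}-1\right)  !}\mathbb{Z}$). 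The only thin spot is existence: solving $\left(  \mathcal{U}_{i}-\xi_{\alpha,T,i}\right)  \zeta=0$ coefficient by coefficient is an overdetermined system ($N$ equations per unknown block), and its consistency requires the commutativity (\ref{commU}) of the $\mathcal{U}_{i}$---for instance by simultaneously triangularizing them on each homogeneous component, or by diagonalizing a generic linear combination $\sum_{i}c_{i}\mathcal{U}_{i}$, whose one-dimensional eigenspaces are then automatically joint eigenspaces; with that remark supplied the proof is complete.
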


The $\zeta_{\alpha,T}$ are called nonsymmetric Jack polynomials. The condition
on $\kappa$ for existence is satisfied if each pair $\left(  \alpha,T\right)
$ is determined by its \textit{spectral vector} $\xi_{\alpha,T}:=\left[
\alpha_{i}+1+\kappa c\left(  r_{\alpha}\left(  i\right)  ,T\right)  \right]
_{i=1}^{N}$; this includes the interval $-1/h_{\tau}\leq\kappa\leq1/h_{\tau}$.
There is an algorithmic approach to the construction based on the Yang-Baxter
(directed) graph. The edges involve the adjacent transpositions $s_{i}$, which
act by transposition on the spectral vector, and a degree-raising operation
which shifts and increments the spectral vector. The nodes of the graph are of
the form%
\[
\left(  \alpha,T,\xi_{\alpha,T},r_{\alpha},\zeta_{\alpha,T}\right)  ,
\]
(abbreviated to $\left(  \alpha,T\right)  $) the root is $\left(  0^{N}%
,T_{0},\left[  1+\kappa c\left(  i,T_{0}\right)  \right]  _{i=1}%
^{N},I,1\otimes T_{0}\right)  $ where $T_{0}$ is formed by entering
$N,N-1,\ldots,1$ column-by-column in the Ferrers diagram. The degree-raising
edge uses the map $\Phi:\left(  c_{1},c_{2},\ldots,c_{N}\right)
\rightarrow\left(  c_{2},c_{3},\ldots,c_{N},c_{1}+1\right)  $ on $N$-tuples.
It is called an\textit{ affine step }and is defined by%
\[
\left(  \alpha,T,\xi_{\alpha,T},\alpha,r_{\alpha},\zeta_{\alpha,T}\right)
\overset{\Phi}{\longrightarrow}\left(  \Phi\alpha,T,\Phi\xi_{\alpha
,T},r_{\alpha}w_{0},\zeta_{\Phi\alpha,T}\right)  ,
\]%
\[
\zeta_{\Phi\alpha,T}=x_{N}w_{0}^{-1}\zeta_{\alpha,T};
\]
(recall $w_{0}=\left(  1,2,\ldots,N\right)  $, an $N$-cycle) the leading term
is $x^{\Phi\alpha}\otimes\tau\left(  w_{0}^{-1}r_{\alpha}^{-1}\right)  T$ and
$w_{0}^{-1}r_{\alpha}^{-1}=r_{\Phi\alpha}^{-1}$ because $r_{\Phi\alpha
}=r_{\alpha}w_{0}$ for any $\alpha$: $r_{\alpha}w_{0}\left(  i\right)
=r_{\alpha}\left(  w_{0}\left(  i\right)  \right)  =r_{\alpha}\left(
i+1\right)  $ for $1\leq i<N$, $r_{\alpha}w_{0}\left(  N\right)  =r_{\alpha
}\left(  1\right)  $. For example: $\alpha=\left(  0,3,5,0\right)  $,
$r_{\alpha}=\left[  3,2,1,4\right]  $; $\Phi\alpha=\left(  3,5,0,1\right)  $,
$r_{\Phi\alpha}=\left[  2,1,4,3\right]  $.

The other edges are called steps or jumps, both labeled by $s_{i}$: the
formulae for both rely on the commutation (\ref{commU}) and the coefficient
$b$ is determined by the condition that $s_{i}\zeta_{\alpha,T}-b\zeta
_{\alpha,T}$ is an eigenfunction of $\mathcal{U}_{i}$.

If $\alpha_{i}<\alpha_{i+1}$, then the step $s_{i}$ is%
\[
\left(  \alpha,T,\xi_{\alpha,T},r_{\alpha},\zeta_{\alpha,T}\right)
\overset{s_{i}}{\longrightarrow}\left(  s_{i}\alpha,T,s_{i}\xi_{\alpha
,T},r_{\alpha}s_{i},\zeta_{s_{i}\alpha,T}\right)
\]%
\[
\zeta_{s_{i}\alpha,T}=s_{i}\zeta_{\alpha,T}-\frac{\kappa}{\xi_{\alpha
,T}\left(  i\right)  -\xi_{\alpha,T}\left(  i+1\right)  }\zeta_{\alpha,T}.
\]

If $\alpha_{i}=\alpha_{i+1}$, set $j=r_{\alpha}\left(  i\right)  $, so that
$j+1=r_{\alpha}\left(  i+1\right)  $ and $s_{i}r_{\alpha}^{-1}=r_{\alpha}%
^{-1}s_{j}$. Thus $\xi_{\alpha,T}\left(  i\right)  =\alpha_{i}+1+\kappa
c\left(  j,T\right)  $ and $\xi_{\alpha,T}\left(  i+1\right)  =\alpha
_{i}+1+\kappa c\left(  j+1,T\right)  $. Set%
\[
b^{\prime}=\frac{1}{c\left(  j,T\right)  -c\left(  j+1,T\right)  };
\]
If $b^{\prime}=1$ \{$\mathrm{rw}\left(  j,T\right)  =\mathrm{rw}\left(
j+1,T\right)  $\} or $-1$ \{$\mathrm{cm}\left(  j,T\right)  =\mathrm{cm}%
\left(  j+1,T\right)  $\} then $s_{i}\zeta_{\alpha,T}=\zeta_{\alpha,T}$ or
$-\zeta_{\alpha,T}$ respectively. Otherwise let $T^{\left(  j\right)  }$
denote the result of interchanging $j$ and $j+1$ in $T$. If $0<b^{\prime}%
\leq\frac{1}{2}$, that is, $\mathrm{rw}\left(  j,T\right)  <\mathrm{rw}\left(
j+1,T\right)  $ (and $\mathrm{cm}\left(  j,T\right)  >\mathrm{cm}\left(
j+1,T\right)  $; if one takes the $\left(  1,1\right)  $ cell of $T$ as
northwest then $j$ is northeast of $j+1$) then the jump $s_{i}$ is
(\textquotedblleft jump\textquotedblright\ suggests jumping from one tableau
to another)
\[
\left(  \alpha,T,\xi_{\alpha,T},r_{\alpha},\zeta_{\alpha,T}\right)
\overset{s_{i}}{\longrightarrow}\left(  \alpha,T^{\left(  j\right)  },s_{i}%
\xi_{\alpha,T},r_{\alpha},\zeta_{\alpha,T^{\left(  j\right)  }}\right)  ,
\]%
\begin{equation}
\zeta_{\alpha,T^{\left(  j\right)  }}=s_{i}\zeta_{\alpha,T}-b^{\prime}%
\zeta_{\alpha,T}, \label{jumpT}%
\end{equation}
The leading term is transformed $s_{i}\left(  x^{\alpha}\otimes\tau\left(
r_{\alpha}^{-1}\right)  T\right)  =\left(  xs_{i}\right)  ^{\alpha}\otimes
\tau\left(  s_{i}r_{\alpha}^{-1}\right)  T=x^{\alpha}\otimes\tau\left(
r_{\alpha}^{-1}\right)  \tau\left(  s_{j}\right)  T$ and $\tau\left(
s_{j}\right)  T=T^{\left(  j\right)  }+b^{\prime}T$. The jump applies to the
situation $\alpha=0^{N}$ and provides the transformation formulae for $s_{i}$
acting on $T\in\mathcal{Y}\left(  \tau\right)  $ (that is, on $1\otimes T$ and
$r_{\boldsymbol{\alpha}}\left(  i\right)  =i$).

\begin{example}
\label{ex(2,1)}Let $N=3,\tau=\left(  2,1\right)  $ and $T_{0}=%
\begin{array}
[c]{cc}%
3 & 1\\
2 &
\end{array}
,T_{1}=%
\begin{array}
[c]{cc}%
3 & 2\\
1 &
\end{array}
$, and consider $\left(  \alpha,T\right)  =\left(  \left(  0,1,1\right)
,T_{0}\right)  .$ Then $r_{\alpha}=\left[  3,1,2\right]  $ and $\xi
_{\alpha,T_{0}}=\left[  1,2+\kappa,2-\kappa\right]  $. The step $s_{1}$ is
$\zeta_{\left(  1,0,1\right)  ,T_{0}}=s_{1}\zeta_{\alpha,T_{0}}+\dfrac{\kappa
}{1+\kappa}\zeta_{\alpha,T_{0}}$; for the jump $s_{2}$ one finds
$j=1,b^{\prime}=\frac{1}{2}$ and $\zeta_{\left(  0,1,1\right)  ,T_{1}}%
=s_{2}\zeta_{\alpha,T_{0}}-\frac{1}{2}\zeta_{\alpha,T_{0}}$. Note
$\xi_{\left(  0,1,1\right)  ,T_{1}}=\left[  1,2-\kappa,2+\kappa\right]  $.
\end{example}

The hypotheses on the Hermitian form (\ref{Hform}) imply%
\[
\xi_{\alpha,T}\left(  i\right)  \left\langle \zeta_{\alpha,T},\zeta
_{\beta,T^{\prime}}\right\rangle _{\mathbb{T}}=\left\langle \mathcal{U}%
_{i}\zeta_{\alpha,T},\zeta_{\beta,T^{\prime}}\right\rangle _{\mathbb{T}%
}=\left\langle \zeta_{\alpha,T},\mathcal{U}_{i}\zeta_{\beta,T^{\prime}%
}\right\rangle _{\mathbb{T}}=\xi_{\beta,T^{\prime}}\left(  i\right)
\left\langle \zeta_{\alpha,T},\zeta_{\beta,T^{\prime}}\right\rangle
_{\mathbb{T}}%
\]
and thus $\left\langle \zeta_{\alpha,T},\zeta_{\beta,T^{\prime}}\right\rangle
_{\mathbb{T}}=0$, (for permitted values of $\kappa$). The orthogonality
provides an inductive process for computing $\left\langle \zeta_{\alpha
,T},\zeta_{\alpha,T}\right\rangle _{\mathbb{T}}$: for the step $s_{i}$ with
$\alpha_{i}<\alpha_{i+1}$ we have $s_{i}\zeta_{\alpha,T}=\zeta_{s_{i}\alpha
,T}+b\zeta_{\alpha,T}$ (where $b=\frac{\kappa}{\xi_{\alpha,T}\left(  i\right)
-\xi_{\alpha,T}\left(  i+1\right)  }$) and%
\begin{align*}
\left\Vert \zeta_{\alpha,T}\right\Vert ^{2}  &  =\left\Vert s_{i}\zeta
_{\alpha,T}\right\Vert ^{2}=\left\Vert \zeta_{s_{i}\alpha,T}\right\Vert
^{2}+b^{2}\left\Vert \zeta_{\alpha,T}\right\Vert ^{2},\\
\left\Vert \zeta_{s_{i}\alpha,T}\right\Vert ^{2}  &  =\left(  1-b^{2}\right)
\left\Vert \zeta_{\alpha,T}\right\Vert ^{2}.
\end{align*}
A similar formula holds for the jump ($\alpha_{i}=\alpha_{i+1}$). For the
affine step, the hypotheses (\ref{Hform}) imply $\left\Vert \zeta_{\Phi
\alpha,T}\right\Vert ^{2}=\left\Vert \zeta_{\alpha,T}\right\Vert ^{2}$.
Together with $\left\langle 1\otimes T,1\otimes T^{\prime}\right\rangle
_{\mathbb{T}}=\left\langle T,T^{\prime}\right\rangle _{0}$ this procedure
leads to formulae for all $\left\Vert \zeta_{\alpha,T}\right\Vert ^{2}$.

\begin{theorem}
\label{Pnorm1}For $\lambda\in\mathbb{N}_{0}^{N,+}$ ($\lambda_{1}\geq
\lambda_{2}\ldots\geq\lambda_{N}$) and $T\in\mathcal{Y}\left(  \tau\right)  $%
\[
\left\Vert \zeta_{\lambda,T}\right\Vert ^{2}=\left\langle T,T\right\rangle
_{0}\prod_{1\leq i<j\leq N}\prod_{\ell=1}^{\lambda_{i}-\lambda_{j}}\left(
1-\left(  \frac{\kappa}{\ell+\kappa\left(  c\left(  i,T\right)  -c\left(
j,T\right)  \right)  }\right)  ^{2}\right)  .
\]
\newline
\end{theorem}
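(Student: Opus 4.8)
The plan is to prove the formula by induction, building the partition $\lambda$ one column at a time and tracking the squared norm through the Yang-Baxter graph while keeping $T$ fixed throughout. The base case $\lambda=0^{N}$ is immediate: the product is empty and $\left\Vert 1\otimes T\right\Vert ^{2}=\left\langle 1\otimes T,1\otimes T\right\rangle _{\mathbb{T}}=\left\langle T,T\right\rangle _{0}$ by the first property in (\ref{Hform}). I would write $\lambda=\sum_{c=1}^{\lambda_{1}}\left(1^{\lambda_{c}^{\prime}},0^{N-\lambda_{c}^{\prime}}\right)$ as a sum of columns, where $\lambda_{c}^{\prime}=\#\{i:\lambda_{i}\geq c\}$ are the weakly decreasing column heights; adding these columns in the order $c=1,2,\ldots$ keeps every partial sum a partition. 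Thus it suffices to establish the single recursion: if $\mu$ is a partition and $1\leq k\leq N$, then passing from $\mu$ to $\mu+(1^{k},0^{N-k})$ multiplies the squared norm by $\prod_{i\leq k<j}\left(1-\left(\frac{\kappa}{(\mu_{i}-\mu_{j}+1)+\kappa(c(i,T)-c(j,T))}\right)^{2}\right)$, which is exactly the ratio of the claimed products for $\mu+(1^{k})$ and for $\mu$, since only the pairs $i\leq k<j$ change their difference, each gaining the factor with $\ell=\mu_{i}-\mu_{j}+1$.

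To realize this recursion in the graph I would first apply the affine step $\Phi$ exactly $k$ times. Since $\mu$ is a partition, $r_{\mu}$ is the identity and the spectral vector is $[\mu_{m}+1+\kappa c(m,T)]_{m}$; after $\Phi^{k}$ the parts originally in positions $1,\ldots,k$ have each been incremented once and cycled to the rear, giving the composition $(\mu_{k+1},\ldots,\mu_{N},\mu_{1}+1,\ldots,\mu_{k}+1)$, with the norm unchanged because each affine step preserves it. I would then sort this composition back to partition order. Every incremented part $\mu_{i}+1$ (with $i\leq k$) is strictly larger than every non-incremented part $\mu_{j}$ (with $j>k$), because $\mu_{i}\geq\mu_{j}$; hence all comparisons made while moving the block of incremented parts to the front are strict, so only genuine \emph{steps} (never jumps) occur and $T$ stays fixed. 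Each step resolves one inversion, i.e. transposes one pair $(i\leq k,\,j>k)$, and the full sorting resolves each such pair exactly once.

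The heart of the computation is to identify the factor $1-b^{2}$ contributed by each step. The spectral value carried by a fixed part is invariant under the sorting steps, which merely permute the entries of the spectral vector; so throughout the sorting the incremented part from position $i$ carries $\xi_{\mathrm{big}}=(\mu_{i}+1)+1+\kappa c(i,T)$ and the non-incremented part from position $j$ carries $\xi_{\mathrm{small}}=\mu_{j}+1+\kappa c(j,T)$. When they become adjacent (small on the left, big on the right) the step coefficient is $b=\kappa/(\xi_{\mathrm{small}}-\xi_{\mathrm{big}})$, and since $\xi_{\mathrm{small}}-\xi_{\mathrm{big}}=-\big((\mu_{i}-\mu_{j}+1)+\kappa(c(i,T)-c(j,T))\big)$ one gets $1-b^{2}=1-\left(\frac{\kappa}{(\mu_{i}-\mu_{j}+1)+\kappa(c(i,T)-c(j,T))}\right)^{2}$, precisely the required factor. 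Multiplying over all steps yields the claimed ratio, completing the induction.

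I expect the main obstacle to be the bookkeeping of the middle paragraph: verifying rigorously that the sorting uses only steps and never jumps, that each pair $(i,j)$ with $i\leq k<j$ is transposed exactly once, and that the spectral value attached to each part is genuinely carried unchanged through the $\Phi$-iterations and the sorting, so that the order in which the steps are performed is immaterial and the factors simply multiply. Once this is in place, the factor-by-factor matching of the last paragraph is forced by the step and affine-step formulas of Section \ref{Hforms}, and the sign or positive-definiteness of $\langle T,T\rangle_{0}$ plays no role, the argument being purely algebraic in $\kappa$.
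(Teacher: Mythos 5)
Your proposal is correct and follows essentially the same route the paper indicates: the squared norm is propagated through the Yang--Baxter graph from the base case $\left\Vert 1\otimes T\right\Vert ^{2}=\left\langle T,T\right\rangle _{0}$, using that affine steps preserve the norm and each step multiplies it by $1-b^{2}$, which is precisely the procedure the paper describes just before Theorem \ref{Pnorm1} (the detailed verification being delegated to \cite{D2017,DL2011,G2010}). Your column-by-column organization via $\Phi^{k}$ followed by a jump-free sort is a clean concrete realization of that recursion, and the factor identification $\xi_{\mathrm{small}}-\xi_{\mathrm{big}}=-\bigl((\mu_{i}-\mu_{j}+1)+\kappa(c(i,T)-c(j,T))\bigr)$ is exactly right.
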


There is an additional factor for nonpartition indices.

\begin{definition}
For $\alpha\in\mathbb{N}_{0}^{N},T\in\mathcal{Y}\left(  \tau\right)  $ and
$\varepsilon=\pm1$ set%
\[
\mathcal{E}_{\varepsilon}\left(  \alpha,T\right)  :=\prod_{\substack{1\leq
i<j\leq N\\\alpha_{i}<\alpha_{j}}}\left(  1+\frac{\varepsilon\kappa}%
{\alpha_{j}-\alpha_{i}+\kappa\left(  c\left(  r_{\alpha}\left(  j\right)
,T\right)  -c\left(  r_{\alpha}\left(  i\right)  ,T\right)  \right)  }\right)
.
\]

\end{definition}

\begin{theorem}
Suppose $\alpha\in\mathbb{N}_{0}^{N},T\in\mathcal{Y}\left(  \tau\right)  $
then $\left\Vert \zeta_{\alpha,T}\right\Vert ^{2}=\left(  \mathcal{E}%
_{1}\left(  \alpha,T\right)  \mathcal{E}_{-1}\left(  \alpha,T\right)  \right)
^{-1}\left\Vert \zeta_{\alpha^{+},T}\right\Vert ^{2}$.
\end{theorem}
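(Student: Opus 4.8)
The plan is to induct on the number of \emph{inversions} of $\alpha$, namely the cardinality of $\{(i,j):1\leq i<j\leq N,\ \alpha_i<\alpha_j\}$, driving $\alpha$ toward its sorted form $\alpha^{+}$ by adjacent transpositions. The base case is $\alpha=\alpha^{+}$, a partition, for which there are no such pairs, so $\mathcal{E}_1(\alpha,T)=\mathcal{E}_{-1}(\alpha,T)=1$ (empty products) and the asserted identity reads $\left\Vert\zeta_{\alpha^{+},T}\right\Vert^{2}=\left\Vert\zeta_{\alpha^{+},T}\right\Vert^{2}$.

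First I would recast the factors of $\mathcal{E}_{\varepsilon}$ in terms of the spectral vector. Since $\xi_{\alpha,T}(k)=\alpha_k+1+\kappa c(r_\alpha(k),T)$, the denominator appearing in $\mathcal{E}_{\varepsilon}(\alpha,T)$ for a pair $i<j$ with $\alpha_i<\alpha_j$ is exactly $\xi_{\alpha,T}(j)-\xi_{\alpha,T}(i)$. Hence $\mathcal{E}_1(\alpha,T)\mathcal{E}_{-1}(\alpha,T)=\prod_{i<j,\ \alpha_i<\alpha_j}\bigl(1-\kappa^{2}/(\xi_{\alpha,T}(j)-\xi_{\alpha,T}(i))^{2}\bigr)$, a product over inversions of a quantity depending only on the two spectral entries involved.

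For the inductive step, pick $i$ with $\alpha_i<\alpha_{i+1}$; then $s_i\alpha$ has one fewer inversion, since an adjacent swap changes the status of only the pair $(i,i+1)$, and $(s_i\alpha)^{+}=\alpha^{+}$. The step edge gives $\left\Vert\zeta_{s_i\alpha,T}\right\Vert^{2}=(1-b^{2})\left\Vert\zeta_{\alpha,T}\right\Vert^{2}$ with $b=\kappa/(\xi_{\alpha,T}(i)-\xi_{\alpha,T}(i+1))$, so $\left\Vert\zeta_{\alpha,T}\right\Vert^{2}=(1-b^{2})^{-1}\left\Vert\zeta_{s_i\alpha,T}\right\Vert^{2}$. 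The crux is the identity $\mathcal{E}_1(\alpha,T)\mathcal{E}_{-1}(\alpha,T)=(1-b^{2})\,\mathcal{E}_1(s_i\alpha,T)\mathcal{E}_{-1}(s_i\alpha,T)$; granting it, the induction hypothesis applied to $s_i\alpha$ yields $\left\Vert\zeta_{\alpha,T}\right\Vert^{2}=(1-b^{2})^{-1}\bigl(\mathcal{E}_1(s_i\alpha,T)\mathcal{E}_{-1}(s_i\alpha,T)\bigr)^{-1}\left\Vert\zeta_{\alpha^{+},T}\right\Vert^{2}=\bigl(\mathcal{E}_1(\alpha,T)\mathcal{E}_{-1}(\alpha,T)\bigr)^{-1}\left\Vert\zeta_{\alpha^{+},T}\right\Vert^{2}$, which is the claim.

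The main obstacle is establishing that multiplicative identity, which is pure bookkeeping on the product over inversions. The spectral vector transforms by $\xi_{s_i\alpha,T}=s_i\xi_{\alpha,T}$, so the entries in positions $i,i+1$ are interchanged and all others fixed. The pair $(i,i+1)$ is an inversion of $\alpha$ but not of $s_i\alpha$, and its factor is precisely $1-\kappa^{2}/(\xi_{\alpha,T}(i)-\xi_{\alpha,T}(i+1))^{2}=1-b^{2}$, accounting for the discrepancy. For every $p<i$ the two pairs $\{(p,i),(p,i+1)\}$ contribute, in $\alpha$ and in $s_i\alpha$, the same multiset of factors: the condition $\alpha_p<\alpha_i$ always yields the factor with denominator $\xi_{\alpha,T}(i)-\xi_{\alpha,T}(p)$ and the condition $\alpha_p<\alpha_{i+1}$ the one with denominator $\xi_{\alpha,T}(i+1)-\xi_{\alpha,T}(p)$, merely reattached to different pair-labels after the swap; the symmetric statement holds for $q>i+1$ with the pairs $\{(i,q),(i+1,q)\}$, and all pairs disjoint from $\{i,i+1\}$ are untouched. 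Collecting these cancellations gives the identity and closes the induction.
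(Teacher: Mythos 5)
Your proof is correct and follows essentially the same route the paper takes: the paper only sketches this result (citing \cite{DL2011} and \cite{D2017}), describing the inductive procedure along Yang--Baxter graph steps with the relation $\left\Vert \zeta_{s_{i}\alpha,T}\right\Vert ^{2}=\left(1-b^{2}\right)\left\Vert \zeta_{\alpha,T}\right\Vert ^{2}$, and your argument fills in exactly the bookkeeping that makes this work — recasting $\mathcal{E}_{1}\mathcal{E}_{-1}$ via the spectral vector and checking that each adjacent swap removes precisely the factor $1-b^{2}$.
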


It is important that $-1/h_{\tau}<\kappa<1/h_{\tau}$ implies $\left\Vert
\zeta_{\alpha,T}\right\Vert ^{2}>0$ for all $\left(  \alpha,T\right)  $ and
thus $\left\langle \cdot,\cdot\right\rangle _{\mathbb{T}}$ is
positive-definite. Observe that the value of $\left\Vert \zeta_{\alpha
,T}\right\Vert ^{2}$ depends only on the differences $\alpha_{i}-\alpha_{j}$.
This is a consequence of the torus property $\left\langle x_{i}f,x_{i}%
g\right\rangle _{\mathbb{T}}=\left\langle f,g\right\rangle _{\mathbb{T}}$ and
the commutation (where $e_{N}:=x_{1}x_{2}\cdots x_{N}$ )%
\[
\mathcal{U}_{i}\left(  e_{N}^{m}f\right)  =me_{N}^{m}f+e_{N}^{m}%
\mathcal{U}_{i}f,~1\leq i\leq N,m\in\mathbb{N},
\]
Thus $\mathcal{U}_{i}\left(  e_{N}^{m}\zeta_{\alpha,T}\right)  =\left(
m+\alpha_{i}+\kappa c\left(  r_{\alpha}\left(  i\right)  ,T\right)  \right)
e_{N}^{m}\zeta_{\alpha,T}$, and $e_{N}^{m}\zeta_{\alpha,T}$ is a simultaneous
eigenfunction of $\left\{  \mathcal{U}_{i}\right\}  $ with the same
eigenvalues and the same leading term as $\zeta_{\alpha+m\boldsymbol{1},T}$
for $m\geq0$ (with $\boldsymbol{1}:=\left(  1,1,\ldots,1\right)  \in
\mathbb{N}_{0}^{N}$). Hence $\zeta_{\alpha+m\boldsymbol{1},T}=e_{N}^{m}%
\zeta_{\alpha,T}$. There are Laurent polynomial eigenfunctions of $\left\{
\mathcal{U}_{i}\right\}  $. The structure of NSJP's is extended to $V_{\tau}%
$-valued Laurent polynomials, thereby producing a basis:

\begin{definition}
\label{eNmult}Suppose $\alpha\in\mathbb{Z}^{N}$ then set $\zeta_{\alpha
,T}=e_{N}^{-m}\zeta_{\alpha+m\boldsymbol{1},T}$ where $m\in\mathbb{N}_{0}$ and
satisfies $m\geq-\min_{j}\alpha_{i}$. This is well-defined since
$\alpha+m\boldsymbol{1}\in\mathbb{N}_{0}^{N}$ and $\zeta_{\alpha
+k\boldsymbol{1},T}=e_{N}^{k}\zeta_{\alpha,T}$ for $k\in\mathbb{N}_{0}$.
\end{definition}

\section{\label{symmvp}Symmetric vector-valued polynomials}

For an arbitrary $\left(  \alpha,T\right)  \in\mathbb{N}_{0}^{N}%
\times\mathcal{Y}\left(  \tau\right)  $ we can define a symmetric polynomial
simply by averaging: $p=\frac{1}{N!}\sum_{w\in\mathcal{S}_{N}}w\zeta
_{\alpha,T}$. From Proposition \ref{Usym} it follows that $p$ is an
eigenfunction of $\sum_{i=1}^{N}\mathcal{U}_{i}^{m}$ for each $m=1,2,3,\ldots
$. The idea of using this method to construct Jack polynomials from the scalar
nonsymmetric Jack polynomials is due to Baker and Forrester \cite{B1999}; the
usual Jack parameter is $\alpha=1/\kappa$. It is possible that for some
$\left(  \alpha,T\right)  $ the sum $p=0$, and for some pairs $\left(
\alpha,T\right)  $ and $\left(  \beta,T^{\prime}\right)  $ that the sums agree
up to multiplication by a constant. In this section we present the structure
of Jack polynomials, the assignment of unique labels, and orthogonality
properties (henceforth, unmodified \textquotedblleft Jack\textquotedblright%
\ implies symmetry). Multiplication by $L\left(  x\right)  $ will yield
symmetric eigenfunctions of $\mathcal{H}$, the vector-valued wavefunctions.
The results are mostly from \cite[Sec. 5.2]{DL2011}. The Jack polynomials
correspond to certain connected components of the Yang-Baxter graph after the
affine jumps are removed.

\begin{definition}
For $\alpha\in\mathbb{N}_{0}^{N},T\in\mathcal{Y}\left(  \tau\right)  $ define
$\left\lfloor \alpha,T\right\rfloor $ to be the filling of the Ferrers diagram
of $\tau$ obtained by replacing $i$ by $\alpha_{i}^{+}$ in $T$, for all $i$.
\end{definition}

Obviously $\left\lfloor \alpha,T\right\rfloor =\left\lfloor \alpha
^{+},T\right\rfloor $.

\begin{example}
Let $\tau=\left(  3,2\right)  ,\alpha=\left(  1,4,2,0,3\right)  $ and%
\[
T=%
\begin{array}
[c]{ccc}%
5 & 4 & 1\\
3 & 2 &
\end{array}
,\left\lfloor \alpha,T\right\rfloor =%
\begin{array}
[c]{ccc}%
0 & 1 & 4\\
2 & 3 &
\end{array}
.
\]

\end{example}

\begin{proposition}
(\cite[Prop. 5.2]{DL2011}) $\left(  \alpha,T\right)  $ and $\left(
\beta,T^{\prime}\right)  $ are connected by edges and jumps (without regard to
the orientation) if and only if $\left\lfloor \alpha,T\right\rfloor
=\left\lfloor \beta,T^{\prime}\right\rfloor $.
\end{proposition}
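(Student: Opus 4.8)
The plan is to treat the two implications separately, exploiting the fact that $\lfloor\alpha,T\rfloor$ depends on $(\alpha,T)$ only through the partition $\alpha^{+}$ and the tableau $T$, via the rule ``the cell holding $i$ in $T$ receives the entry $(\alpha^{+})_{i}$.''

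First, necessity ($\Rightarrow$): I would check that neither a step nor a jump alters $\lfloor\alpha,T\rfloor$ (affine steps are correctly excluded, since they raise $\left\vert \alpha\right\vert$ and hence change $\alpha^{+}$). A step sends $(\alpha,T)\mapsto(s_{i}\alpha,T)$, leaving $T$ fixed and replacing $\alpha$ by a rearrangement, so $\alpha^{+}$ is unchanged and the floor is unchanged. A jump sends $(\alpha,T)\mapsto(\alpha,T^{(j)})$ with $j=r_{\alpha}(i)$, $j+1=r_{\alpha}(i+1)$, and requires $\alpha_{i}=\alpha_{i+1}$; since $r_{\alpha}\alpha=\alpha^{+}$ this gives $(\alpha^{+})_{j}=\alpha_{i}=\alpha_{i+1}=(\alpha^{+})_{j+1}$, so the two floor entries interchanged (at the cells of $j$ and $j+1$) are equal and the floor is again unchanged. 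Thus connectivity by steps and jumps forces $\lfloor\alpha,T\rfloor=\lfloor\beta,T^{\prime}\rfloor$.

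For sufficiency ($\Leftarrow$), assume $\lfloor\alpha,T\rfloor=\lfloor\beta,T^{\prime}\rfloor$; comparing the multisets of entries gives $\alpha^{+}=\beta^{+}=:\lambda$. I would first use steps to reach the partition index: a step realizes any adjacent transposition that actually moves $\alpha$, and steps are reversible once orientation is ignored, so one bubble-sorts to connect $(\alpha,T)$ to $(\lambda,T)$ and $(\beta,T^{\prime})$ to $(\lambda,T^{\prime})$ with the tableaux untouched. It then remains to connect $(\lambda,T)$ and $(\lambda,T^{\prime})$ by jumps. For a partition one has $r_{\lambda}=\mathrm{id}$, so the jump $s_{i}$ is available exactly when $\lambda_{i}=\lambda_{i+1}$ and it simply transposes the entries $i,i+1$ of the tableau whenever they occupy distinct rows and columns (the forbidden cases $b^{\prime}=\pm1$ being precisely same-row or same-column).

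The combinatorial heart is this last tableau connectivity, and here is the approach I expect to work. The equal-floor hypothesis says that for each value $v$ the set $C_{v}$ of cells occupied by the block $B_{v}=\{i:\lambda_{i}=v\}$ is the same in $T$ and in $T^{\prime}$; because the floor is weakly monotone along rows and down columns, each $C_{v}$ is a skew subdiagram of $\tau$. Jumps act within a single block (they need $\lambda_{i}=\lambda_{i+1}$), so the blocks are independent and it suffices to connect, for each $v$, the two RSYT fillings of $C_{v}$ by the consecutive integers $B_{v}$. I would identify these fillings with the linear extensions of the cell poset of $C_{v}$ (RSYT being the order-reversing fillings), under which a jump becomes an elementary transposition of two consecutive values occupying incomparable cells. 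The result then follows from the classical fact that the linear-extension graph of any finite poset is connected under transpositions of adjacent incomparable elements, proved by a short induction that moves a fixed maximal element up to the top value and deletes it. The one point I would write out in full — and the main obstacle — is the equivalence ``distinct rows and columns'' $\iff$ ``incomparable'' for two consecutive values in a skew RSYT: two comparable cells are joined by a present saturated chain whose entries would have to lie strictly between $i$ and $i+1$, which is impossible, so comparable cells with consecutive values must be row- or column-adjacent, exactly matching the $b^{\prime}=\pm1$ cases.
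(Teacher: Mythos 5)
Your argument is correct, but there is nothing in this paper to compare it against: the proposition is quoted from \cite[Prop.\ 5.2]{DL2011} and no proof is reproduced here, so you have in effect supplied the missing proof. Your necessity direction is exactly the right computation: a step permutes $\alpha$ without touching $T$, and a jump swaps $j=r_{\alpha}(i)$ and $j+1=r_{\alpha}(i+1)$ in $T$ while $(\alpha^{+})_{j}=\alpha_{i}=\alpha_{i+1}=(\alpha^{+})_{j+1}$, so in both cases $\lfloor\alpha,T\rfloor$ is invariant, and affine steps are rightly excluded since they change $\alpha^{+}$. For sufficiency, the reduction by bubble-sorting to $(\lambda,T)$ and $(\lambda,T')$ with $\lambda=\alpha^{+}=\beta^{+}$ is standard, and the block decomposition into the skew diagrams $C_{v}$ followed by connectivity of the linear-extension graph of each cell poset is a clean way to finish; the classical induction (push a maximal element to the top, delete, recurse) does the job. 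You correctly isolated the one point that needs care, namely that for consecutive entries of an RSYT the excluded cases $b'=\pm1$ are precisely the comparable (equivalently row- or column-adjacent) pairs; your squeeze argument via an intermediate cell such as $(a,b')$, which lies in $\tau$ because rows of a Ferrers diagram weakly shorten, settles this. Two small points worth writing explicitly if you expand this: (i) a jump performed inside a block preserves the RSYT property of the \emph{whole} tableau (the only order relation reversed is between the two incomparable cells, which carry no adjacency constraint), so every intermediate object is a genuine node of the graph; and (ii) incomparability of the two cells forces $|c(i,T)-c(i+1,T)|\ge 2$, so the jump is genuinely available exactly when your poset argument says it should be.
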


From the properties of steps and jumps it follows that the spectral vectors of
$\left(  \alpha,T\right)  $ and $\left(  \beta,T^{\prime}\right)  $ are
permutations of each other. Set $\mathcal{T}\left(  \alpha,T\right)  =\left\{
\left(  \beta,T^{\prime}\right)  :\left\lfloor \beta,T^{\prime}\right\rfloor
=\left\lfloor \alpha,T\right\rfloor \right\}  $, the set of nodes in the
connected component.

A tableau is \textit{column-strict} if the entries are increasing in each
column, and nondecreasing in each row.

\begin{theorem}
For $\left(  \alpha,T\right)  \in\mathbb{N}_{0}^{N}\times\mathcal{Y}\left(
\tau\right)  $ the $\mathrm{span}\left\{  \zeta_{\beta,T^{\prime}}:\left(
\beta,T^{\prime}\right)  \in\mathcal{T}\left(  \alpha,T\right)  \right\}  $
contains a unique nonzero symmetric polynomial if and only if $\left\lfloor
\alpha,T\right\rfloor $ is column-strict.
\end{theorem}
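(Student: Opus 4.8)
The plan is to realize the symmetric polynomials in the span as the $\mathcal{S}_N$-fixed vectors of a finite-dimensional module and to read that fixed space off the Yang--Baxter graph. Write $V:=\mathrm{span}\{\zeta_{\beta,T'}:(\beta,T')\in\mathcal{T}(\alpha,T)\}$. First I would verify that $V$ is $\mathcal{S}_N$-invariant: for each node and each $i$ the operator $s_i$ acts by a step (if $\beta_i\neq\beta_{i+1}$), by a same-row or same-column relation, or by a jump (if $\beta_i=\beta_{i+1}$), and in every case $s_i\zeta_{\beta,T'}$ lies in $\mathrm{span}\{\zeta_{\beta,T'},\zeta_{s_i\beta,T'},\zeta_{\beta,T^{\prime(j)}}\}$. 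Since $\lfloor s_i\beta,T'\rfloor=\lfloor\beta,T'\rfloor$, and $\lfloor\beta,T^{\prime(j)}\rfloor=\lfloor\beta,T'\rfloor$ (because $\beta_i=\beta_{i+1}$ forces $\beta^+_j=\beta^+_{j+1}$ for $j=r_\beta(i)$), these targets remain in $\mathcal{T}(\alpha,T)$. Hence $V$ is a module and a polynomial in $V$ is symmetric precisely when it lies in $V^{\mathcal{S}_N}$, so the claim becomes $\dim V^{\mathcal{S}_N}=1\iff\lfloor\alpha,T\rfloor$ is column-strict.

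Next I would expand a candidate fixed vector $p=\sum a_{\beta,T'}\zeta_{\beta,T'}$ and impose $s_ip=p$ on each two-dimensional block carrying the $s_i$-action. A short computation with the $2\times2$ matrix governing a step or a genuine jump shows its $(+1)$-eigenspace is one-dimensional and its eigenvector has both entries nonzero, since $|b|,|b'|<1$ throughout the interval $-1/h_\tau<\kappa<1/h_\tau$ (indeed $|\kappa|<1/h_\tau$ forces $|\kappa|<|\xi_{\beta,T'}(i)-\xi_{\beta,T'}(i+1)|$); thus across each step or jump the two coefficients are tied by a fixed nonzero ratio. A same-row relation ($b'=1$) imposes no condition, while a same-column relation ($b'=-1$, i.e.\ $s_i\zeta_{\beta,T'}=-\zeta_{\beta,T'}$) forces $a_{\beta,T'}=0$. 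Because the preceding proposition connects all of $\mathcal{T}(\alpha,T)$ through steps and jumps, the coefficient at one node determines every other, so $\dim V^{\mathcal{S}_N}\leq1$ unconditionally (uniqueness is automatic), and a single same-column relation anywhere propagates a zero to every node, giving $V^{\mathcal{S}_N}=0$.

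It then remains to match ``$\mathcal{T}(\alpha,T)$ has no same-column relation'' with ``$\lfloor\alpha,T\rfloor$ is column-strict.'' Put $\phi:=\lfloor\alpha,T\rfloor$. Because the entries of an RSYT decrease along rows and down columns while $i\mapsto\alpha_i^+$ is nonincreasing in $i$, the filling $\phi$ is automatically nondecreasing along rows and down columns; hence $\phi$ is column-strict iff no column of $\phi$ contains two equal entries. A same-column relation at $(\beta,T')$ means consecutive integers $j,j+1$ sit in vertically adjacent boxes of $T'$ with $\beta^+_j=\beta^+_{j+1}$, i.e.\ two equal entries of $\phi$ in one column, which settles the easy direction. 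For the converse I would invoke the combinatorial lemma that a non-strict column of $\phi$ contains two equal entries in vertically adjacent boxes $B,B'$, and that such a vertically adjacent pair can be assigned consecutive integers within some reverse-standard filling of the block of indices sharing that value of $\alpha^+$ (a covering pair is consecutive in a suitable linear extension); filling that block accordingly, and leaving the remaining blocks arranged as in $T$, yields a bona fide node of $\mathcal{T}(\alpha,T)$ that exhibits the sought same-column relation. Checking that this filling is a legitimate RSYT lying in the component is the principal combinatorial step.

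Finally, for the existence half I would display the fixed vector explicitly as $P_0\zeta_{\alpha,T}:=\frac{1}{N!}\sum_{w\in\mathcal{S}_N}w\zeta_{\alpha,T}$, which is symmetric and lies in $V$. On the interval $-1/h_\tau<\kappa<1/h_\tau$ the form $\langle\cdot,\cdot\rangle_{\mathbb{T}}$ is positive-definite and $\mathcal{S}_N$-invariant, so $P_0$ is an orthogonal projection and $\Vert P_0\zeta_{\alpha,T}\Vert^2=\langle P_0\zeta_{\alpha,T},\zeta_{\alpha,T}\rangle_{\mathbb{T}}$; hence $P_0\zeta_{\alpha,T}\neq0$ exactly when this scalar is nonzero. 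I expect the genuine obstacle to be this nonvanishing: one must show that when $\phi$ is column-strict no $b'=-1$ relation occurs along the connecting steps and jumps, so the nonzero ratios cannot conspire to cancel the symmetrization. Granting it, $P_0\zeta_{\alpha,T}$ spans the at-most-one-dimensional space $V^{\mathcal{S}_N}$ and the equivalence is complete.
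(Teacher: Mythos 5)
Your overall strategy coincides with the paper's (which defers the details to Dunkl--Luque \cite[Sec.\ 5.2]{DL2011} but sketches exactly this mechanism right after the theorem): the span is an $\mathcal{S}_N$-module, imposing $s_ip=p$ ties the coefficients at the two ends of each step or jump by $a(\beta,T')=(1+b)a(\gamma,T'')$, a same-column relation forces a coefficient to vanish and propagates zero through the connected component, and the bound $|\kappa|<1/h_\tau$ rules out $b=-1$ on steps. This correctly yields $\dim V^{\mathcal{S}_N}\leq 1$ and the ``only if'' direction, modulo the combinatorial lemma you flag (that a repeated entry in a column of $\lfloor\alpha,T\rfloor$ is actually realized as a same-column consecutive pair $j,j+1$ at some node of the component); that lemma is true but you only sketch it.

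The genuine gap is the existence half. Knowing that every local relation has $1+b\neq 0$ does not by itself produce a nonzero fixed vector: the relations $a(\beta,T')=(1+b)a(\gamma,T'')$ are attached to the edges of a graph with many cycles, and one must verify that the prescribed ratios are consistent around every cycle (equivalently, that the system $\{s_ip=p\}$ admits a nonzero solution rather than only the zero one). Your fallback, $P_0\zeta_{\alpha,T}=\frac{1}{N!}\sum_{w}w\zeta_{\alpha,T}$, is certainly symmetric and lies in $V$, but you never show it is nonzero --- you write ``Granting it,'' and the phrase ``the nonzero ratios cannot conspire to cancel the symmetrization'' is an assertion, not an argument. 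The paper closes this in two ways: it exhibits the explicit coefficients $a(\beta,T')=\mathcal{E}_{-1}(\beta,T')/\mathcal{C}_{-1}(T')$ and checks them against the local recurrences for steps and for jumps (which settles cycle-consistency), and it quotes \cite[Prop.\ 5.11]{DL2011} that the coefficient of $\zeta_{\lambda,T_S}$ in $\sum_{w}w\zeta_{\lambda^-,T_R}$ equals $\#G_{\lambda,T_S}>0$, a triangularity computation at the sink that directly shows the symmetrization does not vanish. Some such argument is indispensable; without it you have uniqueness but not existence. (A smaller point: your appeal to positive-definiteness of $\langle\cdot,\cdot\rangle_{\mathbb{T}}$ confines the argument to $-1/h_\tau<\kappa<1/h_\tau$, whereas the coefficient-formula route works for all permitted $\kappa$.)
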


As usual in this context, unique means up to multiplication by a scalar.
Suppose $\lambda\in\mathbb{N}_{0}^{N,+}$ and consider the sum $p=\sum
\limits_{\left(  \beta,T^{\prime}\right)  \in\mathcal{T}\left(  \lambda
,T\right)  }a\left(  \beta,T^{\prime}\right)  \zeta_{\beta,T^{\prime}}$
subject to the conditions $s_{i}p=p$ for $1\leq i<N$ (sufficing for symmetry).

Suppose there is a step or jump $s_{i}$ from $\left(  \beta,T^{\prime}\right)
$ to $\left(  \gamma,T^{\prime\prime}\right)  $ then $\zeta_{\left(
\gamma,T^{\prime\prime}\right)  }=s_{i}\zeta_{\left(  \beta,T^{\prime}\right)
}-b\zeta_{\left(  \beta,T^{\prime}\right)  }$ for some $b$; this implies
$s_{i}\zeta_{\left(  \gamma,T^{\prime\prime}\right)  }=-b\zeta_{\left(
\beta,T^{\prime}\right)  }+\left(  1-b^{2}\right)  \zeta_{\left(
\gamma,T^{\prime\prime}\right)  }$. The condition $s_{i}\left(  a\left(
\beta,T^{\prime}\right)  \zeta_{\beta,T^{\prime}}+a\left(  \gamma
,T^{\prime\prime}\right)  \zeta_{\gamma,T^{\prime\prime}}\right)  =a\left(
\beta,T^{\prime}\right)  \zeta_{\beta,T^{\prime}}+a\left(  \gamma
,T^{\prime\prime}\right)  \zeta_{\gamma,T^{\prime\prime}}$ implies $a\left(
\beta,T^{\prime}\right)  =\left(  1+b\right)  a\left(  \gamma,T^{\prime\prime
}\right)  $. The column strictness hypothesis implies that $b=-1$ can not
occur. The relation is used in an inductive evaluation of $a\left(
\beta,T^{\prime}\right)  $ once the beginning and end have been identified.

\begin{definition}
For $\alpha\in\mathbb{N}^{N}$ and $T\in Y\left(  \tau\right)  $ let
\begin{align*}
\mathrm{inv}\left(  \alpha\right)   &  :=\#\left\{  \left(  i,j\right)  :1\leq
i<j\leq N,\alpha_{i}<\alpha_{j}\right\}  ,\\
\mathrm{inv}\left(  T\right)   &  :=\#\left\{  \left(  i,j\right)  :1\leq
i<j\leq N,c\left(  i,T\right)  \geq c\left(  j,T\right)  +2\right\}  .
\end{align*}

\end{definition}

Thus a step reduces $\mathrm{inv}\left(  \alpha\right)  $ by $1$ (that is,
$\mathrm{inv}\left(  s_{i}\alpha\right)  =\mathrm{inv}\left(  \alpha\right)
-1$) and a jump reduces $\mathrm{inv}\left(  T\right)  $ by $1$ (in Example
\ref{ex(2,1)} $\mathrm{inv}\left(  T_{0}\right)  =1$ and $\mathrm{inv}\left(
T_{1}\right)  =0$). Hence the root $\left(  \alpha,T\right)  $ of
$\in\mathcal{T}\left(  \alpha,T\right)  $ has maximum $\mathrm{inv}\left(
\alpha\right)  +\mathrm{inv}\left(  T\right)  $ and the sink has minimum
$\mathrm{inv}\left(  \alpha\right)  +\mathrm{inv}\left(  T\right)  $. Clearly
$\mathrm{inv}\left(  \alpha\right)  $ is minimized at $\alpha^{+}$ and
maximized at $\alpha^{-}$, the nondecreasing rearrangement of $\alpha$. In
\cite[Def. 5.6]{DL2011} it is shown that there are unique tableaux
$T_{R},T_{S}$ in $\in\mathcal{T}\left(  \alpha,T\right)  $ such that $\left(
\alpha^{-},T_{R}\right)  $ is the root and $\left(  \alpha^{+},T_{S}\right)  $
is the sink (maximizes, respectively minimizes $\mathrm{inv}\left(
\beta\right)  +\mathrm{inv}\left(  T^{\prime}\right)  $ for $\left(
\beta,T^{\prime}\right)  \in\mathcal{T}\left(  \alpha,T\right)  $). The
formulae for $T_{R}$ and $T_{S}$ are (with $\mathcal{T}=\left\lfloor
\alpha,T\right\rfloor $)%
\begin{align}
T_{R}\left(  i,j\right)   &  =\#\left\{  \left(  k,l\right)  :\mathcal{T}%
\left(  k,l\right)  >\mathcal{T}\left(  i,j\right)  \right\} \label{Troot}\\
&  +\#\left\{  \left(  k,l\right)  :\mathcal{T}\left(  k,l\right)
=\mathcal{T}\left(  i,j\right)  ,\left(  l>j\right)  \vee\left(  l=j\wedge
k\geq i\right)  \right\}  ;\nonumber
\end{align}%
\begin{align}
T_{S}\left(  i,j\right)   &  =\#\left\{  \left(  k,l\right)  :\mathcal{T}%
\left(  k,l\right)  >\mathcal{T}\left(  i,j\right)  \right\} \label{Tsink}\\
&  +\#\left\{  \left(  k,l\right)  :\mathcal{T}\left(  k,l\right)
=\mathcal{T}\left(  i,j\right)  ,\left(  k>i\right)  \vee\left(  k=i\wedge
l\geq j\right)  \right\}  .\nonumber
\end{align}

\begin{example}
$\alpha=\left(  3^{3},2^{3},1,0\right)  ,$%
\[
\left\lfloor \alpha,T\right\rfloor =%
\begin{array}
[c]{cccc}%
0 & 2 & 2 & 3\\
1 & 3 & 3 & \\
2 &  &  &
\end{array}
,T_{R}=%
\begin{array}
[c]{cccc}%
8 & 5 & 4 & 1\\
7 & 3 & 2 & \\
6 &  &  &
\end{array}
,T_{S}=%
\begin{array}
[c]{cccc}%
8 & 6 & 5 & 3\\
7 & 2 & 1 & \\
4 &  &  &
\end{array}
.
\]

\end{example}

As motivation for the formulae for $a\left(  \beta,T^{\prime}\right)  $ in the
sum suppose $\beta_{i}<\beta_{i+1}$ (and $\varepsilon=\pm1$) then%
\begin{align*}
\frac{\mathcal{E}_{\varepsilon}\left(  \beta,T\right)  }{\mathcal{E}%
_{\varepsilon}\left(  s_{i}\beta,T\right)  }  &  =1+\frac{\varepsilon\kappa
}{\xi_{i+1}\left(  \beta,T\right)  -\xi_{i}\left(  \beta,T\right)  },\\
\mathcal{E}_{-1}\left(  \beta,T\right)   &  =\left(  1+b\right)
\mathcal{E}_{-1}\left(  s_{i}\beta,T\right)  ,
\end{align*}
where $\zeta_{s_{i}\beta,T}=s_{i}\zeta_{\beta,T}-b\zeta_{\beta,T}$. We
introduce two functions on $\mathcal{Y}\left(  \tau\right)  $ to deal
analogously with jumps:

\begin{definition}
For $T\in\mathcal{Y}\left(  \tau\right)  $ and $\varepsilon=\pm1$ set%
\[
\mathcal{C}_{\varepsilon}\left(  T\right)  =\prod\left\{  1+\frac{\varepsilon
}{c\left(  i,T\right)  -c\left(  j,T\right)  }:1\leq i<j\leq N,c\left(
i,T\right)  \leq c\left(  j,T\right)  -2\right\}  .
\]

\end{definition}

From (\ref{Tnorm}) $\left\langle T,T\right\rangle _{0}=\mathcal{C}_{1}\left(
T\right)  \mathcal{C}_{-1}\left(  T\right)  $. In the jump with $\beta
_{i}=\beta_{i+1},r_{\beta}\left(  i\right)  =j$ and $c\left(  j,T\right)
-c\left(  j+1,T\right)  \geq2$ (as in \ref{jumpT}) $T^{\left(  j\right)  }$
has $j$ and $j+1$ interchanged so that $c\left(  j,T^{\left(  j\right)
}\right)  -c\left(  j+1,T^{\left(  j\right)  }\right)  =c\left(  j+1,T\right)
-c\left(  j,T\right)  \leq-2.$ Then%
\[
\mathcal{C}_{\varepsilon}\left(  T^{\left(  j\right)  }\right)  =\mathcal{C}%
_{\varepsilon}\left(  T\right)  \left(  1+\frac{\varepsilon}{c\left(
j,T^{\left(  j\right)  }\right)  -c\left(  j+1,T^{\left(  j\right)  }\right)
}\right)
\]
so that $\mathcal{C}_{-1}\left(  T^{\left(  j\right)  }\right)  =\mathcal{C}%
_{-1}\left(  T\right)  \left(  1+b\right)  $ where $b=\left(  c\left(
j,T\right)  -c\left(  j+1,T\right)  \right)  ^{-1}$. From these relations it
can be shown:

\begin{proposition}
Suppose $\left(  \alpha,T\right)  \in\mathbb{N}_{0}^{N}\times\mathcal{Y}%
\left(  \tau\right)  $ and $\left\lfloor \alpha,T\right\rfloor $ is
column-strict then%
\[
p=\sum_{\left(  \beta,T^{\prime}\right)  \in\mathcal{T}\left(  \alpha
,T\right)  }\frac{\mathcal{E}_{-1}\left(  \beta,T^{\prime}\right)
}{\mathcal{C}_{-1}\left(  T^{\prime}\right)  }\zeta_{\beta,T^{\prime}}%
\]
is symmetric and nonzero.
\end{proposition}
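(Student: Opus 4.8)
The plan is to verify that the explicit weights $a\left(\beta,T^{\prime}\right):=\mathcal{E}_{-1}\left(\beta,T^{\prime}\right)/\mathcal{C}_{-1}\left(T^{\prime}\right)$ satisfy, along every step and every jump inside the connected component $\mathcal{T}\left(\alpha,T\right)$, exactly the recurrence $a\left(\beta,T^{\prime}\right)=\left(1+b\right)a\left(\gamma,T^{\prime\prime}\right)$ that was derived above as the condition for $s_{i}$-invariance across an edge $\left(\beta,T^{\prime}\right)\overset{s_{i}}{\longrightarrow}\left(\gamma,T^{\prime\prime}\right)$ with $\zeta_{\gamma,T^{\prime\prime}}=s_{i}\zeta_{\beta,T^{\prime}}-b\zeta_{\beta,T^{\prime}}$. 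Since the graph restricted to steps and jumps is connected with node set $\mathcal{T}\left(\alpha,T\right)$, and since for each fixed $i$ the operator $s_{i}$ partitions the $\zeta_{\beta,T^{\prime}}$ into $s_{i}$-invariant singletons (the $s_{i}$-fixed vectors) and two-element blocks (the endpoints of one $s_{i}$-edge), establishing this recurrence on every edge yields $s_{i}p=p$ for all $i$, hence symmetry.

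First I would treat the step $s_{i}$, where $\beta_{i}<\beta_{i+1}$, $T^{\prime\prime}=T^{\prime}$, and $b=\kappa/\left(\xi_{\beta,T^{\prime}}\left(i\right)-\xi_{\beta,T^{\prime}}\left(i+1\right)\right)$. Here $\mathcal{C}_{-1}\left(T^{\prime}\right)$ is unchanged, so the recurrence reduces to $\mathcal{E}_{-1}\left(\beta,T^{\prime}\right)=\left(1+b\right)\mathcal{E}_{-1}\left(s_{i}\beta,T^{\prime}\right)$, which is precisely the relation motivating the definition of $\mathcal{E}_{-1}$. Next the jump $s_{i}$, where $\beta_{i}=\beta_{i+1}$, $j=r_{\beta}\left(i\right)$, $\gamma=\beta$, $T^{\prime\prime}=T^{\prime\left(j\right)}$, and $b=b^{\prime}=1/\left(c\left(j,T^{\prime}\right)-c\left(j+1,T^{\prime}\right)\right)$. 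Since $\mathcal{C}_{-1}\left(T^{\prime\left(j\right)}\right)=\left(1+b^{\prime}\right)\mathcal{C}_{-1}\left(T^{\prime}\right)$ was established just before the statement, the recurrence will follow once I show $\mathcal{E}_{-1}\left(\beta,T^{\prime}\right)=\mathcal{E}_{-1}\left(\beta,T^{\prime\left(j\right)}\right)$.

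This last identity is the main obstacle, and it is exactly where $\beta_{i}=\beta_{i+1}$ is used. Passing from $T^{\prime}$ to $T^{\prime\left(j\right)}$ swaps the contents $c\left(j,\cdot\right)$ and $c\left(j+1,\cdot\right)$ and fixes all others; since $r_{\beta}$ is unchanged with $r_{\beta}\left(i\right)=j$ and $r_{\beta}\left(i+1\right)=j+1$, the only factors of $\mathcal{E}_{-1}$ affected are those indexed by a pair containing $i$ or $i+1$. The diagonal pair $\left(i,i+1\right)$ is absent because $\beta_{i}<\beta_{i+1}$ fails, and the remaining affected factors group into twins $\left\{\left(a,i\right),\left(a,i+1\right)\right\}$ (for $a<i$ with $\beta_{a}<\beta_{i}$) and $\left\{\left(i,b\right),\left(i+1,b\right)\right\}$ (for $b>i+1$ with $\beta_{b}>\beta_{i}$). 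Because $\beta_{i}=\beta_{i+1}$, interchanging $c\left(j,\cdot\right)\leftrightarrow c\left(j+1,\cdot\right)$ merely swaps the two factors within each twin, leaving the product invariant; hence $\mathcal{E}_{-1}\left(\beta,T^{\prime}\right)=\mathcal{E}_{-1}\left(\beta,T^{\prime\left(j\right)}\right)$, completing the recurrence for jumps.

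Finally, the column-strictness of $\left\lfloor \alpha,T\right\rfloor$ guarantees, as noted in the excerpt, that the same-column jump $b^{\prime}=-1$, which would force $s_{i}\zeta_{\beta,T^{\prime}}=-\zeta_{\beta,T^{\prime}}$ and thus $a\left(\beta,T^{\prime}\right)=0$, never occurs; the only $s_{i}$-fixed singletons are the same-row ones with $s_{i}\zeta_{\beta,T^{\prime}}=\zeta_{\beta,T^{\prime}}$, which are automatically invariant. Therefore $s_{i}p=p$ for every $i$ and $p$ is symmetric. For nonzeroness I would single out the sink $\left(\alpha^{+},T_{S}\right)\in\mathcal{T}\left(\alpha,T\right)$: its coefficient is $\mathcal{E}_{-1}\left(\alpha^{+},T_{S}\right)/\mathcal{C}_{-1}\left(T_{S}\right)=1/\mathcal{C}_{-1}\left(T_{S}\right)$, since $\mathcal{E}_{-1}\left(\alpha^{+},T_{S}\right)=1$ (an empty product, as $\alpha^{+}$ is nonincreasing) and $\mathcal{C}_{-1}\left(T_{S}\right)$ is a finite nonzero number. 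As the $\zeta_{\beta,T^{\prime}}$ over distinct nodes are linearly independent (distinct spectral vectors), this one nonzero coefficient already forces $p\neq0$.
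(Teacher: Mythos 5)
Your proof is correct and follows exactly the route the paper intends: it verifies the edge recurrence $a\left(\beta,T^{\prime}\right)=\left(1+b\right)a\left(\gamma,T^{\prime\prime}\right)$ from the displayed relations for $\mathcal{E}_{-1}$ under steps and $\mathcal{C}_{-1}$ under jumps, uses column-strictness to rule out $b=-1$, and extracts nonzeroness from the sink coefficient. The paper only says ``from these relations it can be shown,'' and you have correctly supplied the one nontrivial missing detail, namely that $\mathcal{E}_{-1}\left(\beta,T^{\prime}\right)=\mathcal{E}_{-1}\left(\beta,T^{\prime\left(j\right)}\right)$ because $\beta_{i}=\beta_{i+1}$ makes the content swap merely permute the paired factors.
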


To proceed with the analysis we impose a normalization and then find a closed
formula for the squared-norm $\left\Vert \cdot\right\Vert ^{2}$. Replace
$\alpha$ by $\lambda=\alpha^{+}$ and use the sink $\left(  \lambda
,T_{S}\right)  $ as normalization by requiring that the coefficient of
$x^{\lambda}v_{T_{S}}$ is $1$. From the sink property and the
$\vartriangleright$-triangularity of the NSJP's it follows that $x^{\lambda
}v_{T_{S}}$ appears only in $\zeta_{\lambda,T_{S}}$ in the sum $p$, with
coefficient $1$. Thus define (for $\lambda\in\mathbb{N}_{0}^{N,+}$)%
\begin{equation}
J_{\lambda,T_{S}}:=\sum_{\left(  \beta,T^{\prime}\right)  \in\mathcal{T}%
\left(  \lambda,T_{S}\right)  }\frac{\mathcal{C}_{-1}\left(  T_{S}\right)
}{\mathcal{C}_{-1}\left(  T^{\prime}\right)  }\mathcal{E}_{-1}\left(
\beta,T^{\prime}\right)  \zeta_{\beta,T^{\prime}}. \label{Jsum}%
\end{equation}
By orthogonality $\left\Vert J_{\lambda,T_{S}}\right\Vert ^{2}=\sum
\limits_{\left(  \beta,T^{\prime}\right)  \in\mathcal{T}\left(  \lambda
,T_{S}\right)  }\left(  \frac{\mathcal{C}_{-1}\left(  T_{S}\right)
}{\mathcal{C}_{-1}\left(  T^{\prime}\right)  }\mathcal{E}_{-1}\left(
\beta,T^{\prime}\right)  \right)  ^{2}\left\Vert \zeta_{\beta,T^{\prime}%
}\right\Vert ^{2}$; fortunately there is a formula without summation. Suppose
$\left(  \beta,T^{\prime}\right)  \in\mathcal{T}\left(  \lambda,T_{S}\right)
$ and $J_{\lambda,T_{S}}=c\sum\limits_{w\in\mathcal{S}_{N}}w\zeta
_{\beta,T^{\prime}}$ then%
\[
\left\Vert J_{\lambda,T_{S}}\right\Vert ^{2}=c\sum\limits_{w\in\mathcal{S}%
_{N}}\left\langle J_{\lambda,T_{S}},w\zeta_{\beta,T^{\prime}}\right\rangle
_{\mathbb{T}}=N!c\left\langle J_{\lambda,T_{S}},\zeta_{\beta,T^{\prime}%
}\right\rangle _{\mathbb{T}};
\]
if we write $J_{\lambda,T_{S}}=\sum\limits_{\left(  \beta,T^{\prime}\right)
\in\mathcal{T}\left(  \lambda,T_{S}\right)  }a\left(  \beta,T^{\prime}\right)
\zeta_{\beta,T^{\prime}}$ then $\left\Vert J_{\lambda,T_{S}}\right\Vert
^{2}=\left(  N!c\right)  a\left(  \beta,T^{\prime}\right)  \left\Vert
\zeta_{\beta,T^{\prime}}\right\Vert ^{2}$ and $c$ can be determined by careful
choice of $\left(  \beta,T^{\prime}\right)  $. Consider the stabilizer group
$G_{\lambda,T_{S}}$ of $\zeta_{\lambda,T_{S}}$ ($w\in G_{\lambda,T_{S}}$
implies $w\zeta_{\lambda,T_{S}}=\zeta_{\lambda,T_{S}}$). The group is
generated by $\left\{  s_{i}:\lambda_{i}=\lambda_{i+1},\mathrm{rw}\left(
i,T_{S}\right)  =\mathrm{rw}\left(  i+1,T_{S}\right)  \right\}  $. It was
shown (\cite[Prop. 5.11]{DL2011}) that the coefficient of $\zeta
_{\lambda,T_{S}}$ in $\sum\limits_{w\in\mathcal{S}_{N}}w\zeta_{\lambda
^{-},T_{R}}$ is $\#G_{\lambda,T_{S}}$, hence that $\sum\limits_{w\in
\mathcal{S}_{N}}w\zeta_{\lambda^{-},T_{R}}=\#G_{\lambda,T_{S}}J_{\lambda
,T_{S}}$. We deduce%
\[
\left\Vert J_{\lambda,T_{S}}\right\Vert ^{2}=\left(  \frac{N!}{\#G_{\lambda
,T_{S}}}\right)  \frac{\mathcal{C}_{-1}\left(  T_{S}\right)  }{\mathcal{C}%
_{-1}\left(  T_{R}\right)  }\mathcal{E}_{-1}\left(  \lambda^{-},T_{R}\right)
\left\Vert \zeta_{\lambda^{-},T_{R}}\right\Vert ^{2}.
\]
Also%
\[
\left\Vert \zeta_{\lambda^{-},T_{R}}\right\Vert ^{2}=\left(  \mathcal{E}%
_{-1}\left(  \lambda^{-},T_{R}\right)  \mathcal{E}_{1}\left(  \lambda
^{-},T_{R}\right)  \right)  ^{-1}\left\Vert \zeta_{\lambda,T_{R}}\right\Vert
^{2}%
\]
and $\dfrac{\left\Vert \zeta_{\lambda,T_{R}}\right\Vert ^{2}}{\left\langle
T_{R},T_{R}\right\rangle _{0}}=\dfrac{\left\Vert \zeta_{\lambda,T_{S}%
}\right\Vert ^{2}}{\left\langle T_{S},T_{S}\right\rangle _{0}}$ (from Theorem
\ref{Pnorm1}). From (\ref{Tnorm}) it follows that $\dfrac{\left\langle
T_{R},T_{R}\right\rangle _{0}}{\left\langle T_{S},T_{S}\right\rangle _{0}%
}=\dfrac{\mathcal{C}_{1}\left(  T_{R}\right)  \mathcal{C}_{-1}\left(
T_{R}\right)  }{\mathcal{C}_{1}\left(  T_{S}\right)  \mathcal{C}_{-1}\left(
T_{S}\right)  }$, and $\dfrac{N!}{\#G_{\lambda,T_{S}}}=\#\mathcal{T}%
_{\lambda,T_{S}}$. To summarize:

\begin{theorem}
\label{JTdef}Suppose $\left(  \lambda,T\right)  \in\mathbb{N}_{0}^{N,+}%
\times\mathcal{Y}\left(  \tau\right)  $ and $\left\lfloor \lambda
,T\right\rfloor $ is column-strict. Define $T_{R}$ and $T_{S}$ by formulae
(\ref{Troot}) and (\ref{Tsink}) then%
\[
\left\Vert J_{\lambda,T_{S}}\right\Vert ^{2}=\left(  \#\mathcal{T}%
_{\lambda,T_{S}}\right)  \frac{\mathcal{C}_{1}\left(  T_{R}\right)
}{\mathcal{C}_{1}\left(  T_{S}\right)  \mathcal{E}_{1}\left(  \lambda
^{-},T_{R}\right)  }\left\Vert \zeta_{\lambda,T_{S}}\right\Vert ^{2}.
\]

\end{theorem}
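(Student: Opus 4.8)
The plan is to read the statement as the terminus of the chain of identities already assembled in the paragraph preceding it, so that the proof reduces to substituting those identities into one another and checking that the $\varepsilon=-1$ factors cancel in pairs. First I would record the four inputs in the exact form needed. (i) The reproducing computation
\[
\left\Vert J_{\lambda,T_{S}}\right\Vert^{2}=\frac{N!}{\#G_{\lambda,T_{S}}}\,\frac{\mathcal{C}_{-1}(T_{S})}{\mathcal{C}_{-1}(T_{R})}\,\mathcal{E}_{-1}(\lambda^{-},T_{R})\,\left\Vert \zeta_{\lambda^{-},T_{R}}\right\Vert^{2},
\]
which follows from $\sum_{w}w\zeta_{\lambda^{-},T_{R}}=\#G_{\lambda,T_{S}}J_{\lambda,T_{S}}$ (\cite[Prop. 5.11]{DL2011}), together with $\left\Vert J_{\lambda,T_{S}}\right\Vert^{2}=N!\,c\,\langle J_{\lambda,T_{S}},\zeta_{\lambda^{-},T_{R}}\rangle_{\mathbb{T}}$ and the normalization coefficients displayed in (\ref{Jsum}). (ii) The nonpartition correction $\left\Vert \zeta_{\lambda^{-},T_{R}}\right\Vert^{2}=(\mathcal{E}_{-1}(\lambda^{-},T_{R})\mathcal{E}_{1}(\lambda^{-},T_{R}))^{-1}\left\Vert \zeta_{\lambda,T_{R}}\right\Vert^{2}$. (iii) The ratio $\left\Vert \zeta_{\lambda,T_{R}}\right\Vert^{2}/\langle T_{R},T_{R}\rangle_{0}=\left\Vert \zeta_{\lambda,T_{S}}\right\Vert^{2}/\langle T_{S},T_{S}\rangle_{0}$; I would note that this holds because, reorganized over pairs of boxes rather than pairs of entry-labels, the product in Theorem \ref{Pnorm1} depends only on the box contents and the common filling $\lfloor\lambda,T_{R}\rfloor=\lfloor\lambda,T_{S}\rfloor$, hence is unchanged between $T_{R}$ and $T_{S}$. (iv) The factorization $\langle T,T\rangle_{0}=\mathcal{C}_{1}(T)\mathcal{C}_{-1}(T)$ read directly off (\ref{Tnorm}).

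Then I would simply chain (i)--(iv). Substituting (ii) into (i) cancels the lone $\mathcal{E}_{-1}(\lambda^{-},T_{R})$ against the $\mathcal{E}_{-1}(\lambda^{-},T_{R})^{-1}$ it introduces and leaves the factor $\mathcal{E}_{1}(\lambda^{-},T_{R})^{-1}$. Substituting (iii) replaces $\left\Vert \zeta_{\lambda,T_{R}}\right\Vert^{2}$ by $\tfrac{\langle T_{R},T_{R}\rangle_{0}}{\langle T_{S},T_{S}\rangle_{0}}\left\Vert \zeta_{\lambda,T_{S}}\right\Vert^{2}$, and replacing both inner products by (iv) exposes $\mathcal{C}_{-1}(T_{R})$ and $\mathcal{C}_{-1}(T_{S})$, which cancel the surviving $\mathcal{C}_{-1}(T_{S})/\mathcal{C}_{-1}(T_{R})$ from (i). What remains is $\tfrac{N!}{\#G_{\lambda,T_{S}}}\,\tfrac{\mathcal{C}_{1}(T_{R})}{\mathcal{C}_{1}(T_{S})\mathcal{E}_{1}(\lambda^{-},T_{R})}\left\Vert \zeta_{\lambda,T_{S}}\right\Vert^{2}$, and the final replacement $N!/\#G_{\lambda,T_{S}}=\#\mathcal{T}_{\lambda,T_{S}}$ (the orbit--stabilizer count already recorded above) yields exactly the asserted expression.

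The genuine content has all been front-loaded into the preceding discussion; the step I would regard as carrying the real weight is (i), namely the identification $\sum_{w}w\zeta_{\lambda^{-},T_{R}}=\#G_{\lambda,T_{S}}J_{\lambda,T_{S}}$ with its explicit coefficient, which rests on the root/sink analysis and on column-strictness excluding the degenerate value $b=-1$. Granting that input, the only hazard in the present theorem is bookkeeping: keeping the labels $\varepsilon=\pm1$ on $\mathcal{C}_{\varepsilon}$ and $\mathcal{E}_{\varepsilon}$ straight so that it is precisely the $\varepsilon=-1$ symbols that cancel, leaving the clean ratio $\mathcal{C}_{1}(T_{R})/(\mathcal{C}_{1}(T_{S})\mathcal{E}_{1}(\lambda^{-},T_{R}))$. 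I would verify this by tracking each $\mathcal{C}_{-1}$ and $\mathcal{E}_{-1}$ symbol through the substitutions and confirming that each occurs once in a numerator and once in a denominator, so that no $\varepsilon=-1$ factor survives.
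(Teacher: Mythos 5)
Your proposal is correct and follows essentially the same route as the paper: the theorem is stated as a summary of the immediately preceding chain of identities, and you reproduce exactly that chain --- the identification $\sum_{w}w\zeta_{\lambda^{-},T_{R}}=\#G_{\lambda,T_{S}}J_{\lambda,T_{S}}$ giving the first norm formula, the nonpartition correction, the $T_{R}$-versus-$T_{S}$ ratio from Theorem \ref{Pnorm1}, the factorization $\left\langle T,T\right\rangle _{0}=\mathcal{C}_{1}\left(  T\right)  \mathcal{C}_{-1}\left(  T\right)  $, and the orbit--stabilizer count --- with the same cancellation of the $\varepsilon=-1$ factors. Your added remark justifying item (iii) by re-indexing the product in Theorem \ref{Pnorm1} over pairs of boxes is a correct elaboration of what the paper leaves implicit.
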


Suppose $\left\lfloor \lambda,T_{S}\right\rfloor ,\left\lfloor \lambda
^{\prime},T_{S}\right\rfloor $ are unequal column-strict tableaux. By
definition $\mathcal{T}\left(  \lambda,T_{S}\right)  \cap\mathcal{T}\left(
\lambda^{\prime},T_{S}\right)  =\varnothing$ and from the mutual orthogonality
of the terms in the sums (\ref{Jsum}) it follows that $\left\langle
J_{\lambda,T_{S}},J_{\lambda^{\prime},T_{S}}\right\rangle _{\mathbb{T}}=0$.

Multiplication of $J_{\lambda,T_{S}}$ by $e_{N}^{m}$ produces the Jack
polynomial $J_{\lambda+m\boldsymbol{1},T_{S}}$ (see Definition \ref{eNmult});
here $m=-1,-2,\ldots$is valid and defines Jack Laurent polynomials. The
expression $e_{N}^{m}J_{\lambda,T_{S}}$ is made unique by the requirement
$\lambda_{N}=0$.

We see that there is a unique symmetric polynomial $p_{\lambda,T}$ of minimum
degree: the tableau $\left\lfloor \lambda,T\right\rfloor $ has the entry $i-1$
in each box in row $\#i$. The degree is $n\left(  \tau\right)  =\sum_{i\geq
1}\left(  i-1\right)  \tau_{i}$.

\begin{remark}
A weak reverse tableau of shape $\tau$ and weight $n$ has its entries
nondecreasing in each row and in each column and the sum of the entries is
$n$. Such a tableau can be transformed to a column-strict tableau by adding
$i-1$ to each entry in row $\#i$ for $1\leq i\leq\ell\left(  \tau\right)  .$
The number of the weak reverse tableaux is the coefficient of $z^{n}$ in
$H_{\tau}\left(  z\right)  :=\prod_{\left(  i,j\right)  \in\tau}\left(
1-z^{h\left(  i,j\right)  }\right)  ^{-1}$ (see \cite[p.379]{S1999}, $h\left(
i,j\right)  $ from Formula (\ref{hllt})). Thus the number of Jack polynomials
of degree $n$ is the coefficient of $z^{n}$ in $z^{n\left(  \tau\right)
}H_{\tau}\left(  z\right)  $. For example take $\tau=\left(  3,2\right)  $
then $z^{n\left(  \tau\right)  }H_{\tau}\left(  z\right)  =z^{2}\left\{
\left(  1-z\right)  ^{2}\left(  1-z^{2}\right)  \left(  1-z^{3}\right)
\left(  1-z^{4}\right)  \right\}  ^{-1}$. The analogous number when
$\lambda_{N}=0$ is the coefficient of $z^{n}$ in $\left(  1-z^{N}\right)
z^{n\left(  \tau\right)  }H_{\tau}\left(  z\right)  $.
\end{remark}

If $\left(  \beta,T^{\prime}\right)  \in\mathcal{T}\left(  \lambda
,T_{S}\right)  $ then the spectral vector $\xi_{\beta,T^{\prime}}$ is a
permutation of $\xi_{\lambda,T_{S}}$ thus $\sum_{i=1}^{N}\mathcal{U}_{i}%
^{m}J_{\lambda,T_{S}}=\sum_{j=1}^{N}\xi_{j}\left(  \lambda,T_{S}\right)
^{m}J_{\lambda,T_{S}}$ for $m=1,2,3,\ldots$. and%
\begin{equation}
\sum_{i=1}^{N}\left(  \mathcal{U}_{i}-1-\kappa\gamma\right)  ^{2}%
J_{\lambda,T_{S}}=\sum_{i=1}^{N}\left(  \lambda_{i}+\kappa\left(  c\left(
i,T_{S}\right)  -\gamma\right)  \right)  ^{2}J_{\lambda,T_{S}}.
\label{eigvalH}%
\end{equation}
Set $S_{2}:=\sum_{i=1}^{\ell\left(  \tau\right)  }c\left(  i,T_{S}\right)
^{2}=\frac{1}{6}\sum_{i=1}^{\ell\left(  \tau\right)  }\tau_{i}\left\{  \left(
\tau_{i}-1\right)  \left(  \tau_{i}-2\right)  -6\left(  \tau_{i}-i\right)
\left(  i-1\right)  \right\}  $. The eigenvalue can be written as $\sum
_{i=1}^{N}\lambda_{i}^{2}+2\kappa\sum_{i=1}^{N}\lambda_{i}\left(  c\left(
i,T_{S}\right)  -\gamma\right)  +\kappa^{2}\left(  S_{2}-N\gamma^{2}\right)
.$ In the trivial case $\tau=\left(  N\right)  $ the last term becomes
$\frac{1}{12}\kappa^{2}N\left(  N^{2}-1\right)  $. The effect of multiplying
by $e_{N}^{m}$ on the eigenvalue is
\begin{gather*}
\sum_{i=1}^{N}\left(  \mathcal{U}_{i}-1-\kappa\gamma\right)  ^{2}e_{N}%
^{m}J_{\lambda,T_{S}}=\sum_{i=1}^{N}\left(  \lambda_{i}+m+\kappa\left(
c\left(  i,T_{S}\right)  -\gamma\right)  \right)  ^{2}e_{N}^{m}J_{\lambda
,T_{S}}\\
=\left\{  \sum_{i=1}^{N}\left(  \lambda_{i}+\kappa\left(  c\left(
i,T_{S}\right)  -\gamma\right)  \right)  ^{2}+2m\sum_{i=1}^{N}\lambda
_{i}+Nm^{2}\right\}  e_{N}^{m}J_{\lambda,T_{S}}.
\end{gather*}
This is minimized over $m$ when $m$ is the nearest integer to $-\sum_{i=1}%
^{N}\lambda_{i}/N$.

\section{\label{symmwav}Symmetric wavefunctions}

In the notation of Section \ref{symmvp} there is a set of mutually orthogonal
wavefunctions $L\left(  x\right)  J_{\lambda,T_{S}}\left(  x\right)  $ such
that
\[
\mathcal{H}L\left(  x\right)  J_{\lambda,T_{S}}\left(  x\right)  =\sum
_{i=1}^{N}\left(  \lambda_{i}+\kappa\left(  c\left(  i,T_{S}\right)
-\gamma\right)  \right)  ^{2}L\left(  x\right)  J_{\lambda,T_{S}}\left(
x\right)
\]
by Theorem \ref{LHL}. Thus%
\[
\frac{1}{\left\Vert J_{\lambda,T_{S}}\right\Vert ^{2}}\left(  L\left(
x\right)  J_{\lambda,T_{S}}\left(  x\right)  \right)  ^{\ast}L\left(
x\right)  J_{\lambda,T_{S}}\left(  x\right)
\]
is a probability density function on $\mathbb{T}^{N}$. Since multiplication of
$L\left(  x\right)  J_{\lambda,T_{S}}\left(  x\right)  $ by powers of $e_{N}$
does not change the density function, we can assume $\lambda_{N}=0$. In
contrast to the scalar case the matrix $L\left(  x\right)  $ has singularities
of order $\left\vert x_{i}-x_{j}\right\vert ^{\pm\kappa}$ in neighborhoods of
points $x$ with $x_{i}=x_{j}$ and all other $x_{k}$ being pairwise distinct.
Nevertheless we can show that the symmetric wavefunctions are bounded in such
sets when $0<\kappa<\frac{1}{h_{\tau}}$. By the invariance it suffices to
prove this near $\left\{  x:x_{N-1}=x_{N}\right\}  $ in the fundamental
chamber. More precisely let $\delta>0$ and define
\[
\Omega_{\delta}:=\left\{  x\in\mathcal{C}_{0}:1\leq i<j\leq N-1\Longrightarrow
\left\vert x_{i}-x_{j}\right\vert \geq\delta,\left\vert x_{N}-x_{1}\right\vert
\geq\delta\right\}  .
\]

\begin{theorem}
Suppose $0<\kappa<\frac{1}{h_{\tau}}$ and $\left(  \lambda,T_{S}\right)  $ is
as in Theorem \ref{JTdef} then $L\left(  x\right)  J_{\lambda,T_{S}}\left(
x\right)  $ is uniformly bounded in $\Omega_{\delta}$.
\end{theorem}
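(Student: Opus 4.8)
The plan is to localize to a neighbourhood of the single wall $\{x_{N-1}=x_N\}$ and to play the symmetry of $J_{\lambda,T_S}$ against the singular structure of $L$. First I would note that the definition of $\Omega_\delta$ leaves $|x_{N-1}-x_N|$ as the only pairwise distance that is not bounded below: the distances $|x_i-x_j|$ with $i<j\le N-1$ and $|x_N-x_1|$ are $\ge\delta$ by hypothesis, and since $x\in\mathcal{C}_0$ means $\theta_1<\cdots<\theta_N<\theta_1+2\pi$, the remaining chords $|x_N-x_i|$ for $2\le i\le N-2$ are bounded below by a positive multiple of $\delta$, because $\theta_N-\theta_i\ge\theta_{N-1}-\theta_i$ stays away from $0$ while $|x_N-x_1|\ge\delta$ keeps $\theta_N-\theta_i$ away from $2\pi$. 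Hence on $\Omega_\delta\cap\mathbb{T}_{reg}^N$ the only degeneration is $x_{N-1}\to x_N$, away from which $L(x)$ is smooth and invertible with a uniform bound; it therefore suffices to control $L(x)J_{\lambda,T_S}(x)$ as $x$ approaches that wall, uniformly in the remaining coordinates, which range over a compact $\delta$-separated set.

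Next I would read off the local form of $L$ from the system (\ref{Lsys}). Writing $u:=x_{N-1}-x_N$ and holding the other coordinates fixed, the only singular contribution as $u\to0$ comes from the terms $\frac{1}{x_{N-1}-x_N}\tau(N-1,N)$ inside $\partial_{N-1}L$ and $\partial_N L$, which combine to $\partial_u L=\kappa L\,\frac{\tau(N-1,N)}{u}+(\text{regular})$. The Frobenius/Levelt theory for this regular-singular matrix system gives $L(x)=H(x)\,u^{\kappa\tau(N-1,N)}$, where $H$ is holomorphic and invertible in a full neighbourhood of the wall and the matrix power multiplies on the right, i.e. acts on the $V_\tau$-values. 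Since $2\kappa<2/h_\tau\le1$ the two exponents $\pm\kappa$ differ by a non-integer, so there are no resonances and no logarithmic terms; matching this local solution to the globally continued $L$ from $x_0$ only contributes a constant invertible factor absorbed into $H$, and by compactness of the transverse range both $H$ and $H^{-1}$ are uniformly bounded. Because $\tau(N-1,N)$ is an involution with $\pm1$-eigenspace decomposition $V_\tau=V_+\oplus V_-$ and $\kappa$ is real, $u^{\kappa\tau(N-1,N)}$ is multiplication by $u^{+\kappa}$ on $V_+$ and by $u^{-\kappa}$ on $V_-$, with $|u^{\pm\kappa}|=|u|^{\pm\kappa}$.

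Then I would decompose $J_{\lambda,T_S}(x)=J_+(x)+J_-(x)$ with $J_\pm(x)\in V_\pm$, so that
\[
L(x)J_{\lambda,T_S}(x)=H(x)\bigl(u^{\kappa}J_+(x)+u^{-\kappa}J_-(x)\bigr).
\]
The $V_+$-part is harmless since $|u|^{\kappa}\to0$. For the $V_-$-part I would use symmetry: $J_{\lambda,T_S}$ satisfies the polynomial identity $\tau(s_{N-1})J_{\lambda,T_S}(xs_{N-1})=J_{\lambda,T_S}(x)$. Applying the orthogonal projection $\pi_-$ onto $V_-$, which satisfies $\pi_-\tau(s_{N-1})=-\pi_-$, yields $J_-(x)=-J_-(xs_{N-1})$ as a polynomial identity; thus $J_-$ is antisymmetric under the exchange of $x_{N-1}$ and $x_N$, hence divisible by $u=x_{N-1}-x_N$, so $J_-(x)=O(|u|)$ uniformly on the compact transverse set. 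Combining with $0<\kappa<1$ gives $|u^{-\kappa}J_-(x)|=O(|u|^{\,1-\kappa})\to0$, so the bracket is bounded; multiplying by the uniformly bounded $H$ establishes the claimed uniform bound on $\Omega_\delta$.

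The step I expect to be the main obstacle is the second one: rigorously justifying the regular-singular normal form $L=H\,u^{\kappa\tau(N-1,N)}$ with holomorphic invertible $H$, and, above all, making the constants uniform as the base point runs along the wall while the other coordinates stay $\delta$-separated. This amounts to verifying analytic dependence of the Frobenius solution on the transverse parameters and confirming the absence of resonances, which is exactly where the hypothesis $\kappa<1/h_\tau$ (hence $2\kappa<1$) is used; once that is in hand, the cancellation of the $u^{-\kappa}$ singularity by the antisymmetry of $J_-$ and the bound $\kappa<1$ finish the argument.
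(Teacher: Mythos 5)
Your proposal follows essentially the same route as the paper: localize to the wall $x_{N-1}=x_N$ (the only degeneration permitted in $\Omega_{\delta}$), exhibit $L$ near that wall as a bounded invertible factor times $u^{\pm\kappa}$ on the $\mp1$-eigenspaces of $\tau(N-1,N)$, and then use the symmetry of $J_{\lambda,T_S}$ to show the $(-1)$-eigenspace component is antisymmetric in $x_{N-1},x_N$, hence divisible by $u$, so that $|u|^{-\kappa}\cdot O(|u|)=O(|u|^{1-\kappa})$ is bounded. The paper obtains the local structure not from general Frobenius theory but from the explicit expansion $L=(x_1\cdots x_N)^{-\gamma\kappa}\rho(z^{-\kappa},z^{\kappa})\sum_n\alpha_n(x')z^n$ proved in \cite[Sec.~5]{D2017}, with the parity property $\sigma\alpha_n\sigma=(-1)^n\alpha_n$; this settles exactly the uniformity and normal-form issues you flag as the main obstacle. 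One technical caveat on your step two: since (\ref{Lsys}) is a \emph{right}-multiplication system, the Frobenius form puts the holomorphic factor on the other side, $L=C\,u^{\kappa\sigma}P(u)$ with $P$ holomorphic; rewriting this as $H(u)\,u^{\kappa\sigma}$ makes $H=C\,u^{\kappa\sigma}P(u)u^{-\kappa\sigma}$ only continuous and bounded (it acquires $u^{n\pm2\kappa}$ terms), not holomorphic — harmless here because $2\kappa<1$, but if you keep the analytic factor on the correct side you must additionally use that $P(0)$ commutes with $\sigma$ (the block-diagonality of $\alpha_0$ in the paper) so that the $(-1)$-projection of $P(u)J(x)$ still vanishes at $u=0$.
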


The proof depends on a power series formulation for $L$ proven in \cite[Sec.
5]{D2017}. Arrange $\mathcal{Y}\left(  \tau\right)  $ linearly listing the
tableaux $T$ with $c\left(  N-1,T\right)  =-1$ first (that is, $\mathrm{cm}%
\left(  N-1,T\right)  =1,\mathrm{rw}\left(  N-1,T\right)  =2$). This results
in the matrix representation of $\tau\left(  N-1,N\right)  $ being
\[
\left[
\begin{array}
[c]{cc}%
-I_{m_{\tau}} & O\\
O & I_{n_{\tau}-m_{\tau}}%
\end{array}
\right]  ,
\]
where $n_{\tau}:=\dim V_{\tau}=\#\mathcal{Y}\left(  \tau\right)  $ and
$m_{\tau}$ is given by $\mathrm{tr}\left(  \tau\left(  N-1,N\right)  \right)
=n_{\tau}-2m_{\tau}$. From the sum $\sum_{i<j}\tau\left(  i,j\right)
=S_{1}\left(  \tau\right)  I$ it follows that $\binom{N}{2}\mathrm{tr}\left(
\tau\left(  N-1,N\right)  \right)  =S_{1}\left(  \tau\right)  n_{\tau}$ and
$m_{\tau}=n_{\tau}\left(  \frac{1}{2}-\frac{S_{1}\left(  \tau\right)
}{N\left(  N-1\right)  }\right)  $. (The traces of the transpositions are
equal because they are conjugate to each other.) The property of a matrix
commuting or anti-commuting with $\sigma:=\tau\left(  N-1,N\right)  $ is used
in the argument; to express this neatly we introduce the $\sigma$-block
decomposition $\left(  m_{\tau}+\left(  n_{\tau}-m_{\tau}\right)  \right)
\times\left(  m_{\tau}+\left(  n_{\tau}-m_{\tau}\right)  \right)  $ of a
matrix%
\[
\alpha=\left[
\begin{array}
[c]{cc}%
\alpha_{11} & \alpha_{12}\\
\alpha_{21} & \alpha_{22}%
\end{array}
\right]  .
\]
Then $\sigma\alpha\sigma=\alpha$ if and only if $\alpha_{12}=O=\alpha_{21}$
and $\sigma\alpha\sigma=-\alpha$ if and only if $\alpha_{11}=O=\alpha_{22}$.
For $z_{1},z_{2}\in\mathbb{C}$ let
\[
\rho\left(  z_{1},z_{2}\right)  :=\left[
\begin{array}
[c]{cc}%
z_{1}I_{m_{\tau}} & O\\
O & z_{2}I_{n_{\tau}-m_{\tau}}%
\end{array}
\right]  .
\]
We showed in \cite[Sec. 5]{D2017} that there exist matrix coefficients
$\alpha_{n}\left(  x^{\prime}\right)  $ with $x^{\prime}:=\left(  x_{1}%
,\ldots,x_{N-2},\frac{x_{N-1}+x_{N}}{2}\right)  $, analytic on the closure of
$\Omega_{\delta}\cup\Omega_{\delta}\left(  N-1,N\right)  $, such that (with
$z:=\frac{x_{N}-x_{N-1}}{2}$)%
\[
L\left(  x\right)  =\left(  x_{1}x_{2}\cdots x_{N}\right)  ^{-\gamma\kappa
}\rho\left(  z^{-\kappa},z^{\kappa}\right)  \sum_{n=0}^{\infty}\alpha
_{n}\left(  x^{\prime}\right)  z^{n}%
\]
and $\sigma\alpha_{n}\left(  x^{\prime}\right)  \sigma=\left(  -1\right)
^{n}\alpha_{n}\left(  x^{\prime}\right)  .$ In particular the $\sigma$-block
decomposition of $\alpha_{0}\left(  x^{\prime}\right)  $ is $\left[
\begin{array}
[c]{cc}%
\alpha_{0,11}\left(  x^{\prime}\right)  & O\\
O & \alpha_{0,22}\left(  x^{\prime}\right)
\end{array}
\right]  $. The series converges absolutely for $\frac{1}{2}\left\vert
x_{N}-x_{N-1}\right\vert <\min\limits_{1\leq j\leq N-2}\left\vert x_{j}%
-\frac{x_{N-1}+x_{N}}{2}\right\vert $.

Suppose $J_{\lambda,T_{S}}$ is nonzero as in (\ref{Jsum}) then express
$J_{\lambda,T_{S}}\left(  x\right)  \allowbreak=\sum\limits_{T\in
\mathcal{Y}\left(  \tau\right)  }\frac{1}{\left\langle T,T\right\rangle
_{0}^{1/2}}p_{T}\left(  x\right)  \otimes T$ where each $p_{T}\left(
x\right)  $ is a scalar polynomial. If $c\left(  N-1,T\right)  =-1$ then
$\tau\left(  N-1,N\right)  T=-T$ and the relation $\left(  N-1,N\right)
J_{\lambda,T_{S}}=J_{\lambda,T_{S}}$ implies $p_{T}\left(  x\left(
N-1,N\right)  \right)  =-p\left(  x\right)  $ (see Proposition. \ref{sympol}).
In the order of the basis chosen above the first $m_{\tau}$ coefficients of
$J_{\lambda,T_{S}}$ are divisible by $x_{N}-x_{N-1}$. Write $J_{\lambda,T_{S}%
}$ as a column vector $\left[  p_{1}^{tr},p_{2}^{tr}\right]  ^{tr}$ with
$p_{1}$ consisting of the first $m_{\tau}$ coordinates. Suppose $\kappa>0$
then the dominant part of $\left(  L\left(  x\right)  J_{\lambda,T_{S}}\left(
x\right)  \right)  ^{\ast}L\left(  x\right)  J_{\lambda,T_{S}}\left(
x\right)  $ is%
\begin{align*}
&  \left[  p_{1}^{\ast},p_{2}^{\ast}\right]  \alpha_{0}\left(  x^{\prime
}\right)  ^{\ast}\rho\left(  \left\vert z\right\vert ^{-2\kappa},\left\vert
z\right\vert ^{2\kappa}\right)  \alpha_{0}\left(  x^{\prime}\right)  \left[
p_{1}^{tr},p_{2}^{tr}\right]  ^{tr}\\
&  =\left\vert z\right\vert ^{-2\kappa}p_{1}\left(  x\right)  ^{\ast}%
\alpha_{0,11}\left(  x^{\prime}\right)  ^{\ast}\alpha_{0,11}\left(  x^{\prime
}\right)  p_{1}\left(  x\right)  +\left\vert z\right\vert ^{2\kappa}%
p_{2}\left(  x\right)  ^{\ast}\alpha_{0,22}\left(  x^{\prime}\right)  ^{\ast
}\alpha_{0,22}\left(  x^{\prime}\right)  p_{2}\left(  x\right)  ;
\end{align*}
(with $z=\frac{x_{N}-x_{N-1}}{2}$) the omitted terms are of order $\left\vert
x_{N}-x_{N-1}\right\vert ^{1-2\left\vert \kappa\right\vert }$. Each component
of $p_{1}\left(  x\right)  $ is divisible by $x_{N}-x_{N-1}$ and thus the
first part of the expression is of order $\left\vert x_{N}-x_{N-1}\right\vert
^{2-2\kappa}$ and the second part is of order $\left\vert x_{N}-x_{N-1}%
\right\vert ^{2\kappa}$. Hence $\left(  L\left(  x\right)  J_{\lambda,T_{S}%
}\left(  x\right)  \right)  ^{\ast}L\left(  x\right)  J_{\lambda,T_{S}}\left(
x\right)  $ is bounded on $\Omega_{\delta}$. It may be suspected that the
bound holds for all of $\mathbb{T}^{N}$ but the behaviour of $L\left(
x\right)  $ near a point with multiple repeated entries (say $x_{N-2}%
=x_{N-1}=x_{N}$) is complicated and the power series method used here does not apply.

\subsection{Minimal degree symmetric polynomials}

To produce the minimal degree $J_{\lambda,T_{S}}$, or equivalently, the
minimal degree column-strict tableau of shape $\tau,$ the entries in row $\#i$
all equal $i-1$. The corresponding $T_{S}$ has the numbers $N,N-1,\ldots,2,1$
entered row-by-row, and $T_{R}=T_{S}$. With $l=\ell\left(  \tau\right)  $ and
using superscripts to denote multiplicity $\lambda=\left(  \left(  l-1\right)
^{\tau_{l}},\left(  l-2\right)  ^{\tau_{l-1}},\ldots,1^{\tau_{2}},0^{\tau_{1}%
}\right)  $. There is an explicit formula for $J_{\lambda,T_{S}}$. It is
derived from Proposition \ref{sympol} and involves the Specht polynomials. For
$1\leq n_{1}\leq n_{2}\leq N$ define the alternating polynomial $a\left(
x;n_{1},n_{2}\right)  =\prod\limits_{n_{1}\leq i<j\leq n_{2}}\left(
x_{i}-x_{j}\right)  $ (the empty product $a\left(  x;n_{1},n_{1}\right)  =1$).
Denote the transpose of the partition $\tau$ by $\tau^{\prime}$ then $\tau
_{1}^{\prime}=\ell\left(  \tau\right)  $. Form $p_{T_{0}}$ as the product of
the alternating polynomials for each column of $T_{0}$: that is, set
$k_{j}=N-\sum_{i=1}^{j}\tau_{j}^{\prime},0\leq j\leq\tau_{1}$ and
\[
p_{T_{0}}\left(  x\right)  :=\prod\limits_{j=1}^{\tau_{1}}a\left(
x;k_{j}+1,k_{j-1}\right)  .
\]
Then the other polynomials $p_{T}$ are produced by the formulae in the
Proposition and the minimal $J_{\lambda,T_{S}}=\left\langle T_{S}%
,T_{S}\right\rangle _{0}\sum\limits_{T^{\prime}\in\mathcal{Y}\left(
\tau\right)  }\frac{1}{\left\langle T^{\prime},T^{\prime}\right\rangle _{0}%
}p_{T^{\prime}}\left(  x\right)  \otimes T^{\prime}$ (see \cite[Sec.
5.4]{DL2011}). For example let $\tau=\left(  2,2\right)  $ then%
\begin{equation}
T_{0}=%
\begin{array}
[c]{cc}%
4 & 2\\
3 & 1
\end{array}
,T_{1}=%
\begin{array}
[c]{cc}%
4 & 3\\
2 & 1
\end{array}
,\left\lfloor \lambda,T_{1}\right\rfloor =%
\begin{array}
[c]{cc}%
0 & 0\\
1 & 1
\end{array}
, \label{tb(,2,2)}%
\end{equation}%
\[
p_{T_{0}}=\left(  x_{3}-x_{4}\right)  \left(  x_{1}-x_{2}\right)  ,p_{T_{1}%
}=s_{2}p_{T_{1}}-\frac{1}{2}p_{T_{0}}=x_{1}x_{2}+x_{3}x_{4}-\frac{1}{2}\left(
x_{1}+x_{2}\right)  \left(  x_{3}+x_{4}\right)  ,
\]
because $b=\left(  c\left(  2,T_{0}\right)  -c\left(  3,T_{0}\right)  \right)
^{-1}=\frac{1}{2}$. Observe that $s_{1}p_{T_{1}}=p_{T_{1}}=s_{3}p_{T_{1}}$.
Also $\left\langle T_{0},T_{0}\right\rangle _{0}=1$ and $\left\langle
T_{1},T_{1}\right\rangle _{0}=\frac{3}{4}$ (see (\ref{Tnorm}), $\lambda
=\left(  1,1,0,0\right)  $ and $J_{\lambda,T_{1}}=\frac{3}{4}p_{T_{0}}\otimes
T_{0}+p_{T_{1}}\otimes T_{1}$.

There is a formula for the minimal $\left(  \lambda,T_{S}\right)  $ proven in
\cite[Thm. 8]{D2010}:
\[
\left\Vert J_{\lambda,T_{S}}\right\Vert ^{2}=N!\left(  \prod\limits_{i=1}%
^{\ell\left(  \tau\right)  }\tau_{i}!\right)  ^{-1}\left\langle T_{S}%
,T_{S}\right\rangle _{0}\prod\limits_{\left(  i,j\right)  \in\tau}%
\frac{\left(  1-\kappa h\left(  i,j\right)  \right)  _{\mathrm{leg}\left(
i,j\right)  }}{\left(  1+\kappa\left(  j-i\right)  \right)  _{i-1}},
\]
where $\mathrm{leg}\left(  i,j\right)  :=\#\left\{  l:l>i,\tau_{l}\geq
j\right\}  $ (equivalently $\tau_{j}^{\prime}-i$), and $\left(  m\right)
_{n}$ is the Pochhammer symbol $\prod_{j=1}^{n}\left(  m+j-1\right)  $.

The energy eigenvalue $\sum_{i=1}^{N}\lambda_{i}^{2}+2\kappa\sum_{i=1}%
^{N}\lambda_{i}\left(  c\left(  i,T_{S}\right)  -\gamma\right)  +\kappa
^{2}\left(  S_{2}-N\gamma^{2}\right)  $ from (\ref{eigvalH}) specializes to%
\[
\sum_{i=1}^{\ell\left(  \tau\right)  }\left(  i-1\right)  ^{2}\tau_{i}%
+\kappa\sum_{i=1}^{\ell\left(  \tau\right)  }\tau_{i}\left(  i-1\right)
\left(  \tau_{i}+1-2i-2\gamma\right)  +\kappa^{2}\left(  S_{2}-N\gamma
^{2}\right)  ,
\]
for the minimal degree $J_{\lambda,T_{S}}$.

\subsection{Example}

Let $N=4,\tau=\left(  2,2\right)  ,$ $\mathcal{Y}\left(  \tau\right)
=\left\{  T_{0},T_{1}\right\}  $ see (\ref{tb(,2,2)}), $h_{\tau}=3$. The
system (\ref{Lsys}) can be reduced to a hypergeometric equation in one
variable by the substitution $\zeta\left(  x\right)  =\dfrac{\left(
x_{1}-x_{2}\right)  \left(  x_{3}-x_{4}\right)  }{\left(  x_{1}-x_{3}\right)
\left(  x_{2}-x_{4}\right)  }$. The bounds $0<\zeta\left(  x\right)  <1$ hold
in the fundamental domain $\mathcal{C}_{0}$. A fundamental solution $L_{F}$ is
given in terms of the functions%
\begin{align*}
g_{1}\left(  \kappa;\zeta\right)   &  :=~_{2}F_{1}\left(
\genfrac{}{}{0pt}{}{-\kappa,\kappa}{2\kappa}%
;\zeta\right)  ,\\
g_{2}\left(  \kappa;\zeta\right)   &  :=\frac{\kappa\zeta}{1+2\kappa}%
~_{2}F_{1}\left(
\genfrac{}{}{0pt}{}{1+\kappa,1-\kappa}{2+2\kappa}%
;\zeta\right)  ;
\end{align*}

\[
L_{F}\left(  \zeta\right)  :=%
\begin{bmatrix}
\zeta^{-\kappa}\left(  1-\zeta\right)  ^{\kappa} & 0\\
0 & \zeta^{\kappa}\left(  1-\zeta\right)  ^{-\kappa}%
\end{bmatrix}%
\begin{bmatrix}
g_{1}\left(  -\kappa;\zeta\right)  & \frac{\sqrt{3}}{2}g_{2}\left(
-\kappa;\zeta\right) \\
-\frac{\sqrt{3}}{2}g_{2}\left(  \kappa;\zeta\right)  & g_{1}\left(
\kappa;\zeta\right)
\end{bmatrix}
.
\]
Then the solution $L\left(  x\right)  $ which satisfies (\ref{goodL}) is up to
a positive multiplicative constant%
\[
L\left(  x\right)  =\left[
\begin{array}
[c]{cc}%
\gamma\left(  \kappa\right)  ^{1/2} & 0\\
0 & \gamma\left(  -\kappa\right)  ^{1/2}%
\end{array}
\right]  L_{F}\left(  \zeta\left(  x\right)  \right)  ,
\]
where $\gamma\left(  \kappa\right)  :=\frac{\Gamma\left(  1+2\kappa\right)
^{2}}{\Gamma\left(  1+\kappa\right)  \Gamma\left(  1+3\kappa\right)  }$;
observe that $\gamma\left(  \kappa\right)  >0$ for $\kappa>-\frac{1}{3}$. As
yet the problem of determining the normalizing constant is still open. The
purpose of the example is to demonstrate the qualitative difference of
$L\left(  x\right)  $ from the scalar case, so the underlying computations are
not presented here. The method uses the known transformation properties of the
hypergeometric series for the change-of-variable $\zeta\rightarrow1-\zeta$ ,
since $\zeta\left(  xw_{0}\right)  =1-\zeta\left(  x\right)  $ where
$w_{0}=\left(  1,2,3,4\right)  $, a $4$-cycle.

\end{document}